\newcommand{\YES}{\textsc{Yes}}
\DeclareSymbolFont{extraup}{U}{zavm}{m}{n}
\DeclareMathSymbol{\varheart}{\mathalpha}{extraup}{86}
\DeclareMathSymbol{\vardiamond}{\mathalpha}{extraup}{87}
\newcommand{\Test}[2]{
\def\temp{#2}\ifx\temp\empty
  \operatorname{Test}_{#1}
\else
  \operatorname{Test}_{#1}^{#2}
\fi
}
\newcommand{\C}{\mathbb{C}}
\newcommand{\K}{\mathbf{K}}
\newcommand{\X}{\mathbf{X}}
\newcommand{\Y}{\mathbf{Y}}
\newcommand{\N}{\mathbb{N}}
\newcommand{\R}{\mathbb{R}}
\newcommand{\HH}{\mathbb{H}}
\newcommand{\NAE}{\mathbf{NAE}}
\renewcommand{\vec}[1]{\mathbf{#1}}
\newcommand{\ba}{\vec{a}}
\newcommand{\bc}{\vec{c}}
\newcommand{\bs}{\vec{s}}
\newcommand{\bx}{\vec{x}}
\newcommand{\bv}{\vec{v}}
\newcommand{\by}{\vec{y}}
\newcommand{\bw}{\vec{w}}
\newcommand{\be}{\vec{e}}
\newcommand{\bz}{\vec{z}}
\DeclareMathAlphabet{\mathbx}{U}{BOONDOX-ds}{m}{n}
\SetMathAlphabet{\mathbx}{bold}{U}{BOONDOX-ds}{b}{n}
\DeclareMathAlphabet{\mathbbx} {U}{BOONDOX-ds}{b}{n}
\DeclareMathOperator{\ptn}{ptn}
\DeclareMathOperator{\tr}{tr}
\DeclareMathOperator{\Pol}{Pol}
\DeclareMathOperator{\PCSP}{PCSP}
\DeclareMathOperator{\CSP}{CSP}
\DeclareMathOperator{\ar}{ar}
\DeclarePairedDelimiterX{\norm}[1]{\lVert}{\rVert}{#1}
\newcommand{\Sminion}{\ensuremath{{\mathscr{S}}}}
\newcommand{\bzero}{\mathbf{0}}
\theoremstyle{plain}
\newtheorem{thm}{Theorem}
\newtheorem*{thm*}{Theorem}
\newtheorem{lem}[thm]{Lemma}
\newtheorem*{lem*}{Lemma}
\newtheorem{prop}[thm]{Proposition}
\newtheorem*{prop*}{Proposition}
\newtheorem{cor}[thm]{Corollary}
\newtheorem{conj}[thm]{Conjecture}
\newtheorem{fact}[thm]{Fact}
\theoremstyle{definition}
\newtheorem{defn}[thm]{Definition}
\newtheorem{rem}[thm]{Remark}
\newtheorem{example}[thm]{Example}
 \providecommand\BibTeX{{%
   \normalfont B\kern-0.5em{\scshape i\kern-0.25em b}\kern-0.8em\TeX}}}
\NewDocumentCommand \core { m }{\operatorCore(#1)}
\DeclareMathOperator{\operatorCore}{core}
\NewDocumentCommand \relstr { m } {\mathbf{#1}}
\NewDocumentCommand \uprelstr { > {\SplitArgument {1} {,}} m } {\inneruprel#1}
\NewDocumentCommand \inneruprel { m m } {{#2}^{\mathbf \rightarrow#1}}
\NewDocumentCommand \tuple { m } {\mathbf{#1}}
\NewDocumentCommand \GameAB { > {\SplitArgument {1} {,}} m } {\internalGameAB#1}
\NewDocumentCommand \internalGameAB { m m } {\operatorGameAB^{\mathbf{\rightarrow}#1}(\relstr{#2})}
\DeclareMathOperator{\operatorGameAB}{CSP}
\NewDocumentCommand \MKCSP {m} {\operatorCSP(#1)}
\DeclareMathOperator{\operatorCSP}{CSP}
\NewDocumentCommand \GameBA { > {\SplitArgument {1} {,}} m } {\internalGameBA#1}
\NewDocumentCommand \internalGameBA { m m } {\operatorGameBA^{\mathbf{\leftarrow} #1}(\relstr{#2})}
\DeclareMathOperator{\operatorGameBA}{CSP}
\NewDocumentCommand \uprelstrbob { > {\SplitArgument {1} {,}} m } {\inneruprelbob#1}
\NewDocumentCommand \inneruprelbob { m m } {{#2}^{\mathbf \leftarrow #1}}
\DeclareMathOperator{\PVM}{PVM}
\@nx\else[{#1}]\fi}
\@nx\else[{#1}]\fi\else\csname #2\@xa\endcsname\fi}
\begin{document}

\title{Classical simulation of quantum CSP strategies%
\thanks{Partially funded by the National Science Centre, Poland under the Weave program grant no. 2021/03/Y/ST6/00171 and by UKRI EP/X024431/1. For the purpose of Open Access, the authors have applied a CC BY public copyright licence to any Author Accepted Manuscript (AAM) version arising from this submission.}%
}

\renewcommand*{\Affilfont}{\small\normalfont}
\renewcommand*{\Authands}{, }
\setlength{\affilsep}{2em}

\author[a,b]{Demian Banakh}
\author[c]{Lorenzo Ciardo}
\author[a]{Marcin Kozik}
\author[a,b,d]{Jan Tułowiecki}

\affil[a]{Faculty of Mathematics and Computer Science, Jagiellonian University, Poland
}
\affil[b]{Doctoral School of Exact and Natural Sciences, Jagiellonian University, Poland
}
\affil[c]{Department of Computer Science, University of Oxford, UK
}
\affil[d]{BEIT, Poland}


\date{\vspace{-5ex}}

\maketitle

\begin{abstract}
\noindent We prove that any perfect quantum strategy for the two-prover game encoding a constraint satisfaction problem (CSP) can be simulated via a perfect classical strategy with an extra classical communication channel, whose size depends only on $(i)$ the size of the shared quantum system used in the quantum strategy, and $(ii)$ structural parameters of the CSP template.
The result is obtained via a combinatorial characterisation of perfect classical strategies with extra communication channels and a geometric rounding procedure for the projection-valued measurements involved in quantum strategies. 

A key intermediate step of our proof is to establish that the gap between the
classical chromatic number of graphs and its quantum variant
is bounded when the quantum strategy involves shared quantum information of bounded size.
\end{abstract}

\section{Introduction}
Two-prover games yield a convenient paradigm for investigating the power and limits of computation and information transfer. 
Some of the most striking developments of modern theoretical computer science---for instance, in hardness of approximation (e.g., \cite{bellare1993inapprox2prover,raz1995parallel,bellare1995nonlinearinapprox,feige1996ipinapprox,Khot02stoc,Khot18:focs-pseudorandom}) and in the theory of complexity classes (e.g., \cite{cai1994pspace,babai1991nexp2prover})---are most conveniently formulated as statements on the expressive power of two cooperating provers/players interacting with a verifier/referee.

\sloppy Consider the scenario of a \textit{Referee} challenging two players---\textit{Alice} and \textit{Bob}---by asking each of them a question involving a local set of variables for some fixed predicate.
Alice and Bob aim to convince the Referee that the predicate is true. 
Just like in a real questioning scenario, the replies should be \textit{plausible} (i.e., they should locally satisfy the predicate) and \textit{consistent} with each other.
%
%
In this work, we shall be concerned with \textit{perfect strategies} for the cooperating players:
\begin{center}
    \textit{When can Alice and Bob convince the Referee, no matter which questions the Referee chooses?}
\end{center}
If the players are allowed to communicate freely during the game, they can always win as long as each question of the Referee admits some plausible answer.
At the other extreme, if no communication is allowed, we expect a perfect strategy to exist precisely when the predicate is true. After all, if Alice and Bob \textit{are} innocent, they should simply tell the truth, and their replies will always sound both plausible and consistent; if they are \textit{not} innocent, any strategy will have a weak point, which a zelous Referee might use to spot implausibility or inconsistency in the answers.


The study of two-prover games has also shown to be central in quantum theory---specifically, to shed light on the compatibility of quantum and classical physics.
At a high level, the idea is that, once physics enters the picture, the ``no communication'' assumption between Alice and Bob during the game sets different restrictions on the type of correlations exhibited by Alice and Bob's answers, depending on the physical theory we are adopting. Specifically, in a \textit{quantum} model of physics, the phenomenon of entanglement might in principle be used by the players to devise strategies resulting in answers that are more correlated than those resulting from any strategy admitted in a \textit{classical} theory. More precisely, a \textit{quantum strategy} consists of a bipartite quantum system shared by Alice and Bob, as well as two families $\mathcal{A}$ and $\mathcal{B}$ of  measurements, one for Alice and one for Bob, with different measurements associated with the possible questions the Referee might ask.
%
%
Upon receiving the questions, 
each of the players performs the corresponding measurement from $\mathcal{A}$ and $\mathcal{B}$, respectively, on their part of the shared quantum system.
The outcomes of the measurements determine their answers.
Some predicates admit a quantum strategy that outperforms \textit{any} classical strategy. This has served as theoretical evidence for the non-classical properties of quantum physics---in particular, non-locality and contextuality, see e.g.~\cite{bell1964einstein,clauser1969proposed,mermin1990simple,brassard1999cost,brassard2005quantum_pseudo}. 
\bigskip

The goal of this work is to investigate the extent to which the correlation between Alice's and Bob's answers achieved via quantum entanglement can be simulated by introducing into the game a limited amount of \textit{classical} communication among the cooperating players.
To that end, we consider the scenario where Alice and Bob, upon receiving the questions from the Referee, are allowed to send a classical message one to the other.
Our main contribution is to show that \textit{any perfect quantum strategy can be converted into a perfect classical strategy with a communication channel whose size depends only on $(i)$ the dimension of the Hilbert space describing the shared system in the quantum strategy, and $(ii)$ the type of predicate.}

It shall be convenient, in our analysis, to treat the predicate describing a game as an instance of a \textit{constraint satisfaction problem} $\CSP(\Y)$. Here, $\Y$ is the \textit{template} of the CSP, encoding all plausible replies of the players to the Referee's questions.
In a zero-communication game,
the existence of a perfect classical strategy means that the predicate yields a $\YES$-instance of $\CSP(\Y)$. Moreover, it was shown in~\cite{abramsky2017quantum} that a perfect quantum strategy corresponds to a $\YES$-instance of the CSP parameterised by
a certain infinite-domain, quantised version $\Y^\HH$ of $\Y$.

As the first step of our proof (in~\cref{sec_combinatorial_characterisation_strategies}), we show combinatorially that a similar phenomenon also holds for classical strategies with extra communication channels. Specifically, the two scenarios of Alice messaging Bob and Bob messaging Alice are captured by two distinct modifications of $\Y$, which we denote by $\uprelstr{k,\Y}$ and $\uprelstrbob{k,\Y}$, respectively, with $k$ representing the size of the message. 
As a consequence, classical strategies with an extra communication channel for a $\Y$-predicate are equivalent to classical strategies with no communication for a $\uprelstr{k,\Y}$-predicate and a $\uprelstrbob{k,\Y}$-predicate, respectively. Unlike the structure $\Y^\HH$ for the quantum case, such structures are finite. 

The second step of our proof (in~\cref{sec_entangled_vs_classical}) is geometric, and consists in a rounding procedure that, given a perfect quantum strategy on a $d$-dimensional Hilbert space, produces a classical strategy that is perfect when a message of suitable size (depending on $d$ and on $\Y$, but not on the instance) is allowed.
By virtue of the first step, this rounding procedure can be performed at the template level rather than at the game level, and is thus instance-free. For example, showing that any perfect quantum strategy can be simulated by a perfect classical strategy with an ``Alice to Bob'' message of size $k$ 
amounts to building a relation-preserving map between $\Y^\HH$ and $\uprelstr{k,\Y}$.

The main technical challenge of the rounding procedure is to show that the quantum version of any structure $\Y$ has a finite classical chromatic number. This corresponds to finding a finite colouring for the infinitely many measurements admitted in a quantum strategy.
We give two different proofs of this fact.
One proof works for arbitrary CSPs, and makes use of the compactness of the unitary group in $\C^{d\times d}$ to build a system of indicator frames that ``almost diagonalises'' any projective measurement. The second proof only applies to digraphs, 
and it is based on a classic upper bound on the size of sphere coverings by~\cite{rogers1963covering}. The latter argument allows simulating a perfect quantum strategy via a 
classical communication channel whose size asymptotically matches the dimension of the Hilbert space describing the shared quantum system, in the digraph case. We do not know whether this fact is a coincidence or rather an instance of a deeper correspondence between shared quantum information and classical communication in the context of non-local games.

As a by-product of the second step of our proof, in~\cref{subsec_quantum_vs_classical_chromatic_gap}, we obtain an upper bound on the gap between the classical chromatic number $\chi$ and its $d$-dimensional quantum version $\chi_d$, by showing that $\chi(\textbf{G})\leq \alpha_d\cdot\chi_d(\textbf{G})
$ for each graph $\textbf{G}$, where $\alpha_d$ depends only on the dimension $d$ of the quantum system.

\paragraph{Related work}
Our result finds its collocation within a line of work in quantum information theory (specifically, quantum communication complexity) aiming to quantify the communication cost of classically simulating the correlations observed in quantum theory. This direction was pioneered independently in~\cite{maudlin1992d,brassard1999cost,steiner2000towards}. In particular, it was shown in~\cite{brassard1999cost}
that the correlations produced by the two parties of a Bell experiment through two-outcome projective measurements on a single pair of qubits in a
Bell state can be simulated by
classical strategies augmented with eight bits of communication. Later,~\cite{csirik2002cost} proved that six bits are sufficient, and~\cite{toner2003communication} reduced the number to only one bit, which is optimal; see also~\cite{bacon2003bell}, where the latter result was extended to a complete characterisation of the polytope of the admitted correlations arising from such strategies. Moreover,~\cite{brassard1999cost} gave a lower bound (but no upper bound) on the number of classical communication bits required in the case of a system of arbitrarily many Bell states shared by the players. It was later shown in~\cite{regev2010simulating} that two communication bits are sufficient (and, as proved in~\cite{vertesi2009lower}, necessary) for exactly simulating the quantum correlations  arising in the case that both players perform two-outcome measurements on a shared quantum system described by a Hilbert space of arbitrary dimension. In the latter work, the marginal distributions produced by the classical protocol with communication do not coincide with those coming from the quantum measurements. 
The case of multipartite---as opposed to bipartite---entanglement was considered in~\cite{brassard2014exact}; see
also~\cite{coates2002quantum,branciard2012classical,zambrini2019bell,renner2023minimal,sidajaya2023neural} in the same line of work. 

We note that the number of measurement outcomes in the protocols discussed above corresponds to the size of the CSP template in our setting (see Section~\ref{subsec_quantum_strategies}).
Furthermore,
in this line of work, a classical protocol is typically said to simulate a quantum measurement scenario if the outputs it produces have exactly the same bivariate distribution as those produced by the quantum measurements---or if the correlations coincide, as in~\cite{regev2010simulating}.
The notion of simulation we use in the current work is less restrictive: We only require that the classical distributions be perfect (i.e., satisfy all local constraints of the CSP) when the quantum distributions are perfect. In fact, in order to exactly simulate quantum correlations via classical messages of finite size, Alice and Bob provably need infinite shared randomness---as was shown in~\cite{massar2001classical}, see also~\cite[\S III.C]{brunner2014bell}. We also observe that the literature on classical simulation of quantum correlations often considers measurement scenarios on quantum systems prepared in some specific state (for example, a Bell state), while the current work captures quantum strategies involving a quantum system prepared in an arbitrary state. Nevertheless, it was shown in~\cite[Thm.~5]{abramsky2017quantum} (generalising results from~\cite{CameronMNSW07,roberson2013variations,MancinskaR16}) that, as far as perfect strategies are concerned, there is no loss of generality in preparing the shared quantum system in a maximally entangled state.

Finally, we point out that other ways of quantifying the classical resources needed to simulate non-classical effects of quantum systems have been investigated in the literature. A primary example is the so-called memory cost of contextuality---i.e., the size of a classical system needed
to simulate the certain predictions that can be obtained from sequences of measurements on a
quantum system, see~\cite{kleinmann2011memory,cabello2018optimal}.

\section{Games and strategies}
\label{sec_strategies}
Before embarking on the proof of our main results, we give in this section a formal description of the two-player games associated with CSP instances, as well as the types of strategies that we will be concerned with in this work.

Let $\sigma$ be a (relational) \textit{signature}; i.e., a finite set of symbols $R$, each with an associated positive integer $\ar(R)$ called the \textit{arity} of $R$. A \textit{relational structure} with signature $\sigma$ (in short, a \textit{$\sigma$-structure}) $\Y$ consists of a set $Y$ called \textit{domain} or \textit{universe}, and a \textit{relation} $R^\Y\subseteq Y^{\ar(R)}$ for each symbol $R\in\sigma$. Given two $\sigma$-structures $\X$ and $\Y$, a \textit{homomorphism} from $\X$ to $\Y$ is a map $f:X\to Y$ preserving all relations; i.e., $f(\bx)\in R^{\Y}$ for each tuple $\bx\in R^\X$, where $f$ is applied entrywise to the entries of $\bx$. The existence of a homomorphism from $\X$ to $\Y$ is denoted by the expression $\X\to\Y$.
CSPs can be expressed as homomorphism problems for relational structures. In particular, given a fixed $\sigma$-structure $\Y$ (the \textit{template}), the CSP parameterised by $\Y$ is the set of $\sigma$-structures $\X$ (called \textit{instances} of the CSP) for which it holds that $\X\to\Y$.

We will be looking at such homomorphism problems through the lens of the aforementioned two-prover games.
\cref{fig_game_description} illustrates the $\X,\Y$ game associated with an instance $\X$ of $\CSP(\Y)$.
\begin{figure}
\begin{description}
{%
    \item{\sc Input:} $\sigma$-structure $\relstr X$
    \item{\sc Strategy phase:} Alice and Bob strategise
    \item{\sc Proper game phase:}
    \begin{description}
        \item{\sc [Ref]:} sends $R\in\sigma$ and $\tuple x\in R^{\relstr X}$ to Alice
        \item{\sc [Ref]:} sends $x\in X$ to Bob
        \item{\sc [Alice and Bob communicate]}
        (depending on game rules)
        \item{\sc [Alice]:} responds with $\tuple y \in Y^{\ar(R)}$
        \item{\sc [Bob]:} responds with $y\in Y$
    \end{description}
    \item{\sc Scoring:} Alice and Bob win if
    \begin{itemize}
        \item[$(i)$] $\by\in R^\Y$, and
        \item[$(ii)$]  $\forall i\in[\ar(R)]$, $x_i = x$ implies $y_i = y$.
    \end{itemize}
}
\end{description}
\caption{The $\X,\Y$ game.}
\label{fig_game_description}
\end{figure}

Observe that the winning condition $(i)$ corresponds to \textit{plausibility}---Alice's reply should satisfy the local constraint picked by the Referee---while condition $(ii)$ expresses \textit{consistency}---Alice's and Bob's replies should be consistent with each other.

\subsection{Classical strategies---no communication}
Suppose that Alice and Bob are not allowed any communication during the proper game phase, and cannot leverage non-classical effects resulting from shared quantum information to devise stronger strategies.
A \textit{classical strategy} consists then of a function $a_R:R^\X\to Y^{\ar(R)}$ for each $R\in\sigma$ encoding Alice's answer, and a function $b:X\to Y$ encoding Bob's answer. The strategy is \textit{perfect} if it makes Alice and Bob win the game regardless of the Referee's questions; i.e., if $a_R(\bx)\in R^\Y$ for each $\bx\in R^\X$, and $a_R(\bx)_i=b(x_i)$ for each $\bx\in R^\X$ and each $i\in[\ar(R)]$.

We observe that the questions could be thought of as being sampled according to some probability distribution on the instance $\X$. However, since we shall only be concerned with perfect strategies, we do not need to choose any explicit distribution. Randomness could also be involved in Alice's and Bob's replies: In both the classical and the quantum settings (which we shall shortly describe formally), the players could toss a coin before answering. It is not hard to see that, if such a randomised strategy
achieves some winning probability, the same winning probability can be achieved via a fully deterministic strategy. Hence, we shall consider deterministic strategies without loss of generality.

If $\X\to\Y$ (i.e., if $\X$ is a $\YES$-instance of $\CSP(\Y)$), Alice and Bob can agree on some homomorphism $f:\X\to\Y$ in the strategy phase, and then always respond according to $f$: Alice replies $f(\bx)=(f(x_1),\dots,f(x_{\ar(R)}))$ upon receiving $\bx\in R^\X$, and Bob replies $f(x)$ upon receiving $x\in X$. Clearly, this strategy is perfect. Conversely, it is not hard to check that, given a perfect deterministic strategy, Bob's answers yield a homomorphism from $\X$ to $\Y$.
Therefore, we have the following.
\begin{fact}
The set of instances $\X$ for which 
the $\X,\Y$ game admits a perfect classical strategy is $\CSP(\Y)$. 
\end{fact}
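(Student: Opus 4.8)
The plan is to prove the two inclusions separately, having first reduced to \emph{deterministic} strategies as licensed by the preceding discussion: a strategy that wins with probability $1$ on every question pair must win for every outcome of the players' (shared or private) randomness, so any deterministic strategy in its support is already perfect.

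\smallskip\noindent\textbf{$\CSP(\Y)$ admits a perfect strategy.} Suppose $\X\to\Y$, witnessed by a homomorphism $f\colon X\to Y$. I would let Alice play $a_R(\bx):=f(\bx)=(f(x_1),\dots,f(x_{\ar(R)}))$ for each $R\in\sigma$ and $\bx\in R^\X$, and let Bob play $b(x):=f(x)$ for each $x\in X$. Plausibility holds because $f$ preserves $R$, so $a_R(\bx)=f(\bx)\in R^\Y$; consistency holds because $a_R(\bx)_i=f(x_i)=b(x_i)$ for every $i\in[\ar(R)]$, so in particular $x_i=x$ forces $a_R(\bx)_i=b(x)$. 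Hence the strategy is perfect.

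\smallskip\noindent\textbf{A perfect strategy forces $\X\to\Y$.} Conversely, suppose the $\X,\Y$ game admits a perfect classical (deterministic) strategy $\bigl(\{a_R\}_{R\in\sigma},b\bigr)$. I would show that $b\colon X\to Y$ is a homomorphism from $\X$ to $\Y$. Fix $R\in\sigma$ and $\bx=(x_1,\dots,x_{\ar(R)})\in R^\X$. For each $i\in[\ar(R)]$ the entry $x_i$ lies in $X$, so it is a legal question for Bob; consider the round in which the Referee sends $(R,\bx)$ to Alice and $x_i$ to Bob. Since the strategy is perfect, the consistency condition $(ii)$ (applied with this $i$) gives $a_R(\bx)_i=b(x_i)$. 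Letting $i$ range over $[\ar(R)]$ yields $a_R(\bx)=(b(x_1),\dots,b(x_{\ar(R)}))=b(\bx)$, and the plausibility condition $(i)$ gives $a_R(\bx)\in R^\Y$; therefore $b(\bx)\in R^\Y$. As $R$ and $\bx\in R^\X$ were arbitrary, $b$ preserves every relation, i.e.\ $b\colon\X\to\Y$, so $\X\in\CSP(\Y)$.

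\smallskip The argument is routine and presents no genuine obstacle; the only point requiring a moment's care is the reduction from randomised to deterministic strategies, which however is already handled in the discussion above. Combining the two inclusions gives the claimed set equality.
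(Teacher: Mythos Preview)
Your proof is correct and follows essentially the same approach as the paper: use a homomorphism $f$ to define both players' strategies for one inclusion, and for the converse show that Bob's function $b$ is a homomorphism by applying the consistency condition coordinatewise and then plausibility. The only cosmetic difference is that you spell out the routine verification in slightly more detail than the paper does.
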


\subsection{Quantum strategies---no communication}
\label{subsec_quantum_strategies}
Take now a Hilbert space $\HH$.
We shall always consider finite-dimensional Hilbert spaces, so $\HH\cong\C^d$ or $\HH\cong\R^d$ for some $d\in\N$.
A \textit{projector} onto $\HH$ is a linear map $E:\HH\to\HH$ such that $E^2=E^*=E$, where $E^*$ is the adjoint of $E$. 
Let $I$ and $O$ denote the identity and zero projectors onto $\HH$, respectively.
For a finite set $S$, we denote by $\PVM_{\HH}(S)$ the set of \textit{projection-valued measurements} over $\HH$ whose outcomes are indexed by $S$. In other words, $\PVM_{\HH}(S)$ contains all sets $\mathcal E=\{E_s:s\in S\}$ 
where each $E_s$ is a projector on $\HH$ and, in addition, $\sum_{s\in S}E_s=I$. In particular, the latter condition implies that the projectors in $\mathcal{E}$ are mutually orthogonal: $E_sE_t=O$ when $s\neq t$.

Consider two $\sigma$-structures $\X$ and $\Y$. A \textit{quantum strategy} for the $\X,\Y$ game consists of a Hilbert space $\HH$, a unit vector $\psi\in\HH\otimes\HH$ (which describes the state of the quantum system shared by Alice and Bob), a measurement $\mathcal{A}^{(\bx)}\in\PVM_\HH(Y^{\ar(R)})$ for each $R\in\sigma$ and each $\bx\in R^\X$, and a measurement $\mathcal{B}^{(x)}\in\PVM_\HH(Y)$ for each $x\in X$.
Upon receiving $\bx\in R^\X$, Alice uses $\mathcal{A}^{(\bx)}$ to measure her part of the shared quantum system; upon receiving $x\in X$, Bob uses $\mathcal{B}^{(x)}$ to measure his part of the system. This results in Alice replying $\by$ and Bob replying $y$ with probability $\psi^*(A^{(\bx)}_\by\otimes B^{(x)}_y)\psi$. Like in the classical case, the strategy is winning if the resulting answers always satisfy the conditions $(i)$ and $(ii)$ of~\cref{fig_game_description}.

Let $\CSP^\HH(\Y)$ be the set of $\sigma$-structures $\X$ such that a perfect quantum strategy for the $\X,\Y$ game exists. 
It was shown in~\cite{abramsky2017quantum} that $\CSP^\HH(\Y)$ is in fact an (infinite-domain) classical $\CSP$, parameterised by a $\sigma$-structure $\Y^\HH$ that we now describe.
The domain of $\Y^\HH$ is $\PVM_{\HH}(Y)$;
for each symbol $R\in\sigma$, a tuple $(\mathcal{E}^{(1)},\dots,\mathcal{E}^{(\ar(R))})$ of PVMs belongs to $R^{\Y^\HH}$ if and only if there exists some $\mathcal{F}=\{F_{\by}\}_{\by\in R^\Y}\in\PVM_{\HH}(R^\Y)$ such that the condition 
\begin{align}
\label{eqn_defining_quantum_structure}
E^{(i)}_{y}
=
\sum_{\substack{\by\in R^\Y\\y_i=y}}F_{\by}
\end{align}
holds for each $y\in Y$ and each $i\in[\ar(R)]$.

\begin{thm}[\cite{abramsky2017quantum}]
\label{thm_quantum_csp_is_csp}
$\CSP^\HH(\Y)=\CSP(\Y^\HH)$ for each $\sigma$-structure $\Y$ and each Hilbert space $\HH$. 
\end{thm}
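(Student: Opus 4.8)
The plan is to establish the two inclusions $\CSP^\HH(\Y)\subseteq\CSP(\Y^\HH)$ and $\CSP(\Y^\HH)\subseteq\CSP^\HH(\Y)$ by a direct translation between perfect quantum strategies for the $\X,\Y$ game and homomorphisms $\X\to\Y^\HH$. The crucial observation is that in a perfect quantum strategy the state $\psi$ barely matters: since the winning condition must hold \emph{regardless} of the Referee's questions and the outcome probabilities are $\psi^*(A^{(\bx)}_\by\otimes B^{(x)}_y)\psi$, perfection is equivalent to a purely operator-algebraic condition, namely that whenever $(\by,y)$ violates condition $(i)$ or $(ii)$ of \cref{fig_game_description}, the operator $A^{(\bx)}_\by\otimes B^{(x)}_y$ annihilates $\psi$. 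By the result of~\cite{abramsky2017quantum} cited just before the statement (the generalisation of \cite{CameronMNSW07,roberson2013variations,MancinskaR16}), one may assume $\psi$ is maximally entangled, and then $\psi^*(M\otimes N)\psi = \tfrac{1}{d}\tr(M N^\top)$ in a suitable basis; the upshot is that Alice's and Bob's PVMs can, without loss of generality, be taken to act on the \emph{same} Hilbert space $\HH$ and to satisfy the algebraic relations that make Bob's family $\mathcal{B}^{(x)}$ the ``transpose'' of the relevant marginal of Alice's, so that $\mathcal{B}^{(x)}$ alone carries all the information. This is what lets us read off a candidate map $x\mapsto\mathcal{B}^{(x)}\in\PVM_\HH(Y)$, i.e.\ a map $X\to Y^{\Y^\HH}$.

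For the inclusion $\CSP^\HH(\Y)\subseteq\CSP(\Y^\HH)$, I would start from a perfect quantum strategy, normalise it via the above so Alice and Bob share a maximally entangled state on $\HH\otimes\HH$, and define $h:X\to\PVM_\HH(Y)$ by $h(x)=\mathcal{B}^{(x)}$. To check $h$ is a homomorphism, fix $R\in\sigma$ and $\bx=(x_1,\dots,x_{\ar(R)})\in R^\X$; I need $(\mathcal{B}^{(x_1)},\dots,\mathcal{B}^{(x_{\ar(R)})})\in R^{\Y^\HH}$, i.e.\ I must exhibit the witnessing PVM $\mathcal{F}=\{F_\by\}_{\by\in R^\Y}$ satisfying~\eqref{eqn_defining_quantum_structure}. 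The natural choice is to take $F_\by$ built from Alice's PVM $\mathcal{A}^{(\bx)}$: since the strategy is perfect, $\mathcal{A}^{(\bx)}$ is supported on tuples in $R^\Y$ (plausibility, condition $(i)$), so $\{A^{(\bx)}_\by : \by\in R^\Y\}$ is already a PVM over $R^\Y$; set $F_\by = A^{(\bx)}_\by$ (transposed into Bob's picture if needed). Consistency (condition $(ii)$) forces, for each coordinate $i$ and each $y\in Y$, that the projector $\sum_{\by: y_i=y} A^{(\bx)}_\by$ agrees with $B^{(x_i)}_y$ on the support of $\psi$; because both are projectors and $\psi$ is maximally entangled (hence has full Schmidt rank), agreement on $\operatorname{supp}\psi$ upgrades to genuine operator equality, which is exactly~\eqref{eqn_defining_quantum_structure}. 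Hence $h$ is a homomorphism $\X\to\Y^\HH$.

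For the converse $\CSP(\Y^\HH)\subseteq\CSP^\HH(\Y)$, given a homomorphism $h:\X\to\Y^\HH$, I build a perfect quantum strategy on $\HH\otimes\HH$ with $\psi$ maximally entangled: Bob uses $\mathcal{B}^{(x)} := h(x)$, and for Alice, for each $R$ and $\bx\in R^\X$, the homomorphism condition hands us a witnessing $\mathcal{F}=\{F_\by\}_{\by\in R^\Y}\in\PVM_\HH(R^\Y)$; extend it to a PVM $\mathcal{A}^{(\bx)}\in\PVM_\HH(Y^{\ar(R)})$ by declaring $A^{(\bx)}_\by := F_\by$ for $\by\in R^\Y$ and $A^{(\bx)}_\by := O$ otherwise (the sum is still $I$). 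Plausibility is immediate since Alice never outputs a tuple outside $R^\Y$ with positive probability; consistency follows because~\eqref{eqn_defining_quantum_structure} says precisely that the $i$-th marginal of $\mathcal{A}^{(\bx)}$ equals $\mathcal{B}^{(x_i)}$, so the joint outcome $(\by,y)$ with $y_i\neq y$ gets probability $\psi^*(A^{(\bx)}_\by\otimes B^{(x_i)}_y)\psi$, and the marginal-matching plus orthogonality of a PVM's projectors makes the relevant operator kill $\psi$. This closes the loop.

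The main obstacle, and the step deserving the most care, is the reduction to a maximally entangled shared state together with the passage from ``equality on $\operatorname{supp}\psi$'' to genuine operator equality: this is where the quantum data is rigidified into the clean PVM-theoretic language of $\Y^\HH$, and it is the one place where we genuinely invoke external input (\cite[Thm.~5]{abramsky2017quantum} and its precursors \cite{CameronMNSW07,roberson2013variations,MancinskaR16}). Everything downstream — verifying the PVM axioms for the constructed families, checking~\eqref{eqn_defining_quantum_structure}, and confirming plausibility/consistency — is then a routine bookkeeping of projector identities. I would also take a moment to note that the apparent asymmetry between Alice (who answers tuples) and Bob (who answers single elements) is harmless: Alice's PVM over $Y^{\ar(R)}$ is, in a perfect strategy, concentrated on $R^\Y$ and its coordinate marginals are exactly Bob's PVMs, which is the content of the defining condition of $R^{\Y^\HH}$.
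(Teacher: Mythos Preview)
The paper does not supply its own proof of this theorem: it is stated as a citation of \cite{abramsky2017quantum} and used as a black box throughout, so there is nothing in the paper to compare your argument against. Your sketch is essentially the argument of the cited source---reduce to a maximally entangled state, read off Bob's PVMs as the homomorphism, and use Alice's PVM (transposed) as the witnessing $\mathcal{F}$ for~\eqref{eqn_defining_quantum_structure}; conversely, manufacture Alice's PVM from the witness. One small point worth tightening: the transpose is not optional. With $\psi$ maximally entangled one has $\psi^*(A\otimes B)\psi=\tfrac{1}{d}\tr(A B^{\top})$, so the operator identities you derive relate $A^{(\bx)}$ to the \emph{transpose} of $B^{(x_i)}$; you must set $F_\by=(A^{(\bx)}_\by)^{\top}$ in the forward direction and $A^{(\bx)}_\by=(F_\by)^{\top}$ in the backward direction for~\eqref{eqn_defining_quantum_structure} to come out literally (your parenthetical ``transposed into Bob's picture if needed'' should be made non-optional). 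With that adjustment the bookkeeping goes through as you describe.
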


\subsection{Classical strategies---Alice messages Bob}
\label{sec_description_channel_strategies_alice}
Consider again the game of~\cref{fig_game_description}. This time, suppose that, after receiving her question $\bx\in R^\X$ from the Referee, Alice is allowed to send a message to Bob, with the goal to provide him with as much information about $\bx$ as possible so that their answers are winning.
Clearly, if the message is allowed to encode the complete information about $\bx$, Bob can always respond perfectly: He knows that Alice will reply with some $\by\in R^\X$ upon receiving $\bx$, as they agreed in the strategy phase. Then, he can simply reply with $y_i$ whenever $x=x_i$ for some index $i$, and with any answer otherwise.
Hence, provided that Alice has some plausible reply at all,\footnote{For example, if Alice is sent a loop of $\X$ but $\Y$ has no loop, she cannot give a plausible answer. Via~\cref{defn_alice_power}, it shall be immediate to check when this happens.} the players always win in this case.

The situation becomes more interesting if the message is required to have a \textit{limited} length, independent of $\X$.
Suppose, for concreteness, that Alice is only allowed to transfer $\log_2 k$ classical bits. In other words, she can send Bob one of $k$ different prescribed messages.
In this setting, a (classical) strategy consists of a function $a_R:R^\X\to Y^{\ar(R)}$ and a function $m_R:R^\X\to[k]$ for each $R\in\sigma$, as well as a function $b:X\times[k]\to Y$. Here, $a_R$ encodes Alice's reply to the Referee; $m_R$ encodes her message to Bob; and $b$ encodes Bob's reply (which depends on both the Referee's question and Alice's message). 
The conditions making such a strategy \textit{perfect} are the obvious ones.

Let $\GameAB{k,\Y}$ denote the set of instances $\X$ for which a perfect classical strategy with Alice sending Bob one of $k$ prescribed messages (i.e., $\log_2 k$ bits) exists.
Quite surprisingly, we shall see in~\cref{sec_combinatorial_characterisation_strategies} that,
just like in the quantum case, $\GameAB{k,\Y}$ is a CSP, parameterised by a certain combinatorial power of $\Y$ that we denote by $\uprelstr{k,\Y}$. Using this characterisation, we obtain in~\cref{sec_entangled_vs_classical} our first main result, stating that any perfect quantum strategy can be simulated by a classical strategy with a suitably long ``Alice to Bob'' message:
\begin{restatable}{thm}{mainAlice}
\label{thm_main_alice_general}
    For each $\sigma$-structure $\Y$ and each Hilbert space $\HH$ there exists some $k\in\N$ such that $\CSP^\HH(\Y)\subseteq\GameAB{k, \Y}$.
\end{restatable}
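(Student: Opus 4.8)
### Proof proposal

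The plan is to reduce Theorem~\ref{thm_main_alice_general} to a purely structural statement using \cref{thm_quantum_csp_is_csp} together with the (forthcoming) characterisation $\GameAB{k,\Y}=\CSP(\uprelstr{k,\Y})$. Once both $\CSP^\HH(\Y)$ and $\GameAB{k,\Y}$ are realised as homomorphism problems, the inclusion $\CSP^\HH(\Y)\subseteq\GameAB{k,\Y}$ becomes the single assertion
\[
\Y^\HH \to \uprelstr{k,\Y},
\]
i.e.\ the existence of one relation-preserving map, uniformly over all instances $\X$. So the whole argument is ``instance-free'': I only need to build a homomorphism between two fixed templates, where $k$ is allowed to depend on $\HH$ (equivalently on $d=\dim\HH$) and on $\Y$, but on nothing else.

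The core of the construction is a \emph{colouring/rounding} step. The domain of $\Y^\HH$ is the infinite set $\PVM_\HH(Y)$, so I first need a finite map $c\colon \PVM_\HH(Y)\to C$ into some finite colour set $C$ whose size depends only on $d$ and $|Y|$ — this is precisely the claim flagged in the introduction that the quantum version of $\Y$ has finite classical chromatic number. I would obtain $c$ via the compactness argument sketched in the excerpt: the unitary group $U(d)$ is compact, so a sufficiently fine net of ``indicator frames'' almost-diagonalises every PVM in $\PVM_\HH(Y)$ simultaneously; rounding each PVM to its nearest frame partitions $\PVM_\HH(Y)$ into finitely many cells, one per $(\text{frame},\text{assignment of blocks to }Y)$ pair. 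Then I let the message alphabet be (roughly) this set of cells — so $k$ is the number of cells — and define the homomorphism $\Y^\HH\to\uprelstr{k,\Y}$ to record, in the Alice-to-Bob coordinate, the colour class of the relevant PVM. The consistency requirement of condition $(ii)$ in \cref{fig_game_description} is exactly what Bob can reconstruct from this finite message: knowing the cell of $\mathcal{B}^{(x)}$ pins down an outcome compatibly with Alice's joint measurement $\mathcal F$ in \eqref{eqn_defining_quantum_structure}. I must check that the defining condition of $R^{\uprelstr{k,\Y}}$ (whatever its precise form turns out to be in \cref{sec_combinatorial_characterisation_strategies}) is implied by the existence of the witnessing $\mathcal F\in\PVM_\HH(R^\Y)$ from \eqref{eqn_defining_quantum_structure}, after rounding; the slack introduced by rounding must not destroy plausibility, which is where a careful choice of net fineness (depending on the combinatorial width of $\Y$, e.g.\ $\max_R \ar(R)$ and $|Y|$) enters.

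The step I expect to be the main obstacle is making the rounding \emph{exact} rather than merely approximate. Quantum strategies are perfect, so the relations in $\Y^\HH$ hold on the nose via genuine projectors satisfying \eqref{eqn_defining_quantum_structure}; but an arbitrary net of frames only approximates a given PVM, and a rounded ``PVM'' need not be a genuine PVM (idempotency and orthogonality are not preserved by small perturbations in a naive way), nor need the rounded data still satisfy \eqref{eqn_defining_quantum_structure}. The resolution I would pursue: rather than rounding the PVMs to nearby \emph{projectors}, round them to the \emph{combinatorial data} they determine — which block of a fixed frame each projector's range is closest to — and argue that for a sufficiently fine frame this assignment is (a) well-defined (each range lies unambiguously close to one block), and (b) \emph{consistent across a whole tuple}, because the single joint measurement $\mathcal F$ forces the marginals $E^{(i)}_y$ to be simultaneously block-diagonal up to the net scale. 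Translating this ``simultaneous near-diagonalisation'' into an honest homomorphism into the finite template $\uprelstr{k,\Y}$ — i.e.\ verifying the exact defining condition of its relations — is the delicate part; everything else (compactness of $U(d)$, finiteness of the colour set, the reduction via \cref{thm_quantum_csp_is_csp}) is routine once that is in place. I would also remark that for digraph templates the same scheme goes through with $k$ controlled by Rogers' sphere-covering bound, giving the sharper dependence on $d$ mentioned in the introduction, but the general case only needs the crude compactness count.
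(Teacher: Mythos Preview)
Your overall architecture matches the paper: reduce via \cref{thm_quantum_csp_is_csp} and the characterisation $\GameAB{k,\Y}=\CSP(\uprelstr{k,\Y})$ to building a single homomorphism $\Y^\HH\to\uprelstr{k,\Y}$, and obtain a finite colouring of $\PVM_\HH(Y)$ from compactness of $U(d)$. You also correctly locate the crux --- turning an approximate rounding into an exact homomorphism --- but your proposed resolution has a gap.

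Your plan is to argue that the marginals $\mathcal E^{(i)}$ of a tuple in $R^{\Y^\HH}$ are ``simultaneously block-diagonal up to the net scale'' thanks to the common refinement $\mathcal F$. This is too weak as stated: near-block-diagonality of each $\mathcal E^{(i)}$ in \emph{some} frame does not let you read off an element of $R^\Y$ in any fixed Alice coordinate, and you cannot pick one frame per \emph{tuple} instance-freely. The paper's key observation (\cref{lem_close_PVMs_means_equal}) is sharper and algebraic rather than approximate: the marginals $\mathcal E^{(i)},\mathcal E^{(j)}$ of a common $\mathcal F$ are \emph{commuting}, and two commuting PVMs that land in the same colour cell (same nearby frame, same block assignment) are not merely close --- they are \emph{equal}. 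Hence the colouring $c$ is injective on every tuple of $R^{\Y^\HH}$, i.e.\ pattern-preserving: $\pi_{c(\be)}=\pi_\be$. No ``up to scale'' slack survives.

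This is also where the paper's decomposition differs from yours. Rather than mapping $\Y^\HH$ directly into $\uprelstr{k,\Y}$, it factors
\[
\Y^\HH \;\to\; \K_{n,\ptn(\Y)}^\HH \;\to\; \K_{m,\ptn(\Y)} \;\to\; \uprelstr{k,\Y}
\]
through the \emph{complete structure} $\K_{m,\ptn(\Y)}$, whose relations depend only on the equality pattern of a tuple. Pattern preservation of $c$ is then exactly what the middle arrow needs, and the last arrow (\cref{prop_any_cliques_in_Alice}) is a separate, purely combinatorial embedding handled by covering arrays. Your proposal merges these two steps and leaves the second unspecified (the map from colours into $Y^k$ is never described). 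Separating them is what makes both the exactness issue and the verification of the $\uprelstr{k,\Y}$-relation routine.
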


Since, clearly, $\CSP(\Y)\subseteq\CSP^\HH(\Y)$,~\cref{thm_main_alice_general} establishes a ``sandwich'' of $\CSP^\HH(\Y)$ between two finite-domain CSPs corresponding to classical strategies.
Our proof of~\cref{thm_main_alice_general} does not yield an explicit bound on $k$ as a function of $\HH$ and $\Y$. On the other hand, in the specific setting of digraphs (i.e., relational structures having a single, binary relation), we show that, asymptotically, $k=\dim\HH+\log\chi(\Y)$ is enough, where $\chi(\Y)$ is the chromatic number of $\Y$ (see~\cref{thm_main_digraphs}). Hence, for a fixed $\chi(\Y)$, the number of communication bits in the classical strategy matches the number of shared qubits in the quantum strategy---we find this fact quite fascinating. Our proof for this case makes use of a geometric result by Rogers on the number of \textit{caps} needed to cover a real sphere in arbitrary dimension~\cite{rogers1963covering}.

\subsection{Classical strategies---Bob messages Alice}
What if the message is sent by Bob to Alice instead?
In this case, a strategy consists of a function $b:X\to Y$ encoding Bob's reply to the Referee, a function $m:X\to[k]$ encoding Bob's message to Alice, and, for each $R\in\sigma$, a function $a_R:R^\X\times[k]\to Y^{\ar(R)}$ encoding Alice's answer to the Referee after reading Bob's message.

Although this scenario might look similar to the previous one at first glance, it turns out that there exist substantial differences.
In sharp contrast to the previous case, it is easy to find CSP instances for which no amount of ``Bob to Alice'' communication can make them win the game with certainty, even if plausible answers are available for every question.
\begin{example}
\label{example_K2_vs_D2}
Let $\X$ be an undirected edge  $x\leftrightarrow x'$ and $\Y$ be a directed edge $y\to y'$.
Consider the $\X,\Y$ game.
Suppose Bob receives $x$ from the Referee. If he answers $y$, Alice will never be able to give a winning reply upon receiving the edge $x'\to x$ from the Referee, no matter what message she receives. Similarly, if Bob answers $y'$, Alice will not be able to find a convincing answer to the question $x\to x'$.
\end{example}

We will denote by $\GameBA{k, Y}$ the set of instances $\relstr{X}$ for which a perfect classical strategy with Bob sending Alice one of $k$ prescribed messages exists. In this case, we obtain a similar characterisation of $\GameBA{k, Y}$ as the CSP parameterised by a structure that we denote by $\uprelstrbob{k,\Y}$. Through this characterisation, we show that this dual messaging protocol is also able to capture quantum strategies, provided that the length of the message is sufficiently long:

\begin{restatable}{thm}{mainBob}
\label{thm_main_bob_general}
    For each $\sigma$-structure $\Y$ and each Hilbert space $\HH$ there exists some $k\in\N$ such that $\CSP^\HH(\Y)\subseteq\GameBA{k, \Y}$.
\end{restatable}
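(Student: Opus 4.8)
The plan is to mirror the structure already set up for the ``Alice to Bob'' case in \cref{thm_main_alice_general}, but now using the dual combinatorial power $\uprelstrbob{k,\Y}$. By the characterisation announced in \cref{sec_combinatorial_characterisation_strategies}, we have $\GameBA{k,\Y}=\CSP(\uprelstrbob{k,\Y})$, and by \cref{thm_quantum_csp_is_csp} we have $\CSP^\HH(\Y)=\CSP(\Y^\HH)$. Hence the inclusion $\CSP^\HH(\Y)\subseteq\GameBA{k,\Y}$ reduces, by composition of homomorphisms, to the purely template-level task of exhibiting a homomorphism $\Y^\HH\to\uprelstrbob{k,\Y}$ for some $k=k(\HH,\Y)$. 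This is the instance-free reformulation that the introduction promises, so the entire argument takes place at the level of the two fixed structures.

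First I would recall the (finite) description of $\uprelstrbob{k,\Y}$ from \cref{sec_combinatorial_characterisation_strategies}: its domain should consist of functions $[k]\to Y$ (Bob's "plan" mapping each possible message value to an element of $Y$), together with the induced relational structure encoding which tuples of such functions are jointly realisable by Alice's message-dependent replies. Then the plan is to build the map $\Y^\HH\to\uprelstrbob{k,\Y}$ by reusing the geometric rounding procedure of \cref{sec_entangled_vs_classical}: the key technical input is that the quantum structure $\Y^\HH$ has finite classical chromatic number, i.e.\ there is a finite colouring $c$ of $\PVM_\HH(Y)$ by a set of size $\alpha=\alpha(\HH,\Y)$ that is compatible with the relational constraints of $\Y^\HH$ in the appropriate sense. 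Given such a colouring, I would send each PVM $\mathcal E\in\PVM_\HH(Y)$ to the function $[k]\to Y$ that, on input equal to the colour $c(\mathcal E)$ (so $k=\alpha$, or slightly more to accommodate book-keeping), outputs a value of $Y$ read off from $\mathcal E$ in the canonical way prescribed by the rounding; one then checks that \cref{eqn_defining_quantum_structure} forces the image tuples to land in $R^{\uprelstrbob{k,\Y}}$. The verification that relations are preserved is the same diagram-chase as in the Alice case, using that the common refining PVM $\mathcal F$ in \cref{eqn_defining_quantum_structure} lets Alice (who now knows Bob's colour-message) pick a jointly consistent reply.

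The main obstacle, and the step I expect to absorb most of the work, is precisely establishing the finiteness of the quantum chromatic number of $\Y^\HH$ in a form strong enough to respect the $\uprelstrbob{}$-relations rather than just vertex-colouring---i.e.\ ensuring the colouring interacts correctly with the asymmetric "Bob speaks first" structure. The subtlety specific to the Bob-to-Alice direction, illustrated already by \cref{example_K2_vs_D2}, is that Bob commits to a value (via his reply and his message) before seeing any information about Alice's tuple, so the rounding must be set up so that the $Y$-value extracted from $\mathcal E$ is consistent across every relation and every coordinate position in which that vertex can appear; this is where one must be careful that the map is well defined and total, and it is the reason \cref{thm_main_bob_general} is not an immediate corollary of \cref{thm_main_alice_general}. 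I expect that the compactness-of-the-unitary-group argument alluded to in the introduction (building indicator frames that "almost diagonalise" every projective measurement) yields exactly the uniform finite colouring needed, and that plugging it into the template-level homomorphism construction completes the proof; as in the Alice case, this gives existence of $k$ without an explicit bound.
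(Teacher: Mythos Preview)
Your high-level strategy is correct and matches the paper: reduce via \cref{thm_quantum_csp_is_csp} and \cref{thm_algebraic_characterisation_bob_to_alice} to constructing a homomorphism $\Y^\HH\to\uprelstrbob{k,\Y}$, then build this map by combining the finite colouring $c$ of $\PVM_\HH(Y)$ from \cref{prop_QU_coloring_separable_structures_general} with a ``selector'' that reads off a canonical $Y$-value from each PVM.

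There is, however, a concrete error in your description of $\uprelstrbob{k,\Y}$. Its domain is $[k]\times Y$ (\cref{defn_bob_power}), not the set of functions $[k]\to Y$; you have conflated it with the Alice power $\uprelstr{k,\Y}$, whose domain is $Y^k$. In the Bob-to-Alice game, Bob commits to a single pair (message, reply) per vertex, not a reply-per-message table. Accordingly, the paper's map is $h(\mathcal E)=\bigl((c(\mathcal E),s(\mathcal E)),\,s(\mathcal E)\bigr)\in[k]\times Y$ with $k=m\,|Y|$, where $s(\mathcal E)\in Y$ is any index with $E_{s(\mathcal E)}\neq O$: the message encodes \emph{both} the colour and the selected outcome, and Bob's reply is that same outcome.

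Your verification sketch also omits the one non-obvious ingredient. Checking that $h(\be)\in R^{\uprelstrbob{k,\Y}}$ requires, for each fixed message value $(\tilde c,\tilde s)$, finding a single $\tilde\by\in R^\Y$ agreeing with $s(\be)$ on all coordinates $i$ with $(c(\mathcal E^{(i)}),s(\mathcal E^{(i)}))=(\tilde c,\tilde s)$. For this you need that $c$ is \emph{pattern-preserving}, i.e.\ $\pi_{\be}=\pi_{c(\be)}$, so that equal colours force equal PVMs; only then does the common nonzero projector $E^{(i)}_{\tilde s}$ (the same for all such $i$) pick out, via the refining $\mathcal F$ of \cref{eqn_defining_quantum_structure}, a single $\tilde\by\in R^\Y$ with $\tilde y_i=\tilde s$ throughout. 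This property of $c$ is established inside the proof of \cref{prop_QU_coloring_separable_structures_general}, but it is the crux of why the Bob-side construction goes through and should be made explicit.
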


The ``Bob messages Alice'' protocol appears to be weaker than the ``Alice messages Bob'' one. For example, the explicit upper bound on $k$ that we obtain in the digraph case for the former protocol (see~\cref{thm_main_digraphs}) is only exponential in $\dim(\HH)$ (however, see Remark~\ref{rem_bob_sometimes_stronger}).

\section{Capturing classical strategies with extra communication}
\label{sec_combinatorial_characterisation_strategies}

In this section, we show that both ``Alice to Bob'' and ``Bob to Alice'' perfect classical strategies can be captured by suitable combinatorial  powers of the template $\Y$. We handle the two cases separately.

\subsection{Alice messages Bob}
\label{subsec_alice_messages_bob}
We start by defining the combinatorial operation capturing ``Alice to Bob'' strategies.

\begin{defn}
\label{defn_alice_power}
    Let $\relstr Y$ be a $\sigma$-structure and $k$ a positive integer.
    The $k$-fold \textit{Alice power} of $\relstr Y$---in symbols, $\uprelstr{k,\Y}$---is the $\sigma$-structure on the universe $Y^k$ such that, for every relation symbol $R$~(say of arity $r$), a tuple $(\tuple x^{(1)}, \dotsc, \tuple x^{(r)})$ belongs to $R^{\uprelstr{k,\Y}}$ if and only if there exists $j\in[k]$ such that $(x_j^{(1)},\dotsc,x_j^{(r)})\in R^{\relstr Y}$. 
\end{defn}

In the case of graphs, $\uprelstr{k,\Y}$ corresponds to the $k$-fold iteration of the \textit{Cartesian sum} of $\Y$ with itself---an operation introduced by Ore~\cite{ore1962theory} and also known in the literature under the name of \textit{co-normal product}~\cite{frelih2013edge}.
We shall prove the following.

\begin{thm}
\label{thm_algebraic_characterisation_alice_to_bob}
$\GameAB{k,Y} = \MKCSP{\uprelstr {k, \Y}}$ for each $\sigma$-structure $\Y$ and each $k\in\N$. 
\end{thm}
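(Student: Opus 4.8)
The plan is to prove the two inclusions $\GameAB{k,\Y} \subseteq \MKCSP{\uprelstr{k,\Y}}$ and $\MKCSP{\uprelstr{k,\Y}} \subseteq \GameAB{k,\Y}$ by translating, in a bijective fashion, between perfect "Alice to Bob" strategies for the $\X,\Y$ game and homomorphisms $\X \to \uprelstr{k,\Y}$. The key bookkeeping device is that a homomorphism $g\colon \X \to \uprelstr{k,\Y}$ assigns to each element $x \in X$ a $k$-tuple $g(x) = (g(x)_1, \dots, g(x)_k) \in Y^k$, which I will read as "$k$ parallel candidate answers" for Bob, one for each possible message $j \in [k]$; and the relational condition for $\uprelstr{k,\Y}$ says exactly that for each constraint $\bx = (x^{(1)}, \dots, x^{(r)}) \in R^\X$ there is some coordinate $j$ on which the projected tuple $(g(x^{(1)})_j, \dots, g(x^{(r)})_j)$ lands in $R^\Y$ — and that coordinate $j$ is precisely the message Alice should send.

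First I would prove $\MKCSP{\uprelstr{k,\Y}} \subseteq \GameAB{k,\Y}$. Given a homomorphism $g\colon \X \to \uprelstr{k,\Y}$, define Bob's function by $b(x, j) := g(x)_j$ for $x \in X$, $j \in [k]$. For each $R \in \sigma$ and each $\bx = (x^{(1)}, \dots, x^{(r)}) \in R^\X$, the homomorphism condition guarantees a witnessing index; fix one such index and call it $m_R(\bx)$, so that $(g(x^{(1)})_{m_R(\bx)}, \dots, g(x^{(r)})_{m_R(\bx)}) \in R^\Y$. Then set $a_R(\bx) := (g(x^{(1)})_{m_R(\bx)}, \dots, g(x^{(r)})_{m_R(\bx)})$. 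Plausibility of Alice's answer is immediate from the choice of $m_R(\bx)$, and consistency holds because $a_R(\bx)_i = g(x^{(i)})_{m_R(\bx)} = b(x^{(i)}, m_R(\bx))$ — exactly condition $(ii)$ with the message $m_R(\bx)$ that Bob receives. So the strategy $(a_R, m_R, b)$ is perfect.

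Conversely, for $\GameAB{k,\Y} \subseteq \MKCSP{\uprelstr{k,\Y}}$, suppose $(a_R, m_R, b)$ is a perfect "Alice to Bob" strategy for the $\X,\Y$ game. Define $g\colon X \to Y^k$ by $g(x)_j := b(x, j)$. I claim $g$ is a homomorphism $\X \to \uprelstr{k,\Y}$. Fix $R$ and $\bx = (x^{(1)}, \dots, x^{(r)}) \in R^\X$, and let $j := m_R(\bx)$ be the message Alice sends on this question. By perfection, $a_R(\bx) \in R^\Y$ (plausibility) and, since each $x^{(i)}$ is literally a coordinate of $\bx$, consistency forces $a_R(\bx)_i = b(x^{(i)}, j) = g(x^{(i)})_j$ for every $i \in [r]$. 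Hence $(g(x^{(1)})_j, \dots, g(x^{(r)})_j) = a_R(\bx) \in R^\Y$, which is exactly the defining condition for $(g(x^{(1)}), \dots, g(x^{(r)})) \in R^{\uprelstr{k,\Y}}$ witnessed by the coordinate $j$. Therefore $g$ is a homomorphism and $\X \in \MKCSP{\uprelstr{k,\Y}}$.

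I do not expect a serious obstacle here: the content is a clean unpacking of \cref{defn_alice_power} against the definition of a perfect strategy, and the only mild subtlety is noticing that the existential quantifier "$\exists j \in [k]$" in the Alice power plays the role of Alice's message, so that the witness in one direction becomes the message function $m_R$ and the message function in the other direction becomes the witness. One small point worth stating carefully (to keep the translation genuinely instance-free in the way the introduction promises) is that nothing depends on the probability distribution over the Referee's questions, since we are only dealing with perfect strategies — consistent with the remark already made in the excerpt that deterministic strategies suffice. I would close with the observation that the two displayed constructions are mutually inverse up to the (harmless) freedom in choosing witnessing indices, which makes the correspondence between strategies and homomorphisms transparent even though only the set-level equality $\GameAB{k,\Y} = \MKCSP{\uprelstr{k,\Y}}$ is needed for the theorem.
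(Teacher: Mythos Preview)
Your proof is correct and follows essentially the same route as the paper: in both directions you identify Bob's function $b(x,j)$ with the $j$-th coordinate of the homomorphism, and you use Alice's message $m_R(\bx)$ as the existential witness $j$ in the definition of $R^{\uprelstr{k,\Y}}$. The only cosmetic difference is that the paper phrases the converse by reading off Bob's responses and filling in unused messages arbitrarily, whereas you work directly with the total function $b$; the argument is the same.
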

\begin{proof}
   We begin by proving the right-to-left inclusion for fixed (but arbitrary) $k$ and $\relstr Y$.
   Suppose that $\relstr X$ maps to $\uprelstr {k,\Y}$ via a homomorphism $h$. Alice and Bob fix one such homomorphism in the strategising phase.
   
   In the game phase Alice, given a tuple $\tuple x\in R^{\relstr X}$, computes  $h(\tuple x)$. 
   By the choice of $h$, it must be the case that $h(\tuple x)$ belongs to $ R^{\uprelstr{k,\Y}}$, and, by the definition of $\uprelstr {k, \Y}$, there exists $j$ so that $j$-th projection of $h(\tuple x)$ belongs to $R^{\relstr Y}$.
   Alice sends this projection to the Referee and sends $j$ to Bob.
   Bob receives $x$ from the Referee, receives $j$ from Alice, and sends $h(x)_j$ to the Referee.
   Their responses are consistent, so they win the game.

   For the other inclusion:
   Fix $\relstr X$ such that Alice and Bob win the game on $\relstr X$.
   Let us look at Bob's strategy: Bob is given $x\in X$ from the Referee and $j$ from Alice, and he plays $y\in Y$. 
   We use these values to define a map $h:X\rightarrow Y^k$ by simply setting $h(x)_j = y$~%
   (by the definition of the game, each $x$ will be given to Bob, %
   and if Alice never sends $j$ we set $h(x)_j$ to be any $y\in Y$). 
   It remains to verify that $h$ is a homomorphism from $\relstr X$ to $\uprelstr {k, \Y}$. 
   By the definition of a homomorphism, we fix $R^{\relstr X}$ and a tuple $\tuple x\in R^{\relstr X} $.
   These might have been chosen as the Referee's question to Alice, so there is a single $\tuple y$---Alice's answer to $\tuple x\in R^{\relstr X}$.
   Moreover, there is a fixed value $j$ which Alice sends to Bob in this case.
   Since the strategy is perfect, given $x_i$ and this $j$, Bob needs to respond with $y_i$.
   However, this implies that the projection of $h(\tuple x)$ on the $j$-th coordinate is equal to $\tuple y$, and since $\tuple y\in R^{\relstr Y}$ the map is a homomorphism.
   The proof is concluded.
\end{proof}

\begin{example}
\label{example_AliceToBob_cliques}
We illustrate \cref{thm_algebraic_characterisation_alice_to_bob} with two examples, whose correctness the reader can easily verify.
\begin{itemize}
    \item Let $\K_n$ be the clique on $n$ vertices. Then
$\uprelstr{ k, \K_n} =
\relstr K_{n^k}$.
\item Let $\NAE_{n}$ (read ``Not All Equal'') be the structure on domain $[n]$ having a single, ternary relation containing all non-constant tuples. Then 
$\uprelstr{k, \NAE_{2}} = \NAE_{2^k}$.
\end{itemize}
\end{example}

Clearly, the ability to transfer information will never disadvantage Alice and Bob. 
\cref{thm_algebraic_characterisation_alice_to_bob} makes this immediate: The map $y\mapsto (y,y)$ is a homomorphism from $\relstr Y$ to $\uprelstr {2,\Y}$, so 
$\GameAB{1,Y}=\MKCSP{\relstr Y}\subseteq\MKCSP{\uprelstr {2,\Y}}=\GameAB{2, Y}$.
More generally,
\begin{equation*}
    \MKCSP{\relstr Y} = \GameAB{1,\relstr Y}
    \subseteq \GameAB{2, \relstr Y}
    \subseteq \GameAB{s, \relstr Y} \subseteq \dotsb
\end{equation*}

On the other hand, does---say---one bit of communication help?
The next theorem states that, except for degenerate cases, it does.

Recall that the \textit{core} of a $\sigma$-structure $\mathbf{Y}$---in symbols, $\core{\Y}$---is a minimal (under inclusion) induced substructure of $\Y$ that is homomorphically equivalent to $\Y$. It is easy to check that $\core{\Y}$ is unique up to isomorphism, and any endomorphism is an automorphism~\cite{BKW17}.

\begin{thm}\label{thm:everybithelps}
    \sloppy Given a $\sigma$-structure $\Y$, we have
    $\GameAB{2, \relstr Y}=\CSP(\Y)$
    if and only if the domain of $\core{\relstr Y}$ has size $\leq 1$.
\end{thm}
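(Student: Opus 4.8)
The plan is to prove the two directions separately, using \cref{thm_algebraic_characterisation_alice_to_bob} to translate the statement $\GameAB{2,\Y}=\CSP(\Y)$ into the purely combinatorial claim that $\uprelstr{2,\Y}$ and $\Y$ have the same CSP, i.e.\ $\uprelstr{2,\Y}\to\Y$. (The reverse homomorphism $\Y\to\uprelstr{2,\Y}$, via $y\mapsto(y,y)$, always holds and gives $\CSP(\Y)\subseteq\GameAB{2,\Y}$ for free, as already noted after \cref{thm_algebraic_characterisation_alice_to_bob}.) So the whole theorem reduces to: \emph{$\uprelstr{2,\Y}\to\Y$ if and only if $\lvert\core{\Y}\rvert\le 1$.}

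For the ``if'' direction, suppose $\core{\Y}$ has a single element $y_0$. Since $\Y\to\core{\Y}$ and $\core{\Y}\to\Y$, we have $\uprelstr{2,\Y}\to\uprelstr{2,\core{\Y}}$ (the Alice power is clearly functorial in homomorphisms: a homomorphism $\Y\to\Y'$ induces one $\uprelstr{k,\Y}\to\uprelstr{k,\Y'}$ coordinatewise), so it suffices to handle $\core{\Y}$. But a one-element structure $\core{\Y}$ has, for each $R$, either $R^{\core{\Y}}=\emptyset$ or $R^{\core{\Y}}=\{(y_0,\dots,y_0)\}$; in either case $\uprelstr{2,\core{\Y}}$ is again one-element (its only possible tuple is $((y_0,y_0),\dots,(y_0,y_0))$, present exactly when $(y_0,\dots,y_0)\in R^{\core{\Y}}$), hence isomorphic to $\core{\Y}$, and in particular $\uprelstr{2,\core{\Y}}\to\core{\Y}\to\Y$. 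Composing gives $\uprelstr{2,\Y}\to\Y$.

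For the ``only if'' direction, I will prove the contrapositive: if $\lvert\core{\Y}\rvert\ge 2$ then $\uprelstr{2,\Y}\not\to\Y$. Because $\uprelstr{2,\Y}\to\Y$ would by functoriality and core-equivalence give $\uprelstr{2,\core{\Y}}\to\core{\Y}$, and conversely $\uprelstr{2,\core{\Y}}$ is an induced substructure of $\uprelstr{2,\Y}$ up to the homomorphism $\core{\Y}\hookrightarrow\Y$, it is enough to show $\uprelstr{2,\core{\Y}}\not\to\core{\Y}$ when $\core{\Y}$ has at least two elements. Write $\C=\core{\Y}$ and suppose toward a contradiction that $g:\uprelstr{2,\C}\to\C$ is a homomorphism. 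The two maps $\iota_1,\iota_2:\C\to\uprelstr{2,\C}$ given by $y\mapsto(y,y')$ for a fixed $y'$ (respectively $(y',y)$)\,---\,more carefully, I want $\iota_1(y)=(y,c)$ and $\iota_2(y)=(c,y)$ for a suitable fixed constant $c$\,---\,are homomorphisms provided $c$ can be chosen so that membership of a tuple in $R^{\C}$ implies membership of its ``$\iota_1$-image'' in $R^{\uprelstr{2,\C}}$; and indeed by \cref{defn_alice_power} the image lies in $R^{\uprelstr{2,\C}}$ as soon as the first coordinates form an $R^{\C}$-tuple, which they do. So $g\circ\iota_1$ and $g\circ\iota_2$ are endomorphisms of $\C$, hence automorphisms since $\C$ is a core. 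The key observation is then that $g$ restricted to the ``diagonal-like'' slices behaves like a permutation in each coordinate separately, and one can use this to force $g$ to collapse: taking $y_1\ne y_2$ in $C$, the element $(y_1,y_2)$ lies in no constraint forcing it away from either $(y_1,y_1)$ or $(y_2,y_2)$ in the relevant slice, and chasing a witnessing tuple through $g$ yields that $g(y_1,y_1)$ and $g(y_2,y_2)$ must coincide, contradicting that $y\mapsto g(y,y)$ (another endomorphism, using $y\mapsto(y,y)$) is an automorphism. I expect \textbf{this last collapsing argument to be the main obstacle}: one must exhibit, for any core with $\ge 2$ elements, an explicit tuple in some relation of $\uprelstr{2,\C}$ whose $g$-image is forced to be constant, and make precise why a core structure cannot admit such a relation-preserving map\,---\,the cleanest route is probably to pick any tuple $\bx^{(1)},\dots,\bx^{(r)}\in R^{\C}$ with $R^{\C}\ne\emptyset$ (which exists since $\lvert\C\rvert\ge 2$ forces $\C$ to have at least one nonempty relation, else $\C$ maps to a point) and compare the images of $(\bx^{(1)}_t,y)_{t}$ and $(\bx^{(1)}_t,y')_{t}$ for $y\ne y'$, both of which lie in $R^{\uprelstr{2,\C}}$ via the first coordinate, to show $g$ ignores the second coordinate, then symmetrically that it ignores the first, whence $g$ is constant\,---\,impossible for a nonconstant core. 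I will need to handle carefully the degenerate sub-case where every relation of $\C$ is empty (then $\C$ has no edges at all and its core has size $1$, contradiction), so that a nonempty relation is always available to run the argument.
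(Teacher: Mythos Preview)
Your reduction to the core and the ``if'' direction are fine, and observing that for each fixed $c$ the slice maps $y\mapsto g(y,c)$ and $y\mapsto g(c,y)$ are endomorphisms---hence automorphisms---of $\mathbf{C}=\core{\Y}$ is a correct and useful start. The gap is exactly where you flag it: the ``collapsing argument'' does not go through. Knowing that $((x_1,y),\dots,(x_r,y))$ and $((x_1,y'),\dots,(x_r,y'))$ both lie in $R^{\uprelstr{2,\mathbf{C}}}$ (via the first coordinate) only tells you that both $g$-images lie in $R^{\mathbf{C}}$; it gives no reason whatsoever for them to coincide. In fact your own slice observation shows $y\mapsto g(x_i,y)$ is a bijection, so the second coordinate is \emph{not} ignored, and the route ``$g$ ignores one coordinate, then the other, hence is constant'' cannot be completed from this information. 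The slice argument produces at most $|C|$ automorphisms (the map $c\mapsto(y\mapsto g(y,c))$ is injective, as you can check), which is far from the full symmetric group and not enough to finish.

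The paper's argument extracts a much stronger symmetry from the assumed homomorphism $g:\uprelstr{2,\mathbf{C}}\to\mathbf{C}$. For an \emph{arbitrary} permutation $\sigma$ of $C$, consider the substructure of $\uprelstr{2,\mathbf{C}}$ induced on $\{(y,\sigma^{-1}(y)):y\in C\}$; it receives a homomorphism from $\mathbf{C}$ (via the first coordinate) and maps to $\mathbf{C}$ via $g$, and since it has exactly $|C|$ elements and $\mathbf{C}$ is a core, it must be isomorphic to $\mathbf{C}$. Counting tuples then forces $\sigma$ to be an automorphism. So \emph{every} permutation of $C$ is an automorphism. From here one picks a relation $R^{\mathbf{C}}$ containing no constant tuple (if every relation contained one, full symmetry would make every constant map an endomorphism, forcing $|C|=1$), takes $\by\in R^{\mathbf{C}}$ with the minimal number of distinct entries, and finds by pigeonhole two points $(b,c)\ne(d,e)$ in $C^2$ with $g(b,c)=g(d,e)$. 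Using the full symmetry one permutes $\by$ so that two of its distinct values become $b$ and $d$, extends to a tuple of $R^{\uprelstr{2,\mathbf{C}}}$ carrying $(b,c)$ and $(d,e)$ at those positions, and applies $g$: the image lies in $R^{\mathbf{C}}$ but has strictly fewer distinct entries than $\by$, contradicting minimality. The crucial step you are missing is the passage from ``some slices are automorphisms'' to ``all permutations are automorphisms''; your proposed route does not supply it.
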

\noindent Before proving this, we establish a simple lemma.
\begin{lem}
\label{lem_basic_alice}
    For every $k\in\N$, if $\relstr X \rightarrow \relstr Y$ then $\uprelstr{k,\relstr X}\rightarrow \uprelstr{k,\relstr Y}$.
    In particular $\uprelstr {k, \relstr Y}$ and $\uprelstr{k, \core{\relstr Y}}$
    are homomorphically equivalent (and, thus define the same CSP).
\end{lem}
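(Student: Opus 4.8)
The plan is to prove the two assertions of \cref{lem_basic_alice} in order: first the monotonicity statement, then deduce the homomorphic equivalence as a corollary.

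\textbf{Monotonicity.} Suppose $f:\relstr X\to\relstr Y$ is a homomorphism. I would define $f^{k}:X^k\to Y^k$ by applying $f$ coordinatewise, i.e.\ $f^{k}(x_1,\dots,x_k)=(f(x_1),\dots,f(x_k))$, and check that $f^{k}:\uprelstr{k,\relstr X}\to\uprelstr{k,\relstr Y}$ is a homomorphism. Fix a relation symbol $R$ of arity $r$ and a tuple $(\tuple x^{(1)},\dots,\tuple x^{(r)})\in R^{\uprelstr{k,\relstr X}}$. By \cref{defn_alice_power} there is some $j\in[k]$ with $(x^{(1)}_j,\dots,x^{(r)}_j)\in R^{\relstr X}$. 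Since $f$ is a homomorphism, $(f(x^{(1)}_j),\dots,f(x^{(r)}_j))\in R^{\relstr Y}$. But $f(x^{(i)}_j)$ is exactly the $j$-th coordinate of $f^{k}(\tuple x^{(i)})$, so the same witness $j$ certifies $(f^{k}(\tuple x^{(1)}),\dots,f^{k}(\tuple x^{(r)}))\in R^{\uprelstr{k,\relstr Y}}$. This is entirely routine; there is no real obstacle, the only thing to be careful about is bookkeeping the indices $i\in[r]$ (the tuple position in the relation) versus $j\in[k]$ (the coordinate of the Alice power).

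\textbf{Homomorphic equivalence of $\uprelstr{k,\relstr Y}$ and $\uprelstr{k,\core{\relstr Y}}$.} By definition of the core, there are homomorphisms $\relstr Y\to\core{\relstr Y}$ and $\core{\relstr Y}\to\relstr Y$ (the latter being the inclusion of the induced substructure). Applying the monotonicity just established to each of these two maps yields homomorphisms $\uprelstr{k,\relstr Y}\to\uprelstr{k,\core{\relstr Y}}$ and $\uprelstr{k,\core{\relstr Y}}\to\uprelstr{k,\relstr Y}$, which is precisely the assertion that the two structures are homomorphically equivalent. Finally, homomorphically equivalent templates define the same CSP: this is the standard fact that $\relstr X\to\relstr A$ iff $\relstr X\to\relstr B$ whenever $\relstr A$ and $\relstr B$ are homomorphically equivalent, obtained by composing with the two connecting homomorphisms. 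Hence $\MKCSP{\uprelstr{k,\relstr Y}}=\MKCSP{\uprelstr{k,\core{\relstr Y}}}$, and combined with \cref{thm_algebraic_characterisation_alice_to_bob} this gives $\GameAB{k,\relstr Y}=\GameAB{k,\core{\relstr Y}}$.

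Since every step reduces to either the coordinatewise application of a homomorphism or the elementary properties of cores and homomorphic equivalence recalled just before the lemma, I do not anticipate any genuine difficulty; if anything, the ``main obstacle'' is purely notational, namely keeping the two index ranges $[r]$ and $[k]$ distinct throughout the verification that $f^{k}$ preserves relations.
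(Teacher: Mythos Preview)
Your proof is correct and follows exactly the approach the paper takes: the paper simply says to build the homomorphism coordinatewise and that the second claim is then immediate, which is precisely what you carry out in detail.
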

\begin{proof}
    For the first claim, it suffices to build a homomorphism coordinatewise.
    The second claim is then immediate.
\end{proof}
\begin{proof}[Proof of \cref{thm:everybithelps}]
    Note that the condition $\GameAB{2, \relstr Y}=\CSP(\Y)$ is equivalent to $\uprelstr {2,\Y}\to\relstr Y$, since $\Y\to\uprelstr {2,\Y}$ always holds. Moreover,
    by~\cref{lem_basic_alice}, we can substitute $\relstr Y$ with $\core{\relstr Y}$, and the games do not change.
    
    The implication from right to left is obvious now: If $|Y|\leq 1$ then $\uprelstr{2, \relstr Y}$ is isomorphic to $\relstr Y$ and we gain nothing by communication.

    For the converse implication, we fix a core $\relstr Y$ and assume that $\uprelstr{2, \relstr Y}\to\relstr Y$. 
    The first step is to notice that every permutation of $Y$ is an isomorphism of $\relstr Y$:
    Indeed, fix such a permutation $\sigma$ and consider elements of $\uprelstr{ 2,\relstr Y}$ of the form $(y,\sigma^{-1}(y))$.
    Let $\relstr Y'$ be substructure of $\uprelstr {2, \relstr Y}$ induced on these elements.
    Note that $\relstr Y\rightarrow \relstr Y'$~(by the first coordinate) and that $\relstr Y'\rightarrow \relstr Y$~%
    (since even the larger structure $\uprelstr{2,\relstr Y}$ homomorphically maps to $\Y$, by assumption).
    Since $\relstr Y$ and $\relstr Y'$ have the same number of elements,
    and since $\relstr Y$ is a core, the structures need to be isomorphic.
    This, in turn, implies that the number of tuples in the relations of $\relstr Y$ and $\relstr Y'$ needs to be the same.
    In particular, if $\sigma^{-1}(\tuple y)\in R^{\relstr Y}$ then $\tuple y\in R^{\relstr Y}$.
    Applying $\sigma$ twice we find that $\tuple y\in R^{\relstr Y}$ implies $\sigma(\tuple y)\in R^{\relstr Y}$; i.e., $\sigma$ is an isomorphism of $\Y$.

    As a consequence of the discussion above, if a relation $R^{\relstr Y}$ contains a constant tuple, then it contains all constant tuples.
    \sloppy Furthermore, if every relation contains every constant tuple, then $\core{\relstr Y}$~%
    (which is equal to $\relstr Y$) is a single element, and the conclusion follows.
    It remains to derive a contradiction in the remaining case---when $R^{\relstr Y}$ has
    no constant tuples for some symbol $R$.

    Let $\tuple y\in R^{\relstr Y}$ have a minimal number of distinct elements among the tuples of $R^{\relstr Y}$.
    Find two different elements of $\uprelstr {2,Y}$:
    $(b,c)$ and $(d,e)$ so that $h(b,c)=h(d,e)$.
    Note that these must exist just by counting the domain sizes.
    Assume, without loss of generality, that they differ on the first coordinate.
    We choose two different elements that appear (perhaps multiple times) in $\tuple y$ and substitute the appearances of the first one with $b$ and of the second one with $d$.
    The new tuple is still in $R^{\relstr Y}$ by the symmetry from the previous paragraph.
    Next we extend the new tuple to a tuple $\tuple y'\in R^{\uprelstr{2, \relstr Y}}$.
    We conclude that $h(\tuple y')\in R^{\relstr Y}$ has fewer distinct elements than $\tuple y$---a contradiction.
    This finishes the proof.
\end{proof}

\subsection{Bob messages Alice}
As noted in~\cref{sec_strategies}, switching the direction of the communication between Alice and Bob substantially changes the power of the corresponding strategies.
This will be evident by looking at the structure capturing strategies of the second type, described below, which turns out to be structurally different from the one of~\cref{subsec_alice_messages_bob}. For example, this time, the domain size of the template $\Y$ grows linearly rather than exponentially in $k$. As we shall see in~\cref{sec_entangled_vs_classical}, this fact will force the number of communication bits necessary to simulate a quantum strategy to be much larger in the ``Bob to Alice'' case.

\begin{defn}
\label{defn_bob_power}
    Let $\relstr Y$ be a $\sigma$-structure and $k$ a positive integer.
    The $k$-fold \textit{Bob power} of $\Y$---in symbols, $\uprelstrbob{k,\Y}$---is the $\sigma$-structure on the universe $[k] \times Y$ such that, for every relation symbol $R$ (say of arity $r$), a tuple $((i_1, y_1), \dotsc, (i_r, y_r))$ belongs to $R^{\uprelstrbob{k, \Y}}$ if and only if for each $s \in [k]$ there exists some tuple $\tilde \by=(\tilde y_1,\dotsc,\tilde y_r)\in R^{\relstr Y}$ such that the implication $i_j = s \implies y_j = \tilde y_j$ holds for each $j\in[r]$.
\end{defn}
This operation is a generalisation of the standard \textit{join} operation for undirected graphs. In particular, when $\Y$ is an undirected graph with no isolated vertices, $\uprelstrbob{2,\Y}$ is exactly the join of two copies of $\Y$.
Like in the case of $\GameAB{k,Y}$, we now show that $\GameBA{k,Y}$ is a CSP for a certain template, corresponding to the structure defined above. The proof of the next result is deferred to~\cref{app_omitted_proofs}.

\begin{restatable}{thm}{thmalgebraiccharacterisationbobtoalice}
\label{thm_algebraic_characterisation_bob_to_alice}
$\GameBA{k,Y} = \CSP({\uprelstrbob{k, \Y}})$ for each $\sigma$-structure $\Y$ and each $k\in\N$.
\end{restatable}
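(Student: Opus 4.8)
The plan is to mirror the proof of \cref{thm_algebraic_characterisation_alice_to_bob}, establishing the two inclusions $\CSP(\uprelstrbob{k,\Y}) \subseteq \GameBA{k,Y}$ and $\GameBA{k,Y} \subseteq \CSP(\uprelstrbob{k,\Y})$ separately, with the homomorphism playing the role of the joint strategy. The key dictionary is: a vertex $(i,y)$ of $\uprelstrbob{k,\Y}$ encodes ``Bob answers $y$ and sends message $i$''. So given a homomorphism $h\colon \X \to \uprelstrbob{k,\Y}$, one reads off Bob's strategy by writing $h(x) = (m(x), b(x))$, i.e.\ $m\colon X \to [k]$ is Bob's message function and $b\colon X \to Y$ is Bob's reply function; conversely, any perfect ``Bob to Alice'' strategy defines such a map $h$.

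First I would prove the easier direction, $\CSP(\uprelstrbob{k,\Y}) \subseteq \GameBA{k,Y}$. Suppose $h\colon \X \to \uprelstrbob{k,\Y}$ is a homomorphism, fixed by the players in the strategising phase; write $h(x)=(m(x),b(x))$ as above. Bob, on question $x$, answers $b(x)$ and sends $m(x)$ to Alice. Alice, on question $\tuple x = (x_1,\dots,x_r) \in R^\X$ and incoming message $s \in [k]$, knows that $h(\tuple x) = ((m(x_1),b(x_1)),\dots,(m(x_r),b(x_r))) \in R^{\uprelstrbob{k,\Y}}$, so by \cref{defn_bob_power} applied with this particular $s$ there is a tuple $\tilde\by \in R^\Y$ with $m(x_j)=s \implies b(x_j)=\tilde y_j$; Alice replies with $\tilde\by$. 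Plausibility holds since $\tilde\by\in R^\Y$, and consistency holds because whenever $x_j = x$ is the coordinate matching Bob's question, Bob's message was $s=m(x)=m(x_j)$, so indeed $\tilde y_j = b(x_j) = b(x)$. (One must also note the measure-zero caveat that if some $s$ is never sent as Alice's incoming message in the actual game, the corresponding clause imposes no real constraint — but \cref{defn_bob_power} still demands a witnessing $\tilde\by$, which only makes the definition more restrictive, so nothing breaks.)

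For the converse, fix $\X$ with a perfect ``Bob to Alice'' strategy $(b, m, (a_R)_R)$ and define $h\colon X \to [k]\times Y$ by $h(x) = (m(x), b(x))$. To check $h$ is a homomorphism, fix $R$ (arity $r$) and $\tuple x = (x_1,\dots,x_r) \in R^\X$; we must show $h(\tuple x) \in R^{\uprelstrbob{k,\Y}}$, i.e.\ for each $s\in[k]$ produce a witness $\tilde\by \in R^\Y$. The tuple $\tuple x$ could have been the Referee's question to Alice, and $s$ could have been the message Alice received from Bob; since the Referee is free to send Bob \emph{any} single coordinate $x_j$, and the strategy is perfect, Alice's reply $\tilde\by := a_R(\tuple x, s)$ lies in $R^\Y$ and must satisfy $\tilde y_j = b(x_j)$ for every coordinate $j$ with $m(x_j) = s$ (otherwise the Referee would send Bob exactly $x_j$, Bob would answer $b(x_j)$ and send $m(x_j)=s$, and consistency would fail). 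Hence $\tilde\by$ is the required witness for this $s$, and since $s$ was arbitrary, $h(\tuple x)\in R^{\uprelstrbob{k,\Y}}$.

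The main subtlety — the only place this proof differs in spirit from the ``Alice to Bob'' case — is the universal quantifier over $s\in[k]$ hidden inside \cref{defn_bob_power}: in the game, only messages $s$ that Bob actually sends for some $x$ are ``realised'', yet the structure $\uprelstrbob{k,\Y}$ demands a witness for \emph{all} $s$. The resolution is that this asymmetry only ever helps: unrealised values of $s$ make $\CSP(\uprelstrbob{k,\Y})$ potentially \emph{smaller} than the set of winnable instances a priori, but in the forward direction we can always pad the strategy (Alice declares, for unused $s$, the constant strategy picking any fixed tuple of $R^\Y$), and in the backward direction the strategy already supplies a valid reply $a_R(\tuple x, s)$ for every $s$ regardless of whether it is ever triggered, so the witness exists for free. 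I would make this remark explicitly so the reader is not troubled by \cref{example_K2_vs_D2}, where the failure of winnability is correctly reflected by $\X \not\to \uprelstrbob{k,\Y}$ for all $k$.
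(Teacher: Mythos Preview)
Your proposal is correct and follows essentially the same approach as the paper: in both directions the map $h(x)=(m(x),b(x))$ is the bridge, and Alice's reply $a_R(\tuple x,s)$ serves as the witness $\tilde\by$ required by \cref{defn_bob_power}. Your extra paragraph on the ``unrealised $s$'' subtlety is a nice addition the paper omits, though note that the padding fix actually belongs to the backward (strategy $\to$ homomorphism) direction rather than the forward one---when $s$ is outside the range of $m$ the implication is vacuous and any tuple of $R^\Y$ serves as witness.
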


\begin{example}
    Recall the structures $\K_n$ and $\NAE_n$ from~\cref{example_AliceToBob_cliques}. The following straightforward facts hold.
\begin{itemize}
    \item $\uprelstrbob{ k, \K_n} = \relstr K_{nk}$.
    \item $\uprelstrbob{k, \NAE_2} = \NAE_{2k}$.
    \item Let $\mathbf{RB}_{n}$ (read ``Rainbow'') be the structure on domain $[n]$ having a single, ternary relation containing all tuples that contain three distinct vertices. Then 
$\uprelstrbob{k,\mathbf{RB}_3} = \mathbf{RB}_{3k}$.
\end{itemize}
\end{example}

Also in this communication setting, we have the following chain of inclusions:
\begin{equation*}
    \MKCSP{\relstr Y} = \GameBA{1,\relstr Y}
    \subseteq \GameBA{2, \relstr Y}
    \subseteq \GameBA{s, \relstr Y} \subseteq \dotsb
\end{equation*}

However, does---say---one bit of communication help the provers?
Unlike in the ``Alice to Bob'' case, not always.
Consider the digraph $\mathbf{D}_2$ consisting of a single directed edge. 
Then $\uprelstrbob{k,\mathbf{D}_2}$ is homomorphically equivalent to $\mathbf{D}_2$ for any $k$.

In fact, we can characterise the structures $\Y$ for which one bit of communication gives Alice and Bob an advantage.

\begin{restatable}[Proved in \cref{app_omitted_proofs}]{thm}{thmwhenbithelpsbob}
\label{thm:whenbithelpsbob}
    \sloppy Given a $\sigma$-structure $\Y$, we have $\GameBA{2, \relstr Y}=\CSP(\Y)$
    if and only if every relation in $\core\Y$ is a Cartesian product of some subsets of $Y$.
\end{restatable}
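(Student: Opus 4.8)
The plan is to follow closely the structure of the proof of \cref{thm:everybithelps}, using \cref{thm_algebraic_characterisation_bob_to_alice} to translate the statement into a purely combinatorial question about homomorphisms. First I would observe that $\GameBA{2,\relstr Y}=\CSP(\Y)$ is equivalent to $\uprelstrbob{2,\Y}\to\Y$ (the reverse homomorphism $\Y\to\uprelstrbob{2,\Y}$, say $y\mapsto(1,y)$, always exists), and that, by the Bob-power analogue of \cref{lem_basic_alice} (which holds for the same reason—build the homomorphism coordinatewise on the $Y$-component), we may replace $\Y$ by $\core\Y$ throughout without changing either side. So from now on $\Y$ is a core, and the task is to show that $\uprelstrbob{2,\Y}\to\Y$ holds if and only if every relation $R^\Y$ is a box, i.e.\ $R^\Y=S_1\times\dots\times S_{\ar(R)}$ for some $S_i\subseteq Y$.

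For the easy direction, suppose every $R^\Y$ is such a Cartesian product. Then I would check directly from \cref{defn_bob_power} that $\uprelstrbob{2,\Y}$ is homomorphically equivalent to $\Y$: the map $(i,y)\mapsto y$ is a homomorphism $\uprelstrbob{2,\Y}\to\Y$, since if $((i_1,y_1),\dots,(i_r,y_r))\in R^{\uprelstrbob{2,\Y}}$ then taking $s=i_j$ in the defining condition yields a tuple of $R^\Y$ whose $j$-th coordinate is $y_j$, and because $R^\Y$ is a box this forces $y_j\in S_j$ for every $j$, hence $(y_1,\dots,y_r)\in R^\Y$. (This argument in fact only needs one value of $s$ per coordinate, which is why boxes behave so well.) For the converse, assume $\uprelstrbob{2,\Y}\to\Y$ via some homomorphism $h$; I want to conclude each $R^\Y$ is a box. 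The natural move is: fix a relation symbol $R$ of arity $r$ and a coordinate $j\in[r]$, let $S_j$ be the projection of $R^\Y$ onto coordinate $j$, and show $R^\Y=S_1\times\dots\times S_r$. Given arbitrary $y_j\in S_j$ for each $j$, pick for each $j$ a witnessing tuple $\tilde\by^{(j)}\in R^\Y$ with $\tilde y^{(j)}_j=y_j$. Now form the element $\bz=((i_1,y_1),\dots,(i_r,y_r))$ of $[2]\times Y$ where I cleverly use the two available ``sheets'' of the Bob power to encode, for each coordinate, that it should agree with $y_j$; the point is to arrange the $i_j$'s and an auxiliary choice so that $\bz\in R^{\uprelstrbob{2,\Y}}$, because then $h(\bz)\in R^\Y$, and one reads off that the entries of $h(\bz)$ lie in the right sets. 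Since $h$ restricted to one sheet is (as in the proof of \cref{thm:everybithelps}) forced to be essentially an automorphism of the core $\Y$—because $\Y\to\uprelstrbob{2,\Y}\to\Y$ and $\Y$ is a core, any such composite is an automorphism—the projections are controlled, and a counting/minimality argument on the number of tuples in the relations (mirroring the ``minimal number of distinct elements'' argument in \cref{thm:everybithelps}) closes the gap.

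The main obstacle I anticipate is exactly this last bijectivity/rigidity argument: with only $k=2$ sheets available one has limited room, and unlike the Alice case the defining condition of $\uprelstrbob{k,\Y}$ quantifies universally over sheets, so producing a single tuple of $\uprelstrbob{2,\Y}$ that simultaneously witnesses membership for \emph{both} values of $s$ requires choosing, for the sheet we are not ``targeting'', a tuple of $R^\Y$ that is consistent with the $y_j$'s on the coordinates assigned to it. Making this choice globally consistent across all coordinates is where the non-box case must break down, and conversely where the box hypothesis is precisely what makes it always possible. I would handle this by an induction on the arity, or on the total number of coordinates $j$ for which $S_1\times\dots\times S_r\setminus R^\Y$ is nonempty in the relevant fibre, extracting from a hypothetical ``missing'' tuple $(y_1,\dots,y_r)\in(S_1\times\dots\times S_r)\setminus R^\Y$ an element of $\uprelstrbob{2,\Y}$ whose $h$-image is a tuple of $R^\Y$ with strictly fewer distinct entries than some chosen minimal tuple, contradicting minimality—the same engine as in \cref{thm:everybithelps}, now adapted to the two-sheet structure. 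The remaining steps (verifying the equivalences with $\core\Y$, the trivial homomorphism $\Y\to\uprelstrbob{2,\Y}$, and the entrywise-homomorphism lemma) are routine and can be dispatched quickly.
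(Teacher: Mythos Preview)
Your setup is right: reducing to the core via the Bob-power analogue of \cref{lem_basic_alice}, the easy direction via $(i,y)\mapsto y$, and the observation that each sheet map $h_s\colon y\mapsto h(s,y)$ is an endomorphism of the core and hence an automorphism. These all match the paper.

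The gap is in the hard direction. Your plan is to port the ``minimal number of distinct entries'' engine from \cref{thm:everybithelps}, but that engine runs on a fact you do not have here: in the Alice case one first shows that \emph{every} permutation of $Y$ is an automorphism of $\Y$, and it is this full symmetry that lets one relabel a minimal tuple and produce a contradiction. In the Bob case only $h_1$ and $h_2$ (and the group they generate) are guaranteed to be automorphisms, so there is no reason a relabelling that merges two values should stay inside $R^\Y$. Your vague ``cleverly choose the $i_j$'' step also runs into the universal quantifier in \cref{defn_bob_power}: for a tuple to lie in $R^{\uprelstrbob{2,\Y}}$ you need a witness in $R^\Y$ for \emph{each} sheet, and with arbitrary $y_j$'s drawn from the coordinate projections there is no single split of $[r]$ into two sheets that makes both witnesses exist simultaneously.

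The paper's actual argument avoids this by an induction on the coordinate index $i$ (not on arity and not by minimality). Pick $M$ with $h_1^M=h_2^M=\id$. Assuming $(y_1,\dots,y_i)$ lies in the projection $\pi_{1..i}R^\Y$ and $y_{i+1}\in\pi_{i+1}R^\Y$, put $h_1^{M-1}(y_1),\dots,h_1^{M-1}(y_i)$ on sheet $1$ and $h_2^{M-1}(y_{i+1})$ on sheet $2$ (the remaining coordinates can be filled from either witness). This tuple \emph{is} in $R^{\uprelstrbob{2,\Y}}$: the sheet-$1$ witness is $h_1^{M-1}$ applied to an extension of $(y_1,\dots,y_i)$, and the sheet-$2$ witness is $h_2^{M-1}$ applied to a tuple with $y_{i+1}$ in position $i+1$. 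Applying $h$ then returns $(y_1,\dots,y_{i+1})$ as a prefix of a tuple in $R^\Y$. The key idea you are missing is this pre-twisting by $h_s^{M-1}$ so that $h$ undoes it; that is what makes the two-sheet constraint satisfiable one coordinate at a time.
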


    Observe that the condition in~\cref{thm:whenbithelpsbob} is equivalent to the fact that $\CSP(\Y)=\CSP(\Y')$ for some $\Y'$ having only unary relations~\cite{BKW17}. Such CSPs are trivially tractable, e.g. using the 1-minimality algorithm from \cite{BKW17}.
As a consequence, for any CSP that is not tractable in polynomial time, classical strategies with extra communication provide an actual advantage with respect to classical strategies with no communication. The same fact was observed for quantum strategies (over a Hilbert space of dimension at least $3$) in~\cite[Cor.17]{ciardo_quantum_minion}.  

\begin{cor}
If $\CSP(\Y)$ is not tractable in polynomial time, then any additional communication gives Alice and Bob advantage. In other words, in this case, $\CSP(\Y) \varsubsetneq \GameAB{2,\Y}$ and $\CSP(\Y) \varsubsetneq\GameBA{2,\Y}$.
\end{cor}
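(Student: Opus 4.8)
The plan is to derive the corollary directly from Theorems~\ref{thm:everybithelps} and~\ref{thm:whenbithelpsbob}, together with the known dichotomy-style characterisations of templates whose CSP is tractable via local consistency. The two displayed chains of inclusions already give $\CSP(\Y)\subseteq\GameAB{2,\Y}$ and $\CSP(\Y)\subseteq\GameBA{2,\Y}$, so the only thing to prove is that both inclusions are strict whenever $\CSP(\Y)$ is not tractable in polynomial time. I would argue the contrapositive: if \emph{either} inclusion is an equality, then $\CSP(\Y)$ is tractable.

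First I would handle the ``Bob to Alice'' case. By~\cref{thm:whenbithelpsbob}, $\GameBA{2,\Y}=\CSP(\Y)$ forces every relation of $\core\Y$ to be a Cartesian product of subsets of $Y$. As already noted in the paragraph following that theorem, this is equivalent (via~\cite{BKW17}) to $\CSP(\Y)=\CSP(\Y')$ for some structure $\Y'$ with only unary relations, and such CSPs are solved in polynomial time by the $1$-minimality (arc consistency / local consistency) algorithm, again from~\cite{BKW17}. Hence equality in the Bob case implies tractability, so non-tractability gives $\CSP(\Y)\varsubsetneq\GameBA{2,\Y}$.

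Next I would handle the ``Alice to Bob'' case. By~\cref{thm:everybithelps}, $\GameAB{2,\Y}=\CSP(\Y)$ forces $\core\Y$ to have a domain of size at most $1$. If $\core\Y$ is empty then $\CSP(\Y)$ is the trivial ``all-\NO'' problem; if $\core\Y$ is a single element, $\CSP(\Y)$ is decided by checking whether $\X$ admits any homomorphism into a one-element structure, which amounts to a simple syntactic condition (no relation symbol with a nonempty relation in $\X$ maps to an empty relation of $\core\Y$) and is clearly decidable in polynomial time. Either way, equality in the Alice case implies tractability. Thus non-tractability gives $\CSP(\Y)\varsubsetneq\GameAB{2,\Y}$, completing the proof.

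I do not expect a serious obstacle here: the corollary is a packaging of the two preceding theorems, and the only non-routine ingredient is recognising that both ``equality'' conditions land inside well-understood tractable classes. The mildest point of care is making sure the one-element (and empty) core cases are genuinely polynomial-time---but these are degenerate CSPs decidable by inspection, and in particular are subsumed by the same $1$-minimality algorithm that handles the purely-unary case, so a single appeal to~\cite{BKW17} covers both directions uniformly. I would phrase the final proof in two short paragraphs, one per communication direction, each invoking the relevant characterisation theorem and then the tractability of the resulting template class.
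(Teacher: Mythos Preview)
Your proposal is correct and matches the paper's own reasoning: the corollary is presented there as an immediate consequence of \cref{thm:everybithelps} and \cref{thm:whenbithelpsbob}, with the observation (made in the paragraph just before the corollary) that the Cartesian-product condition yields a unary-relation CSP solvable by $1$-minimality, and the $|\core\Y|\le 1$ case is trivially tractable. There is nothing to add.
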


%
\begin{rem}
\label{rem_bob_sometimes_stronger}
Alice messaging Bob usually appears to be a better option for winning the game than Bob messaging Alice. For example, as noted in~\cref{example_K2_vs_D2}, if $\mathbf{D}_2$ is a single directed edge and $\K_2$ is a single undirected edge, no amount of ``Bob to Alice'' communication will allow them to win, while even one bit of ``Alice to Bob'' communication is enough. Indeed, it is easy to see that $\K_2\to\GameAB{2, \relstr \mathbf{D}_2}$ while $\K_2\not\to\GameBA{k, \relstr \mathbf{D}_2}$ for any $k\in\N$.
 
On the other hand, surprisingly, there are cases where Bob messaging Alice is a better choice. Consider two Boolean structures $\X$ and $\Y$ with a single $4$-ary relation $R^\X=\{(0, 0, 1, 1)\}$ and $R^\Y=\{(0,1,1,1),(1,1,1,0)\}$.
It is not hard to verify that $\X\to\uprelstrbob {2,\Y}$.
%
%
%
In terms of strategy, suppose Bob is allowed to send a single bit (i.e., $0$ or $1$) to Alice. Bob always answers $1$ to the Referee and sends his input to Alice; Alice responds with $(1, 1, 1, 0)$ upon receiving $0$ from Bob, and $(0, 1, 1, 1)$ otherwise, thus winning the game. However, it is straightforward to check that $\X\not\to\uprelstr{k, \Y}$ for any $k\in\N$.
\end{rem}


\section{Quantum vs. classical strategies}
\label{sec_entangled_vs_classical}

\sloppy In this section, we prove our main results---\cref{thm_main_alice_general} and~\cref{thm_main_bob_general}---which state that any perfect quantum strategy can be simulated by a perfect classical strategy with an ``Alice to Bob'' or ``Bob to Alice'' message whose size depends only on $(i)$ the dimension of the shared quantum system appearing in the quantum strategy, and $(ii)$ structural parameters of the template structure $\Y$. We stress that it is the lack of dependence of the message size on the CSP instance to make our results non-trivial: Any quantum measurement scenario in which a player performs one of $m$ different measurements can be easily perfectly simulated by a classical protocol augmented with $\log_2 m$ communication bits, as noted in~\cite[Thm.~1]{brassard1999cost}. 

To prove our main results, we make use of the characterisations given in \cref{sec_combinatorial_characterisation_strategies}.
It turns out that the ``Alice to Bob'' protocol is substantially more effective at simulating quantum strategies, as the required message length is exponentially smaller in this case. 
While the results hold for arbitrary CSPs, we first consider the special case of digraphs---i.e., structures having
a unique relation of arity two. This is for two reasons: First, the treatment in the digraph case is simpler and still provides intuition on the general case; second, for digraphs, we are able to achieve an explicit and rather tight upper bound on the amount of classical communication  needed to simulate quantum strategies. In particular, in the ``Alice to Bob'' case, for a given template $\Y$ the number of required communication bits matches asymptotically the number of shared qubits in the quantum strategy that is being simulated (see part $(i)$ of~\cref{thm_main_digraphs}).

\subsection{Binary case}

We establish the following result, which is a refinement of ~\cref{thm_main_alice_general} and~\cref{thm_main_bob_general} for the digraph case.
We denote by $\chi(\Y)$ the (classical) chromatic number of $\Y$---i.e., the minimum $n\geq 2$ such that $\Y\to\K_n$.

\begin{thm}
\label{thm_main_digraphs}
For each digraph $\Y$ and each Hilbert space $\HH$ of dimension $d$ it holds that
\begin{itemize}
    \item[$(i)$] $\CSP^\HH(\Y)\subseteq\GameAB{k, \Y}$ with $k=d+\log\chi(\Y)+\mathcal{o}(d)+\mathcal{o}(\log\chi(\Y))$;
    \item[$(ii)$] $\CSP^\HH(\Y)\subseteq\GameBA{k, \Y}$ with $k=\chi(\Y) \cdot \mathcal O(\exp(d))$.
\end{itemize}
\end{thm}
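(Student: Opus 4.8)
## Proof proposal for Theorem \ref{thm_main_digraphs}

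\textbf{Overall strategy.} By \cref{thm_quantum_csp_is_csp}, $\CSP^\HH(\Y)=\CSP(\Y^\HH)$, and by \cref{thm_algebraic_characterisation_alice_to_bob} and \cref{thm_algebraic_characterisation_bob_to_alice} the right-hand sides are $\CSP(\uprelstr{k,\Y})$ and $\CSP(\uprelstrbob{k,\Y})$ respectively. Since $\to$ is transitive, it suffices to exhibit homomorphisms $\Y^\HH\to\uprelstr{k,\Y}$ and $\Y^\HH\to\uprelstrbob{k,\Y}$ with $k$ as claimed; this is the promised instance-free, template-level rounding. The plan is therefore to construct a finite colouring of the (infinite) domain $\PVM_\HH(Y)$ of $\Y^\HH$, i.e.\ a partition of the projective measurements into finitely many classes that is ``coarse enough'' to land in the target structure but ``fine enough'' to respect the single binary relation $E^{\Y^\HH}$.

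\textbf{Step 1: reduce $\Y$ to a clique.} Since $\Y$ is a digraph, $\Y\to\K_{\chi(\Y)}$; write $n=\chi(\Y)$. I would first check that this lifts to the quantum level, i.e.\ $\Y^\HH\to\K_n^\HH$, which follows from functoriality of the $(-)^\HH$ construction (a homomorphism $\Y\to\Z$ induces one $\Y^\HH\to\Z^\HH$ by pushing forward PVMs along the map on domains — formula \eqref{eqn_defining_quantum_structure} is preserved because coarsening the outcome set of $\mathcal F$ accordingly still yields a valid PVM). Dually, $\uprelstr{k',\K_n}=\K_{n^{k'}}$ and $\uprelstrbob{k',\K_n}=\K_{n k'}$ by \cref{example_AliceToBob_cliques}, and by \cref{lem_basic_alice} together with its Bob-analogue these only depend on $\Y$ through its core, so we may simply target $\K_n$. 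Hence it is enough to colour $\PVM_\HH(\{1,\dots,n\})$ — equivalently, ordered $n$-tuples of mutually orthogonal projectors summing to $I$ — with finitely many colours so that any two measurements $\mathcal E,\mathcal F$ satisfying the defining relation of $\K_n^\HH$ receive ``compatible'' colours in the sense required by the clique target. Concretely, $(\mathcal E^{(1)},\dots,\mathcal E^{(r)})\in E^{\K_n^\HH}$ forces, via \eqref{eqn_defining_quantum_structure} with $R$ the edge relation, that the supports of $\mathcal E$ and $\mathcal F$ interlock like an $n\times n$ ``Latin-square-free'' pattern; the colouring must separate measurements whose supports cannot be so interlocked.

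\textbf{Step 2: the geometric heart — covering the unit sphere by caps.} Pass to the real form: a projector onto $\HH\cong\C^d$ is determined by its range, a subspace of $\R^{2d}$ in the worst case (or $\R^d$ if $\HH$ is real); the relevant datum of a rank-one projector is a point of the unit sphere $S^{D-1}$ up to sign, i.e.\ a point of real projective space, with $D\le 2d$. Rogers' theorem \cite{rogers1963covering} gives a covering of $S^{D-1}$ by $N$ spherical caps of angular radius $\varepsilon$ with $N\le c\,D^{3/2}(1/\varepsilon)^{D-1}\log(1/\varepsilon)$ — crucially, exponential in $D$ but with the \emph{base} of the exponent tending to $1$ as $\varepsilon\to0$, which is what produces the $d+\mathcal o(d)$ rather than $\mathcal O(\exp d)$ bound in part $(i)$. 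I would fix $\varepsilon$ small (e.g.\ $\varepsilon=1/d$ or $\varepsilon=d^{-1/2}$), use the covering to round each projector in a measurement to the ``cell'' of $S^{D-1}$ containing (a chosen spanning set of) its range, and argue that for measurements related in $\K_n^\HH$ the rounded data cannot collide — because near-orthogonality is an open condition stable under small perturbations of the subspaces, whereas a collision in the clique target would force an exact equality that near-orthogonality precludes. For part $(i)$ the colour of a measurement is (essentially) the unordered cell-pattern of its $n$ ranges, of which there are at most $\binom{N+n}{n}\le (eN/n)^n = \exp(n(\log N - \log n + 1))$; taking $\log_2$ and substituting the Rogers bound with $D=2d$, $\varepsilon$ sub-polynomial, gives $k=\log_2(\text{number of colours})= d + \log_2 n + \mathcal o(d)+\mathcal o(\log n)$ after the $n$ and the $\log(1/\varepsilon)$-type factors are absorbed into the $\mathcal o$-terms. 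For part $(ii)$ the Bob power only adds a factor $k'$ to the clique size rather than raising it to the power $k'$, so the same covering of size $N=\exp(\Theta(d))$ (now with $\varepsilon$ a fixed constant, since we can no longer afford the refined regime) forces $k=n\cdot N=\chi(\Y)\cdot\mathcal O(\exp(d))$; the weaker bound is exactly the price of $\uprelstrbob{}$ growing linearly instead of exponentially.

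\textbf{Step 3: verify the homomorphism condition.} With the colouring fixed, I would check that the induced map $\Y^\HH\to\K_n\to\uprelstr{k,\Y}$ (resp.\ $\to\uprelstrbob{k,\Y}$) preserves the edge relation: given $(\mathcal E,\mathcal F)\in E^{\Y^\HH}$, unfold \eqref{eqn_defining_quantum_structure} to see that the $n$ ranges of $\mathcal E$ and the $n$ ranges of $\mathcal F$ together refine a common PVM $\mathcal F$ of $R^\Y$, hence satisfy a ``block-orthogonality'' pattern forcing the rounded colours into an edge of $\uprelstr{k,\K_n}=\K_{n^k}$ (resp.\ $\uprelstrbob{k,\K_n}=\K_{nk}$); then pull back along a fixed homomorphism $\K_{n^k}\to\uprelstr{k,\Y}$, which exists because $\uprelstr{k,\K_n}\to\uprelstr{k,\Y}$ is absurd in that direction — rather, one should arrange the colouring so that $\K_n^\HH\to\uprelstr{k,\K_n}$ directly and then compose with $\uprelstr{k,(\cdot)}$ applied to a retraction $\K_n\to\core\Y$...

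\textbf{Main obstacle.} I expect the crux to be Step 2's collision-avoidance argument: the covering by caps is soft and only controls angles approximately, whereas membership in the clique target $\K_{n^k}$ or $\K_{nk}$ is an exact (dis)equality of colours, so one must show that two PVMs which are ``allowed to map to the same vertex'' (i.e.\ not related by the quantum edge relation) genuinely \emph{can} be given the same colour, while any related pair gets distinct colours — this requires the colouring to be simultaneously a graph colouring of $\K_n^\HH$ in the ordinary sense. Making the parameter $\varepsilon$ small enough to force this, while keeping the colour count at $\exp(d+\mathcal o(d))$, is the delicate quantitative balance; getting the $\mathcal o(d)$ error term honestly (rather than $\mathcal O(d)$) is where Rogers' $\log(1/\varepsilon)$-type bound, as opposed to a cruder volumetric estimate, is essential.
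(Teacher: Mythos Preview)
Your overall architecture is right: reduce to a clique via functoriality, colour $\K_n^\HH$ using Rogers' sphere covering, then embed the resulting clique into $\uprelstr{k,\Y}$ (resp.\ $\uprelstrbob{k,\Y}$). However, two of the three steps contain genuine gaps that prevent you from reaching the stated bound.

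\textbf{Step 2: the colour count is off by a factor of $n$ in the exponent.} You propose to colour a measurement by ``the unordered cell-pattern of its $n$ ranges'', giving $\binom{N+n}{n}\le (eN/n)^n$ colours. Taking logarithms yields $k\approx n\log N\approx nd$, not $d+\log n$; the factor $n$ cannot be ``absorbed into the $\mathcal o$-terms'' since $n=\chi(\Y)$ is unbounded. (Also, your reading of Rogers is inverted: the base $1/\varepsilon$ of $(1/\varepsilon)^{D-1}$ tends to infinity, not to $1$, as $\varepsilon\to 0$.) The paper's fix is to record \emph{much less}: it passes through the frame graph $\Sminion_\R(2d,n)$, picks a \emph{single} nonzero row $M_i$ of a frame $M$, and colours $M$ by the pair $(i,\text{cap containing }M_i/\|M_i\|)$. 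This gives only $n\cdot N$ colours with $N=(\sqrt 2+\mathcal o(1))^{2d-2}$, hence $m=n(2+\mathcal o(1))^{d-1}$ and $\log_2 m=d+\log_2 n+\mathcal o(d)$. The collision argument is then exact, not approximate: if $M,M'$ are adjacent in $\Sminion_\R(2d,n)$ then for every $i$ the rows $M_i$ and $M'_i$ are sums over disjoint index sets of rows of a common frame, hence genuinely orthogonal, so they cannot lie in the same cap of an orthogonal colouring of $S(1/\sqrt 2,2d)$.

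\textbf{Step 3: the last link of the chain goes the wrong way.} You need $\K_m\to\uprelstr{k,\Y}$, but functoriality of $\uprelstr{k,\cdot}$ applied to $\Y\to\K_n$ only gives $\uprelstr{k,\Y}\to\uprelstr{k,\K_n}=\K_{n^k}$, and there is no retraction $\K_n\to\core\Y$ in general (your trailing ``$\ldots$'' correctly signals that this route is blocked). The paper instead builds a clique \emph{inside} $\uprelstr{k,\Y}$ directly: fix any edge $(y_1,y_2)\in R^\Y$ and take the set $S\subseteq\{y_1,y_2\}^k$ of balanced strings (exactly $\lfloor k/2\rfloor$ many $y_1$'s). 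Any two distinct elements of $S$ differ in both directions, so $S$ induces a copy of $\K_{|S|}$ in $\uprelstr{k,\Y}$; since $|S|=\binom{k}{\lfloor k/2\rfloor}\ge 2^{k-1}/\sqrt k$, taking $k=\log_2 m+\mathcal O(\log\log m)$ gives $|S|\ge m$. For part $(ii)$ the analogous embedding $\K_m\to\uprelstrbob{m,\Y}$ uses that (after discarding the easy case $\Y\to\mathbf D_2$) some vertex of $\Y$ has positive in- and out-degree, which immediately yields an $m$-clique in $\uprelstrbob{m,\Y}$.
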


The proof strategy can be summarised via the chain of homomorphisms
\begin{equation*}
\Y^\HH\to\K_n^\HH\to\K_m\to \uprelstr{k,\Y}
\end{equation*}
for suitable choices of $n$ and $m$.
The first link in the chain is covered by the fact that the map $\Y\mapsto\Y^\HH$ is functorial for arbitrary relational structures---in fact, as established in~\cite{abramsky2017quantum}, the map is even monadic.
\begin{prop}[\cite{abramsky2017quantum}]
\label{YH_monadic}
    Let $\Y,\Y'$ be $\sigma$-structures, and let $\HH$ be a Hilbert space. Then $\Y\to\Y'$ implies $\Y^\HH\to\Y'^\HH$.
\end{prop}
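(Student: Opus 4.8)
## Proof plan for Proposition \ref{YH_monadic}

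The plan is to show that a homomorphism $f : \Y \to \Y'$ induces a homomorphism $\Y^\HH \to \Y'^\HH$ by acting on the outcome labels of the projection-valued measurements, while keeping the underlying Hilbert space $\HH$ fixed. Concretely, given a PVM $\mathcal{E} = \{E_y\}_{y \in Y} \in \PVM_\HH(Y)$, I would define its image $f_*(\mathcal{E}) = \{E'_{y'}\}_{y' \in Y'}$ by the pushforward formula
\begin{align*}
E'_{y'} = \sum_{\substack{y \in Y \\ f(y) = y'}} E_y .
\end{align*}
The first routine check is that $f_*(\mathcal{E})$ is again a PVM over $\HH$: each $E'_{y'}$ is a sum of mutually orthogonal projectors (the $E_y$ with $f(y)=y'$ are pairwise orthogonal as they come from a single PVM), hence itself a projector; the ranges are mutually orthogonal across distinct $y'$ since the corresponding index sets are disjoint; and $\sum_{y'} E'_{y'} = \sum_y E_y = I$. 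So $f_*$ is a well-defined map on domains, $\PVM_\HH(Y) \to \PVM_\HH(Y')$.

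Next I would verify that $f_*$ preserves every relation. Fix $R \in \sigma$ of arity $r$ and a tuple $(\mathcal{E}^{(1)}, \dots, \mathcal{E}^{(r)}) \in R^{\Y^\HH}$, witnessed by some $\mathcal{F} = \{F_\by\}_{\by \in R^\Y} \in \PVM_\HH(R^\Y)$ satisfying the defining equation~\eqref{eqn_defining_quantum_structure}. I want to produce a witness $\mathcal{F}' = \{F'_{\by'}\}_{\by' \in R^{\Y'}} \in \PVM_\HH(R^{\Y'})$ for the image tuple. The natural choice is again a pushforward along $f$ applied coordinatewise: for $\by' \in R^{\Y'}$ set
\begin{align*}
F'_{\by'} = \sum_{\substack{\by \in R^\Y \\ f(\by) = \by'}} F_\by ,
\end{align*}
where as usual $f$ is applied entrywise. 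Since $f$ is a homomorphism, $f(\by) \in R^{\Y'}$ for every $\by \in R^\Y$, so every $F_\by$ is accounted for in exactly one $F'_{\by'}$; the same orthogonality-and-completeness argument as above shows $\mathcal{F}' \in \PVM_\HH(R^{\Y'})$. Finally one checks~\eqref{eqn_defining_quantum_structure} for the image: for $y' \in Y'$ and $i \in [r]$,
\begin{align*}
\sum_{\substack{\by' \in R^{\Y'} \\ y'_i = y'}} F'_{\by'}
= \sum_{\substack{\by \in R^\Y \\ f(y_i) = y'}} F_\by
= \sum_{\substack{y \in Y \\ f(y) = y'}} \ \sum_{\substack{\by \in R^\Y \\ y_i = y}} F_\by
= \sum_{\substack{y \in Y \\ f(y) = y'}} E^{(i)}_y
= (f_*(\mathcal{E}^{(i)}))_{y'} ,
\end{align*}
using the defining equation for $\mathcal{F}$ in the third step and the definition of $f_*$ in the last. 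This shows $(f_*(\mathcal{E}^{(1)}), \dots, f_*(\mathcal{E}^{(r)})) \in R^{\Y'^\HH}$, so $f_*$ is a homomorphism $\Y^\HH \to \Y'^\HH$.

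I expect no serious obstacle here: the whole argument is bookkeeping with disjoint index sets and the linearity of the defining relation. The only point requiring a moment's care is the interchange of summations in the displayed chain above — one must note that for fixed $y'$ the sets $\{\by \in R^\Y : y_i = y\}$ over $y \in f^{-1}(y')$ partition $\{\by \in R^\Y : f(y_i) = y'\}$, which is immediate. Alternatively, if one prefers to avoid the explicit computation, the statement follows formally from the fact (due to~\cite{abramsky2017quantum}) that $(-)^\HH$ is a functor on the category of $\sigma$-structures with homomorphisms; I would still include the direct construction of $f_*$ since it is what is actually used in the subsequent chain $\Y^\HH \to \K_n^\HH \to \K_m \to \uprelstr{k,\Y}$.
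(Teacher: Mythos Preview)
Your argument is correct. Note, however, that the paper does not give its own proof of this proposition: it is stated as a result of~\cite{abramsky2017quantum} and simply cited. Your pushforward construction $f_*$ on PVMs is exactly the standard way to establish functoriality of $(-)^\HH$, and the verification you outline is complete and accurate; there is nothing to compare it against in the paper itself.
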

Observe that the above result also holds for Alice and Bob powers (i.e., for ``Alice to Bob'' and ``Bob to Alice'' classical strategies), as shown in~\cref{lem_basic_alice} and~\cref{lem_basic_bob}, respectively.

For the second link in the chain of homomorphisms, we will need the following proposition, which is the most technical part of the proof.\footnote{In the statement below, as well as in the statement of \cref{thm_rogers_spheres}, $\mathcal{o}(1)$ indicates a function of $d$ that approaches $0$ as $d\to\infty$.}

\begin{prop}
\label{prop_QKn_to_Km}
    Let $\HH$ be a Hilbert space of dimension $d$ and let $n\geq 2$ be an integer. Then $\K_n^\HH\to\K_m$
    with $m=n(2+\mathcal{o}(1))^{d-1}$.
\end{prop}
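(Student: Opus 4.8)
The goal is to colour the PVMs in $\PVM_\HH(\K_n)$---i.e., tuples $(E_1,\dots,E_n)$ of mutually orthogonal projectors summing to the identity on $\HH\cong\C^d$---with $m=n(2+\mathcal{o}(1))^{d-1}$ colours, in such a way that two adjacent PVMs always receive different colours. Here $\mathcal{E}=(E_1,\dots,E_n)$ and $\mathcal{F}=(F_1,\dots,F_n)$ are adjacent in $\K_n^\HH$ precisely when there is an $n^2$-outcome refinement $\{G_{ij}\}$ with $E_i=\sum_j G_{ij}$ and $F_j=\sum_i G_{ij}$; a moment's inspection shows this forces $E_iF_i=O$ for every $i$ (if $E_i$ and $F_i$ shared a unit vector $v$, then $\langle v, G_{ii}v\rangle$ would have to be both $1$ via the row sum and $0$ via the column sum). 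So it suffices to colour so that $E_iF_i=O$ for all $i$ is impossible between PVMs of the same colour---in other words, for same-coloured $\mathcal{E},\mathcal{F}$ there should be some coordinate $i$ with $E_i,F_i$ non-orthogonal.

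First I would reduce to a one-dimensional geometric problem. For a PVM $\mathcal{E}$, pick the coordinate $i$ maximising $\operatorname{rank} E_i$ (ties broken by a fixed order); call it the \emph{dominant} coordinate $i(\mathcal{E})$, and record also some fixed unit vector $v(\mathcal{E})$ in the range of $E_{i(\mathcal{E})}$, chosen from a fixed $\varepsilon$-net of the unit sphere $S(\HH)$ of $\C^d$ (or rather of the relevant Grassmannian). The colour of $\mathcal{E}$ will be the pair $(i(\mathcal{E}), \text{net point nearest to a vector in the range of } E_{i(\mathcal{E})})$. If two PVMs $\mathcal{E},\mathcal{F}$ get the same colour, then $i:=i(\mathcal{E})=i(\mathcal{F})$ and the ranges of $E_i$ and $F_i$ both meet a common small cap around a net point; choosing $\varepsilon$ small enough forces a non-zero vector close to both ranges, hence $E_iF_i\neq O$, hence $\mathcal{E},\mathcal{F}$ are non-adjacent. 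The number of colours is then $n$ times the size of an $\varepsilon$-net of $S(\C^d)$.

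The quantitative heart of the argument is the covering-number estimate: $S(\C^d)$, viewed as $S^{2d-1}\subseteq\R^{2d}$, can be covered by $N(\varepsilon)$ spherical caps of angular radius $\varepsilon$, and by Rogers' bound on economical sphere coverings~\cite{rogers1963covering} one has $N(\varepsilon)=(\text{const}/\varepsilon)^{2d-1}\cdot(d\log d)^{O(1)}$ asymptotically---but this gives an exponent $2d-1$, which is too large by a factor of roughly $2$. To get the claimed exponent $d-1$ one must work over a real Hilbert space, or, in the complex case, exploit that what matters is the \emph{projective} distance (distance between the lines $\C v$, not the vectors $v$): the relevant object is $\mathbb{CP}^{d-1}$ with the Fubini--Study metric, whose real dimension is $2d-2$; alternatively one observes that the range of a rank-one projector is a point of $\mathbb{CP}^{d-1}$ and uses a net there. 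I will need to be careful here: the base-$2$ exponent $d-1$ in $m=n(2+\mathcal{o}(1))^{d-1}$ strongly suggests the intended reading is the \emph{real} Hilbert space case $\HH\cong\R^d$ (matching the digraph application and Rogers' theorem directly), where $S^{d-1}$ is covered by $(2+\mathcal{o}(1))^{d-1}$ caps of a suitable fixed small radius; in the complex case one presumably passes to the underlying real space or to projective space and absorbs the discrepancy into the $\mathcal{o}(1)$, or the statement is understood up to replacing $d$ by the real dimension. I would state the real case cleanly and remark on the complex case.

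\textbf{Main obstacle.} The delicate point is \emph{not} the covering bound (which is a black-box invocation of Rogers) but the passage from ``ranges of $E_i$ and $F_i$ both meet a common $\varepsilon$-cap'' to ``$E_iF_i\neq O$'' when $E_i,F_i$ have rank greater than one: a subspace can intersect a thin cap around a point without containing any vector near that point unless one is careful about which vector is recorded. The clean fix is to record, for the dominant coordinate, not an arbitrary vector of the range but (say) the top eigenvector direction of $E_i$ projected appropriately, or simply to note $\operatorname{rank} E_{i(\mathcal{E})}\geq d/n\geq 1$ and record one fixed unit vector in its range and argue via $\|E_i v(\mathcal{F})\|$ being close to $1$: since $v(\mathcal{F})$ is within $\varepsilon$ of $v(\mathcal{E})\in\operatorname{range}(E_i)$ (wait---$v(\mathcal{E})\in\operatorname{range}(E_i^{\mathcal E})$ and $v(\mathcal{F})\in\operatorname{range}(E_i^{\mathcal F})$, both within $\varepsilon$ of the \emph{same} net point, hence within $2\varepsilon$ of each other), we get $\langle v(\mathcal{E}), F_i v(\mathcal{E})\rangle=\|F_i v(\mathcal{E})\|^2\geq (1-2\varepsilon)^2>0$, so $E_i F_i\neq O$ because $E_i v(\mathcal{E})=v(\mathcal{E})$ and $F_i v(\mathcal{E})\neq 0$. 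This is the argument I would write out carefully; everything else is bookkeeping of the covering number and the choice of $\varepsilon$ as an absolute constant feeding into the $\mathcal{o}(1)$.
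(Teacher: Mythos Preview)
Your strategy is essentially the paper's: pick an index $i$ with $E_i\neq O$, record a direction in its range, and colour by the pair (index, direction-class), using that adjacency in $\K_n^\HH$ forces $E_iF_i=O$ for every $i$. The paper packages this through the frame graph $\Sminion_{\R}(2d,n)$ (a cited lemma provides $\K_n^\HH\to\Sminion_{\C}(d,n)\to\Sminion_{\R}(2d,n)$), which has the minor advantage that each frame row is already a specific vector, so no ``which unit vector in the range'' choice arises; but your direct PVM argument is correct.

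The genuine gap is exactly where you voice uncertainty: the counting. Your worry that the exponent $2d-1$ is ``off by a factor of $2$'', and the attempted detours through $\mathbb{CP}^{d-1}$ or a restriction to real $\HH$, are misdirected. The paper handles complex $\HH\cong\C^d$ head-on by passing to $\R^{2d}$, then scales the sphere to radius $\rho=1/\sqrt 2$ so that orthogonal vectors sit at Euclidean distance exactly $1$, and invokes the Rogers--Prosanov result \emph{not} as a generic $\varepsilon$-net bound but as a distance-$1$-avoiding colouring: $S(1/\sqrt 2,2d)$ admits a proper colouring with $(\sqrt 2+o(1))^{2d-1}=(2+o(1))^{d-1}$ colours in which monochromatic pairs are never orthogonal. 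That already gives the stated base and exponent. By contrast, your plan of a metric $\varepsilon$-net with a fixed \emph{small} $\varepsilon$ would yield base $C/\varepsilon\gg 2$; to recover $(2+o(1))^{d-1}$ you would have to push the cap angular radius all the way up to the orthogonality threshold $\pi/4$ and redo the covering analysis---which is precisely what the cited distance-avoiding colouring theorem encapsulates. So the fix is a change of tool, not a change of space: replace the fine $\varepsilon$-net by the orthogonality-avoiding colouring at the threshold radius.
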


The proof of \cref{prop_QKn_to_Km} relies on some additional facts.
A classic geometric result from~\cite{rogers1963covering} provides a technique for covering spheres in $\R^d$ with few \textit{sphere caps}.
\sloppy It was noted in~\cite{prosanov2018chromatic} that Rogers' method allows colouring spheres with few colours in such a way that monochromatic points are at a distance other than 1.
For each $d\in\N$ and each $0<\rho\in\R$, let $S(\rho,d)$ denote the $(d-1)$-dimensional sphere in $\R^d$ of radius $\rho$.

\begin{thm}[\cite{rogers1963covering,prosanov2018chromatic}]
\label{thm_rogers_spheres}
    It is possible to colour $S(\rho,d)$ with $(2\rho+\mathcal{o}(1))^{d-1}$ colours in such a way that points at distance $1$ have distinct colours.
\end{thm}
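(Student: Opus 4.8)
The plan is to turn the colouring problem into a sphere-covering problem and then apply Rogers' theorem. First I would dispose of the trivial range $\rho<1/2$ (there $S(\rho,d)$ has diameter $2\rho<1$, so one colour already works) and assume $\rho\ge 1/2$. The key elementary observation is that a spherical cap $C$ of angular radius $\phi\in(0,\pi/2)$ on $S(\rho,d)$ has Euclidean diameter at most $2\rho\sin\phi$: any two points $x,y\in C$ satisfy $\angle(x,y)\le 2\phi$, hence $\|x-y\|^2=2\rho^2(1-\cos\angle(x,y))\le 2\rho^2(1-\cos 2\phi)=4\rho^2\sin^2\phi$. Therefore, if $S(\rho,d)$ can be covered by $N$ caps of angular radius $\phi$ with $2\rho\sin\phi<1$, then assigning to each point (an arbitrary) one of the caps containing it yields an $N$-colouring in which every colour class has diameter strictly less than $1$; in particular no two equally coloured points are at distance exactly $1$. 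So the whole task reduces to: for a well-chosen $\phi=\phi_d$ with $2\rho\sin\phi_d<1$, cover $S(\rho,d)$ by at most $(2\rho+\mathcal{o}(1))^{d-1}$ caps of angular radius $\phi_d$.

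For the covering step I would invoke Rogers' bound on covering a sphere by congruent caps~\cite{rogers1963covering} (the consequence for sphere colourings being made explicit in~\cite{prosanov2018chromatic}): $S(\rho,d)$ admits a covering by at most $P(d)\cdot\sigma(\phi)^{-1}$ caps of angular radius $\phi$, where $\sigma(\phi)$ denotes the normalised surface measure of a single such cap and $P(d)$ has polynomial growth (Rogers gives $P(d)=\mathcal O(d\log d)$; an elementary random-covering-plus-net argument loses the same polynomial factor and would also suffice). It then remains to lower-bound $\sigma(\phi)$. Writing $\sigma(\phi)=\int_0^\phi \sin^{d-2}\theta\,d\theta\,\big/\,\int_0^\pi\sin^{d-2}\theta\,d\theta$, concavity of $\sin$ on $[0,\pi]$ gives $\sin\theta\ge(\sin\phi/\phi)\,\theta$ on $[0,\phi]$, so the numerator is at least $\phi(\sin\phi)^{d-2}/(d-1)$, while the denominator is at most $\pi$; hence $\sigma(\phi)^{-1}\le \pi(d-1)\phi^{-1}(\sin\phi)^{-(d-2)}$.

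Finally I would choose $\phi_d\in(0,\pi/2)$ with $\sin\phi_d=\tfrac{1}{2\rho}\bigl(1-\tfrac{1}{\sqrt d}\bigr)$, which ensures $2\rho\sin\phi_d<1$ and $1/\sin\phi_d\to 2\rho$. Combining the two estimates, the covering — hence the colouring — uses at most $P(d)\cdot\pi(d-1)\phi_d^{-1}\cdot(2\rho)^{d-2}\bigl(1-\tfrac1{\sqrt d}\bigr)^{-(d-2)}$ colours, and taking $(d-1)$-th roots leaves $2\rho$ multiplied by factors tending to $1$: the polynomial factors because they are raised to the vanishing power $1/(d-1)$ (and $\phi_d$ stays bounded away from $0$), and $\bigl(1-\tfrac1{\sqrt d}\bigr)^{-(d-2)/(d-1)}$ because $\bigl(1-\tfrac1{\sqrt d}\bigr)^{-d}=e^{\mathcal O(\sqrt d)}$ with $\sqrt d/(d-1)\to 0$. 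This yields the claimed bound $(2\rho+\mathcal{o}(1))^{d-1}$.

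The main obstacle I anticipate is precisely this last piece of asymptotic bookkeeping: the construction naturally produces a covering of size $(2\rho)^{d-2}$ times a \emph{subexponential} fudge factor (the polynomial loss in Rogers' bound, the polynomial loss in the cap-measure estimate, and the $e^{\mathcal O(\sqrt d)}$ coming from pushing $\sin\phi_d$ strictly below $\tfrac1{2\rho}$), and one must verify carefully that all of these collapse into the $\mathcal{o}(1)$ correction to the base once $(d-1)$-th roots are taken — which is exactly the role of the $\mathcal{o}(1)$ in the statement. The remaining ingredients (the diameter computation for caps, the reduction to a covering, and the citation of Rogers' theorem) are routine.
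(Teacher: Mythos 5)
The paper does not actually prove this statement---it is imported from~\cite{rogers1963covering} together with the observation of~\cite{prosanov2018chromatic} that Rogers' cap covering yields a distance-avoiding colouring---so there is no internal proof to compare against; your reconstruction is correct and follows precisely the route of those references (reduce the colouring to a covering by caps of Euclidean diameter below $1$, invoke Rogers' polynomial-density covering bound, lower-bound the normalised cap measure, and absorb all subexponential losses into the $\mathcal{o}(1)$ in the base after taking $(d-1)$-th roots). The one point worth making explicit is that the covering bound you invoke remains valid for your choice of $\phi_d$ even in the boundary case $\rho=1/2$, where the caps approach hemispheres; this is fine because $\cos\phi_d$ is still of order $d^{-1/4}\geq d^{-1/2}$, which keeps $\phi_d$ within the range where Rogers-type density bounds apply.
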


A second result that shall be useful comes from the analysis of the \textit{quantum minion} from~\cite{ciardo_quantum_minion}. We present it here in the simplified setting of cliques, as we do not need the higher level of generality.
We say that a complex- or real-valued matrix $M$ is a \textit{frame} if $MM^*$ is a diagonal matrix of trace $1$.
Consider the infinite graph $\Sminion_{\C}(d,n)$ whose domain is the set of $n\times d$ complex frames, with two frames $M,M'$ forming an edge if and only if there exists an $(n^2-n)\times d$ complex frame $N$ such that
\begin{align}
\label{eqn_1133_1901}
M_i=\sum_{j\neq i}N_{(i,j)},\quad  M'_i=\sum_{j\neq i}N_{(j,i)} 
\end{align}
for each $i\in [n]$, where $M_i$ denotes the $i$-th row of $M$ and $N_{(i,j)}$ denotes the $(i,j)$-th row of $N$.
Let also $\Sminion_{\R}(d,n)$ denote the subgraph of $\Sminion_{\C}(d,n)$ induced by real-valued frames. 
The next result is a consequence of~\cite[Prop.22--Thm.24]{ciardo_quantum_minion}.

\begin{lem}[\cite{ciardo_quantum_minion}]
\label{lem_QH_vs_Sminion}
For any Hilbert space $\HH$ of dimension $d$ and any integer $n\geq 2$, it holds that
\begin{align*}
    \K_n^\HH\to\Sminion_{\C}(d,n)\to\Sminion_{\R}(2d,n).
\end{align*}
\end{lem}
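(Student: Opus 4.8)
The plan is to realise the two arrows of the chain by completely explicit maps and then compose them; each arrow is the clique case of the general minion homomorphisms behind \cite[Prop.~22--Thm.~24]{ciardo_quantum_minion}, and checking each of them requires nothing beyond finite-dimensional linear algebra. The guiding observation is that the relation defining $\K_n^\HH$ through \eqref{eqn_defining_quantum_structure} and the edge relation of $\Sminion_\C(d,n)$ have the same combinatorial shape---a family of linear identities indexed by the ordered pairs of $[n]$---with mutually orthogonal projectors summing to the identity $I$ on one side, and mutually orthogonal vectors whose squared norms sum to $1$ on the other. Evaluating a PVM at a fixed vector is the operation that turns the former data into the latter.

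For the first arrow $\K_n^\HH\to\Sminion_\C(d,n)$, I would fix a unit vector $v\in\HH\cong\C^d$ and send a PVM $\mathcal E=\{E_i\}_{i\in[n]}$ to the $n\times d$ matrix $M$ whose $i$-th row is $M_i:=v^*E_i$ (so $M_i^*=E_iv$, the image of $v$ under the $i$-th projector). Then $M_iM_j^*=v^*E_iE_jv=0$ for $i\neq j$ by orthogonality of the projectors in a PVM, and $\sum_i\lVert M_i\rVert^2=\sum_iv^*E_i^2v=\sum_iv^*E_iv=v^*\bigl(\sum_iE_i\bigr)v=v^*v=1$ since $E_i^2=E_i$ and $\sum_iE_i=I$; hence $MM^*$ is diagonal of trace $1$ and $M$ is a frame. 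If $(\mathcal E,\mathcal E')$ is an edge of $\K_n^\HH$, witnessed via \eqref{eqn_defining_quantum_structure} by $\mathcal F=\{F_{(i,j)}\}_{i\neq j}\in\PVM_\HH(R^{\K_n})$ with $E_i=\sum_{j\neq i}F_{(i,j)}$ and $E'_i=\sum_{j\neq i}F_{(j,i)}$ for every $i$, I would set $N_{(i,j)}:=v^*F_{(i,j)}$. The same computation (now using $F_{(i,j)}F_{(k,l)}=O$ off the diagonal and $\sum_{i\neq j}F_{(i,j)}=I$) shows $N$ is an $(n^2-n)\times d$ frame, and applying the linear map $v^*(\,\cdot\,)$ to the two families of identities gives $M_i=\sum_{j\neq i}N_{(i,j)}$ and $M'_i=\sum_{j\neq i}N_{(j,i)}$, which is exactly what is needed for $N$ to witness that $(M,M')$ is an edge of $\Sminion_\C(d,n)$.

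For the second arrow $\Sminion_\C(d,n)\to\Sminion_\R(2d,n)$, I would use the standard passage to the underlying real space: write a complex frame $M=A+iB$ with $A,B$ real and map it to $\widetilde M:=[\,A\mid B\,]\in\R^{n\times 2d}$. Since $MM^*$ is Hermitian and diagonal it is a real diagonal matrix, so $\widetilde M\widetilde M^*=AA^*+BB^*=MM^*$ is still diagonal of trace $1$ and $\widetilde M$ is a real frame. If $(M,M')$ is an edge of $\Sminion_\C(d,n)$ witnessed by $N=C+iD$, then $\widetilde N:=[\,C\mid D\,]$ is a real frame by the same argument, and taking real and imaginary parts of $M_i=\sum_{j\neq i}N_{(i,j)}$ and $M'_i=\sum_{j\neq i}N_{(j,i)}$ yields $\widetilde M_i=\sum_{j\neq i}\widetilde N_{(i,j)}$ and $\widetilde M'_i=\sum_{j\neq i}\widetilde N_{(j,i)}$, so $\widetilde N$ witnesses the edge in $\Sminion_\R(2d,n)$. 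Composing the two maps gives the chain of \cref{lem_QH_vs_Sminion}.

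I do not expect a genuine obstacle here: the whole argument is formal once \eqref{eqn_defining_quantum_structure} and the edge relations of $\Sminion_\C(d,n)$ and $\Sminion_\R(2d,n)$ are written side by side, and the only substantive point is that ``evaluate at $v$'' is linear and sends an orthogonal resolution of $I$ to an orthogonal family whose squared norms sum to $1$. The one thing requiring care is bookkeeping---keeping the two coordinates of the arity-$2$ relation of $\K_n$ straight so that the rows of $N$ are indexed consistently by the ordered pairs of $R^{\K_n}$, and so that the ``left'' identities feed $M$ while the ``right'' ones feed $M'$. The real technical weight of the surrounding development lies downstream, in colouring $\Sminion_\R(2d,n)$ with few colours via the Rogers-type sphere covering of \cref{thm_rogers_spheres} inside the proof of \cref{prop_QKn_to_Km}, not in this lemma.
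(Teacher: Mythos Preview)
The paper does not prove this lemma: it merely records it as ``a consequence of~[Prop.~22--Thm.~24]'' of the cited work. Your explicit construction---evaluate each projector at a fixed unit vector $v$ to turn PVMs into frames, and then split real and imaginary parts to pass from $\C^d$ to $\R^{2d}$---is correct and is exactly the kind of unpacking one expects behind that citation. The verifications you give (orthogonality of rows from orthogonality of projectors, trace $1$ from $\sum_i E_i=I$ and $\lVert v\rVert=1$, edge preservation by pushing the witness $\mathcal F$ through the same map, and $\widetilde M\widetilde M^{T}=\operatorname{Re}(MM^*)=MM^*$ because a diagonal Hermitian matrix is real) are all sound. There is nothing to compare against in the present paper beyond noting that your argument supplies the details it omits.
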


We are now in a position to prove \cref{prop_QKn_to_Km}.

\begin{proof}[Proof of \cref{prop_QKn_to_Km}]
Recall that $d$ denotes the dimension of the Hilbert space $\HH$, and
consider the sphere $S(\rho,2d)$ with $\rho=\frac{1}{\sqrt{2}}$. By \cref{thm_rogers_spheres}, there exists a colouring $\alpha$ of this sphere using $(\sqrt{2}+\mathcal{o}(1))^{2d-2}$ colours such that points at distance $1$ get distinct colours. By the choice of $\rho$, this means that $\alpha$ is an \textit{orthogonal colouring}---i.e., no two points of the sphere are coloured with the same colour when the corresponding vectors are orthogonal.
Take a frame $M\in\Sminion_{\R}(2d,n)$, and find some index $i\in[n]$ such that the $i$-th row of $M_i$ is nonzero. Observe that such a row exists as $\tr(MM^*)=1$. 
Let also $\bv=\frac{M_i}{\sqrt{2}\norm{M_i}}$, and observe that $\bv\in S(\rho,2d)$.
We assign to $M$ the pair $(i,\alpha(\bv))$. We claim that this defines a proper $m$-colouring of $\Sminion_{\R}(2d,n)$---where we implicitly identify $[m]$ with $[n]\times[(\sqrt{2}+\mathcal{o}(1))^{2d-2}]$. Suppose that two frames $M,M'\in\Sminion_{\R}(2d,n)$ are coloured with the same colour $(i,a)$. Let $\bv$ and $\bv'$ be the normalised $i$-th rows of $M$ and $M'$, respectively. We have that $\alpha(\bv)=\alpha(\bv')$.
Since $\alpha$ is a colouring, it follows that $\bv\not\perp \bv'$; i.e., $M_i\not\perp M_i'$. Suppose, for the sake of contradiction, that $M$ and $M'$ are adjacent in $\Sminion_{\R}(2d,n)$. This means that there exists an $(n^2-n)\times 2d$ frame $N$ satisfying~\eqref{eqn_1133_1901}. Since $NN^*$ is diagonal, it follows that there exists some row $N_{(k,\ell)}$ of $N$ appearing in both sums. But this implies that $k=i$, $\ell=i$, and $k\neq\ell$, a contradiction.
Hence, the map constructed above is a proper $m$-colouring of $\Sminion_{\R}(2d,n)$; i.e., $\Sminion_{\R}(2d,n)\to\K_m$. Composing this colouring with the homomorphisms of~\cref{lem_QH_vs_Sminion} yields the required colouring $\K_n^\HH\to\K_m$.
\end{proof}

We can now show that perfect quantum strategies for digraphs can be turned into perfect classical strategies with a small communication channel. 

\begin{proof}[Proof of~\cref{thm_main_digraphs}]
First of all, we can assume that $\Y$ is loopless, as otherwise both digraphs $\uprelstr{k,\Y}$ and $\uprelstrbob{k,\Y}$ contain a loop and, hence, $(i)$ and $(ii)$ trivially hold.
Furthermore, if $\Y$ homomorphically maps to a directed edge, $\CSP(\Y)$ has bounded width. In this case, it was shown in~\cite[Cor.~25]{ciardo_quantum_minion} that $\X\to\Y^\HH$ is equivalent to $\X\to\Y$, which means that $\CSP^\HH(\Y)=\CSP(\Y)$. Thus, the theorem also holds trivially in this case.
Assume from now on that $\Y \not\to \mathbf D_2$.

Write $n=\chi(\Y)$.
Using~\cref{YH_monadic} and~\cref{prop_QKn_to_Km},  we obtain
\begin{align}
    \label{eqn_1247_1901}\Y^\HH\to\K_n^\HH\to\K_m,
\end{align}
with
$m=n(2+\mathcal{o}(1))^{d-1}$.

Take some $k\in\N$.
Pick a directed edge $(y_1, y_2) \in R^\Y$ and the subset $S \subseteq \{y_1, y_2\}^k$ consisting of all tuples in which the number of occurrences of $y_1$ is exactly $\lfloor k / 2\rfloor$. Thus, $|S| = {k \choose \lfloor k/2\rfloor}$, and for any two distinct tuples $\textbf{t}^{(1)}, \textbf{t}^{(2)} \in S$ there exist indices $i, j \in [k]$ such that $t^{(1)}_i = y_1, t^{(2)}_i = y_2$ and $t^{(1)}_j = y_2, t^{(2)}_j = y_1$. It follows that both $(\textbf{t}^{(1)}, \textbf{t}^{(2)})$ and $(\textbf{t}^{(2)}, \textbf{t}^{(1)})$ are in $R^{\uprelstr{k, \Y}}$.

By relaxing Stirling's approximation of the central binomial coefficient, we have $|S| = {k \choose \lfloor k/2\rfloor} \geq \frac{2^{k - 1}}{\sqrt{k}}$, which matches exactly for $k = 1$ and diverges for $k\geq 2$. Now, by picking $k = \lceil \log_2 m + \log_2\log_2 m + 2\rceil$, we get
\begin{equation*}
    |S| \geq \frac{2 m \log_2 m}{\sqrt{\log_2 m + \log_2\log_2 m + 3}} \geq m
\end{equation*}
(since the function $f(x) = 4 \log_2^2 x - \log_2 x - \log_2 \log_2 x - 3$ is increasing for $x \geq 2$ and $f(2) = 0$).
We can then complete the chain of homomorphisms in~\cref{eqn_1247_1901} with
\begin{align*}
    \K_m\to\uprelstr{\log_2 m + \mathcal{O}(\log\log m),\Y},
\end{align*}
which implies part $(i)$ of the statement.

As for part $(ii)$, recall that we work now under the assumption that $\Y \not\to \mathbf D_2$. This implies that there exists some vertex in $\Y$ having both in-degree and out-degree greater than zero. This means that $\uprelstrbob{m,\Y}$ contains a clique of size $m$, so~\eqref{eqn_1247_1901} can be completed with
$\K_m\to\uprelstrbob{m,\Y}$,
and $(ii)$ follows.
\end{proof}

\subsection{General case}

We now generalise \cref{thm_main_digraphs} to all CSPs, thus proving the main results of this paper, which we restate here for convenience.

\mainAlice*
\mainBob*

First, we introduce some terminology.
Given an integer $r\in\N$, we let an \textit{$r$-pattern} be a set of partitions of the set $[r]$. For a signature $\sigma$, a \textit{$\sigma$-pattern} is a collection $\mathcal{P}=(P_R)_{R\in\sigma}$, where each $P_R$ is an $\ar(R)$-pattern. 
Given two partitions $\pi,\pi'$ of $[r]$, we write $\pi\preceq\pi'$ (read: ``$\pi$ is at least as fine as $\pi'$'') if each part in $\pi$ is a subset of a part in $\pi'$. 
Given a set $S$ and a tuple $\bs\in S^r$, we let $\pi_\bs$ be the partition of $[r]$ consisting of the set of equivalence classes of $[r]$ modulo the equivalence relation defined by $i\sim j$ if and only if $s_i=s_j$.
For a $\sigma$-structure $\Y$, we let $\ptn(\Y)$ be the $\sigma$-pattern $(P_R)_{R\in\sigma}$ where, for each $R\in\sigma$, $P_R$ is the set of partitions $\pi$ of $[\ar(R)]$ such that $\pi\preceq\pi_\by$ for some $\by\in R^\Y$.

Take a $\sigma$-pattern $\mathcal{P}=(P_R)_{R\in\sigma}$ and an integer $n\in\N$.
We define the \textit{complete structure} of size $n$ on pattern $\mathcal{P}$ as the $\sigma$-structure $\K_{n,\mathcal{P}}$ having domain $[n]$ and relations $R^{\K_{n,\mathcal{P}}}=\{\ba\in [n]^{\ar(R)}\mid\pi_\ba\preceq \pi\mbox{ for some }\pi\in P_{R}\}$, for each $R\in\sigma$. We let the \textit{chromatic number} of a $\sigma$-structure $\Y$ be 
\begin{align*}
    \chi(\Y)=\min\{n\in\N\mid \Y\to\K_{n,\ptn(\Y)}\}.
\end{align*}
Observe that $\chi(\Y)$ is well defined, since it always holds that $\Y\to\K_{|Y|,\ptn(\Y)}$, as is witnessed by the identity homomorphism.
Furthermore, if $\Y$ is a loopless digraph, $\chi(\Y)$ is the standard (classical) chromatic number of $\Y$.

We shall prove the following intermediate result.

\begin{prop}
\label{prop_QU_coloring_separable_structures_general}
There exists a function $f:\N\to\N$ such that,
for any $d\in\N$ and any $\sigma$-structure $\Y$, 
it holds that $\Y^{\C^d}\to\K_{m,\ptn(\Y)}$
with $m=|Y|^{d}\cdot f(d)$.
\end{prop}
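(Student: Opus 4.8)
\textbf{Proof plan for \cref{prop_QU_coloring_separable_structures_general}.}
The plan is to mimic the structure of the digraph argument, but replacing the clique $\K_n$ and its quantum version $\K_n^\HH$ by the pattern-respecting complete structure $\K_{n,\ptn(\Y)}$. By \cref{YH_monadic}, a homomorphism $\Y\to\K_{|Y|,\ptn(\Y)}$ lifts to $\Y^{\C^d}\to\K_{|Y|,\ptn(\Y)}^{\C^d}$, so it suffices to exhibit a homomorphism $\K_{|Y|,\ptn(\Y)}^{\C^d}\to\K_{m,\ptn(\Y)}$ with $m=|Y|^d\cdot f(d)$ for a universal $f$. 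Writing $\mathcal P=\ptn(\Y)$ and $N=|Y|$, I unwind the definition of $\K_{N,\mathcal P}^{\C^d}$ from \cref{thm_quantum_csp_is_csp}: its domain is $\PVM_{\C^d}([N])$, and a tuple of PVMs is in a relation exactly when it is ``refined'' by a single witnessing PVM indexed by $R^{\K_{N,\mathcal P}}$ via \eqref{eqn_defining_quantum_structure}. The key observation, as in the digraph case, is that a colouring must be \emph{orthogonal}: if $\mathcal E=\{E_s\}_{s\in[N]}$ and $\mathcal E'=\{E'_s\}$ are joined in some relation with witness $\mathcal F=\{F_{\by}\}$, then whenever $E_s$ and $E'_t$ are both nonzero and $s,t$ occur in some $\by$ in the witness's index set, the corresponding projectors are related through a common family of mutually orthogonal $F_{\by}$; the resulting colour must record enough to preserve the pattern.

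The main technical tool I would invoke is the ``indicator frame'' construction mentioned in the introduction (the first proof of finite chromatic number, valid for arbitrary CSPs): using compactness of the unitary group $U(d)$, one builds a finite system of subspaces (frames) that ``almost diagonalises'' every projective measurement in $\PVM_{\C^d}([N])$ simultaneously. Concretely, I would fix a fine enough finite net on $U(d)$ (or on the Grassmannians of $\C^d$), and to each PVM $\mathcal E=\{E_s\}_{s\in[N]}$ associate, for each $s$, a net point approximating the range of $E_s$; the tuple of these $N$ net-point indices, together with bounded extra bookkeeping data, gives the colour. The number of colours is then (number of net points)$^N$ times a constant depending only on $d$ — and here I would need to be careful that the exponent comes out as $d$ rather than $N$. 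This is exactly where the digraph proof used $\Sminion_\R(2d,n)$ and Rogers' sphere covering to get the clean bound $n(2+\mathcal o(1))^{d-1}$; in the general-CSP case we do not have that sharp geometric input, so I would instead argue more crudely: each projector $E_s$ has rank $r_s\in\{0,1,\dots,d\}$ with $\sum_s r_s=d$, so the ``active'' information in a PVM lives in at most $d$ rank-one pieces. Thus I would colour by recording, for each of these $\leq d$ rank-one components, (a) which element $s\in[N]$ it belongs to, and (b) an approximate direction chosen from a finite net of the projective space $\mathbb{P}(\C^d)$ fine enough to detect orthogonality robustly. That yields $m\leq \big(N\cdot c(d)\big)^{d}=N^d\cdot c(d)^d$, i.e. $m=N^d\cdot f(d)$ with $f(d)=c(d)^d$, as required.

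Having defined the colour map $\gamma$, I would verify it is a homomorphism $\K_{N,\mathcal P}^{\C^d}\to\K_{m,\ptn(\Y)}$. Fix $R\in\sigma$ of arity $r$ and a tuple $(\mathcal E^{(1)},\dots,\mathcal E^{(r)})\in R^{\K_{N,\mathcal P}^{\C^d}}$ with witness $\mathcal F=\{F_{\ba}\}_{\ba\in R^{\K_{N,\mathcal P}}}\in\PVM_{\C^d}(R^{\K_{N,\mathcal P}})$. I must show $\big(\gamma(\mathcal E^{(1)}),\dots,\gamma(\mathcal E^{(r)})\big)\in R^{\K_{m,\ptn(\Y)}}$, i.e. that $\pi_{(\gamma(\mathcal E^{(i)}))_i}\preceq\pi$ for some $\pi\in P_R$. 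The plan here is: if $\gamma(\mathcal E^{(i)})=\gamma(\mathcal E^{(j)})$ then, via the net being fine enough to certify orthogonality, the rank-one pieces underlying these two colours are non-orthogonal; pushing this through \eqref{eqn_defining_quantum_structure} and the mutual orthogonality of $\{F_{\ba}\}$, I get that $E^{(i)}_y$ and $E^{(j)}_{y'}$ cannot be ``separated'' by the witness unless they arise from a common $\ba\in R^{\K_{N,\mathcal P}}$ with $a_i=a_j$. Collecting this over all coincident colour-pairs shows the coincidence pattern of $(\gamma(\mathcal E^{(i)}))_i$ is refined by $\pi_\ba$ for a suitable witnessing tuple $\ba\in R^{\K_{N,\mathcal P}}$, and since $\pi_\ba\preceq\pi$ for some $\pi\in P_R$ by definition of $\K_{N,\mathcal P}$, we are done. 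The main obstacle I anticipate is precisely this last implication — making the passage from ``same colour'' to ``non-orthogonal rank-one components'' to ``forced to share a witness coordinate'' fully rigorous, because an approximate net only gives near-orthogonality, and one has to choose the net's fineness uniformly over the (compact but infinite) space of PVMs so that near-orthogonal genuinely implies non-orthogonal for the finitely many projector configurations that can occur; this is the step where the compactness/indicator-frame machinery from \cref{sec_entangled_vs_classical} does the real work, and where one must take care that the net's cardinality stays bounded by a function of $d$ alone.
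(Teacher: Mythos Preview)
Your overall architecture---compactness gives a finite colouring of $\PVM_{\C^d}(Y)$ with $|Y|^d\cdot f(d)$ colours, then verify it is a homomorphism into $\K_{m,\ptn(\Y)}$---is the paper's approach. The factorisation through $\K_{|Y|,\ptn(\Y)}^{\C^d}$ is a harmless detour (the paper works on $\Y^{\C^d}$ directly), and your colouring by ``$d$ labelled approximate rank-one pieces'' is a close variant of the paper's colouring by a pair $(B,\tau)$ with $B$ drawn from a finite net on the unitary group $U(d)$ and $\tau$ a partition of $[d]$ into $|Y|$ parts (\cref{lem_compactness_Ud}).

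The gap is in the verification. You aim to show $\pi_{\gamma(\be)}\preceq\pi_\ba$ for a \emph{single} witness tuple $\ba\in R^{\K_{N,\mathcal P}}$, arguing pairwise: if $\gamma(\mathcal E^{(i)})=\gamma(\mathcal E^{(j)})$ then some $F_\ba$ with $a_i=a_j$ is forced to be nonzero. Even granting this for each coincident pair, you give no mechanism to glue these per-pair witnesses into one common $\ba$; different pairs may pick different $\ba$'s with nonzero $F_\ba$, and nothing in your sketch rules that out. The paper avoids this entirely by proving the much stronger statement that the colouring is \emph{injective on each relation tuple}: if $\mathcal E^{(i)}$ and $\mathcal E^{(j)}$ sit in the same tuple of $R^{\Y^{\C^d}}$ (hence commute, via~\eqref{eqn_defining_quantum_structure}) and receive the same colour $(B,\tau)$, then $\mathcal E^{(i)}=\mathcal E^{(j)}$. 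This is the content of \cref{lem_close_PVMs_means_equal}, a quantitative rigidity lemma: two commuting PVMs that are both $\alpha$-close to the same orthonormal basis with the same partition, for the specific threshold $\alpha=\sqrt{\tfrac12(1+\sqrt{1-1/d^2})}$, must coincide. Once you have injectivity on tuples, $\pi_{\gamma(\be)}=\pi_\be$, and \cref{prop_pattern_preservation_under_quantum} gives $\pi_\be\in P_R$ directly---no witness-chasing needed.

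Your projective-net colouring could in fact be pushed through \cref{lem_close_PVMs_means_equal} as well (use the diagonalising basis $\{v_k\}$ of $\mathcal E^{(i)}$ itself as the common basis; net-closeness of the $v'_k$ to the $v_k$ gives $\|E^{(j)}_{s_k}v_k\|>\alpha$ for a fine enough net), but that rigidity lemma is precisely the ``real work'' you flag at the end, and it is what replaces your witness-collection argument.
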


We first need to establish some geometric facts.
Let $S$ be a finite set, and take two measurements $\mathcal{E},\mathcal{F}\in\PVM_\HH(S)$. We say that $\mathcal{E}$ and $\mathcal{F}$ are \textit{commuting} if $[E_s,F_{\tilde s}]=O$ 
for each $s,\tilde s\in S$, where $E_s$ (resp. $F_{\tilde s}$) is the projector in $\mathcal{E}$ (resp. $\mathcal{F}$) corresponding to the outcome $s$ (resp. $\tilde s$). 
(Recall that $[E,F]$ is the commutator $EF-FE$.)

Given two commuting measurements $\mathcal{E}$ and $\mathcal{F}$, we can find a common basis of eigenvectors for the projectors appearing in them. In such a basis, there is an easy way to check whether the $s$-th projector of $\mathcal{E}$ is equal to the $s$-th projector of $\mathcal{F}$: Simply evaluate both of them onto all vectors of the basis and check if the outcomes always agree. The next lemma argues that, in fact, any basis \textit{close enough} to being an eigenbasis also allows for exact verification of whether the projectors coincide.

\begin{lem}
\label{lem_close_PVMs_means_equal}
    Fix a finite set $S$, an integer $d\in\N$, and two commuting measurements $\mathcal{E},\mathcal{F}\in\PVM_{\C^d}(S)$. The following statements are equivalent:
    \begin{enumerate}
        \item[$(i)$] $\mathcal{E}=\mathcal{F}$;
        \item[$(ii)$] there exists an orthonormal basis $\{\bv_1,\dots,\bv_d\}$ of $\C^d$ and a partition $\tau=\{\tau_s:s\in S\}$ of $[d]$ such that, for each $s\in S$ and each $j\in\tau_s$, it holds that $\norm{E_s\bv_j}>\alpha$ and $\norm{F_s\bv_j}>\alpha$ with $\alpha=\sqrt{\frac{1}{2}(1+\sqrt{1-1/d^2})}$.
    \end{enumerate}
\end{lem}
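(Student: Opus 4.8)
The implication $(i)\Rightarrow(ii)$ is straightforward: if $\mathcal E=\mathcal F$, pick any orthonormal eigenbasis $\{\bv_1,\dots,\bv_d\}$ simultaneously diagonalising all the $E_s$ (possible since the $E_s$ are mutually commuting projectors summing to $I$), and let $\tau_s=\{j : E_s\bv_j=\bv_j\}$. These $\tau_s$ partition $[d]$, and for $j\in\tau_s$ we have $\norm{E_s\bv_j}=\norm{F_s\bv_j}=1>\alpha$ since $\alpha<1$. So the real content is $(ii)\Rightarrow(i)$.

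For $(ii)\Rightarrow(i)$, the plan is to show that the hypothesis $\norm{E_s\bv_j}>\alpha$ for $j\in\tau_s$ already forces $E_s\bv_j=\bv_j$ exactly, and likewise $F_s\bv_j=\bv_j$, after which $E_s=F_s$ follows because a projector is determined by its values on a basis. Since $\mathcal E$ is a PVM, for a fixed $j\in\tau_s$ we have $\sum_{t\in S}\norm{E_t\bv_j}^2=1$ (the $E_t$ being orthogonal projectors summing to $I$), so $\norm{E_s\bv_j}^2>\alpha^2$ means $\sum_{t\neq s}\norm{E_t\bv_j}^2<1-\alpha^2$. The key point is that this is not yet enough on its own to conclude $E_s\bv_j=\bv_j$ — we must bring in the other basis vectors and the commutativity of $\mathcal E$ and $\mathcal F$. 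The idea is: the projector $E_s$ has some rank, say $r_s$, and $\sum_s r_s=d$; comparing with $|\tau_s|$ and using that the vectors $\{E_s\bv_j : j\in\tau_s\}$ must be ``mostly'' the mass of the range of $E_s$, a counting/trace argument with the threshold $\alpha$ should pin down $|\tau_s|=r_s$ and force each $E_s\bv_j$ (for $j\in\tau_s$) to be a unit vector, i.e.\ $E_s\bv_j=\bv_j$.

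Concretely, I would argue as follows. For each $s$, consider $\sum_{j\in\tau_s}\norm{E_s\bv_j}^2 = \tr(E_s \sum_{j\in\tau_s}\bv_j\bv_j^*) \le \tr(E_s) = r_s$, with equality iff $E_s$ vanishes on the span of $\{\bv_j : j\notin\tau_s\}$, i.e.\ iff $E_s\bv_j=\bv_j$ for all $j\in\tau_s$ and $E_s\bv_j=0$ for all $j\notin\tau_s$. On the other hand $\sum_{j\in\tau_s}\norm{E_s\bv_j}^2 > \alpha^2|\tau_s|$. Summing over $s$: $d=\sum_s r_s \ge \sum_s\sum_{j\in\tau_s}\norm{E_s\bv_j}^2 > \alpha^2\sum_s|\tau_s| = \alpha^2 d$, which is consistent but not immediately a contradiction; the finer step is to also use $\sum_{j\in\tau_s}\norm{E_s\bv_j}^2 \le r_s$ together with $\norm{E_s\bv_j}\le 1$ and the fact that $\sum_{j}\norm{E_s\bv_j}^2 = r_s$ exactly (summing over \emph{all} $j$, since $\sum_j\bv_j\bv_j^* = I$). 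Thus $\sum_{j\notin\tau_s}\norm{E_s\bv_j}^2 = r_s - \sum_{j\in\tau_s}\norm{E_s\bv_j}^2 < r_s - \alpha^2|\tau_s|$. I expect the precise value $\alpha = \sqrt{\tfrac12(1+\sqrt{1-1/d^2})}$ to be exactly what is needed so that the ``leakage'' of $E_s$ onto the complementary subspace, combined over all $s$ and with the analogous bound for $\mathcal F$ and the commutativity (which lets one diagonalise $\mathcal E$ and $\mathcal F$ simultaneously and compare blockwise, controlling how $\bv_j$ sits relative to the \emph{true} joint eigenspaces), forces all the inequalities to be equalities — hence $E_s\bv_j=\bv_j=F_s\bv_j$ for every $s$ and every $j\in\tau_s$, and therefore $\mathcal E=\mathcal F$.

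\textbf{Main obstacle.} The delicate part is the quantitative step: extracting from the single-vector bound $\norm{E_s\bv_j}>\alpha$ the \emph{exact} conclusion $E_s\bv_j=\bv_j$. This must genuinely use commutativity of $\mathcal E$ and $\mathcal F$ (otherwise the statement would say something like ``close PVMs are close'', not ``close-to-eigenbasis implies equal''), presumably by passing to the common eigenbasis of $\mathcal E\cup\mathcal F$, writing each $\bv_j$ in that basis, and showing the threshold $\alpha$ is tight enough that $\bv_j$ cannot have a component of squared length $\ge 1-\alpha^2$ in the ``wrong'' joint eigenspace without violating the counting identity $\sum_j \bv_j\bv_j^* = I$ simultaneously with the rank constraints. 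Verifying that $\alpha^2 = \tfrac12(1+\sqrt{1-1/d^2})$ — equivalently $(2\alpha^2-1)^2 = 1-1/d^2$, i.e.\ $4\alpha^4-4\alpha^2+1/d^2=0$ — is the exact breaking point of this argument is where the real work lies; I would expect it to come down to an extremal configuration of a single $\bv_j$ spread across two joint eigenspaces of dimensions summing to at most $d$.
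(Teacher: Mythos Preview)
Your plan has a genuine gap: the intermediate goal ``$E_s\bv_j=\bv_j$ exactly (and likewise $F_s\bv_j=\bv_j$)'' is simply false under the hypotheses, so no amount of quantitative tightening will reach it. To see this, take any $\mathcal E=\mathcal F$ and any orthonormal basis $\{\bv_j\}$ obtained by slightly rotating a true eigenbasis; then $\norm{E_s\bv_j}$ is as close to $1$ as you wish (in particular $>\alpha$) for $j\in\tau_s$, yet $E_s\bv_j\neq\bv_j$. The lemma only asserts $\mathcal E=\mathcal F$, not that the given $\{\bv_j\}$ is an eigenbasis, so a proof cannot proceed by establishing the latter. Your trace/rank inequalities for $\mathcal E$ alone (or for $\mathcal F$ alone) are therefore doomed to be merely ``consistent'': they never touch the comparison between $\mathcal E$ and $\mathcal F$, which is the actual content.

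The paper's argument uses commutativity at the very first step, not as a finishing touch. Since $[E_s,F_t]=O$, each $G_s\coloneqq E_sF_s$ is a projector and $G_sG_{\tilde s}=O$ for $s\neq\tilde s$, so $G=\sum_s G_s$ is itself a projector (hence $\tr(G)\in\{0,1,\dots,d\}$). Writing $\bv_j=F_s\bv_j+(I-F_s)\bv_j$ gives $\norm{(I-F_s)\bv_j}<\sqrt{1-\alpha^2}$ for $j\in\tau_s$, and then
\[
\norm{G_s\bv_j}=\norm{E_sF_s\bv_j}\ge\norm{E_s\bv_j}-\norm{E_s(I-F_s)\bv_j}>\alpha-\sqrt{1-\alpha^2}.
\]
The specific $\alpha$ is chosen precisely so that $(\alpha-\sqrt{1-\alpha^2})^2=\tfrac{d-1}{d}$ (equivalently $\alpha\sqrt{1-\alpha^2}=\tfrac{1}{2d}$). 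Summing $\norm{G_s\bv_j}^2$ over all $j$ (with $s$ the label of the block containing $j$) yields $\tr(G)>d\cdot\tfrac{d-1}{d}=d-1$, forcing $\tr(G)=d$ and hence $G=I$. From $\sum_s E_sF_s=I=\sum_s E_s=\sum_s F_s$ and $E_sF_s\le E_s,F_s$ one gets $E_s=F_s$ for every $s$. The crucial point you were missing is that the threshold $\alpha$ governs the \emph{product} projectors $E_sF_s$, not $E_s$ and $F_s$ separately, and the integrality of $\tr(G)$ is what converts a strict inequality into an exact identity.
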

\begin{proof}
The implication $(i)\Rightarrow(ii)$ is obvious---just take a common eigenbasis for the projectors in $\mathcal{E}=\mathcal{F}$.
To prove the converse implication, 
%
take an orthonormal basis $\{\bv_1,\dots,\bv_d\}$ of $\C^d$ and a partition $\tau=\{\tau_s:s\in S\}$ of $[d]$ satisfying the condition of part $(ii)$ of the statement.
For $s\in S$ and $j\in\tau_s$, consider the vector $\bw_{s,j}=(I-F_s)\bv_j$. Observe that the vectors $F_s\bv_j$ and $\bw_{s,j}$ are orthogonal. Therefore,
\begin{align*}
    1&=
    \norm{\bv_j}^2
    =\norm{F_s\bv_j+\bw_{s,j}}^2\\
    &=\norm{F_s\bv_j}^2+\norm{\bw_{s,j}}^2
    > \alpha^2+\norm{\bw_{s,j}}^2,
\end{align*}
whence we deduce that
    $\norm{\bw_{s,j}}<\sqrt{1-\alpha^2}$. 
Define the operator $G_s=E_sF_s$.  %
We obtain
    \begin{align}
    \label{eqn_1307_2001}
    \notag
        \norm{G_s\bv_j}
        &=
        \norm{E_sF_s\bv_j}
        =
        \norm{E_s(\bv_j-\bw_{s,j})}\\
        \notag
        &\geq \norm{E_s\bv_j}-\norm{E_s\bw_{s,j}}
        >\alpha-\norm{\bw_{s,j}}\\
        &>\alpha-\sqrt{1-\alpha^2}
        =
        \sqrt{\frac{d-1}{d}},
    \end{align}
    which holds for each $s\in S$ and each $j\in\tau_s$.
 (The first inequality is the reverse triangle inequality, and the last equality follows from elementary algebra.)

Note now that each $G_s$ is a projector onto $\C^d$ as $[E_s,F_s]=O$. Also, for $s\neq\tilde s\in S$, we find 
\begin{align*}
G_s G_{\tilde s}=E_sF_sE_{\tilde s}F_{\tilde s}
=
E_sE_{\tilde s}F_{s}F_{\tilde s}
=
O,
\end{align*}
where we have used that the measurements $\mathcal{E}$ and $\mathcal{F}$ are commuting.
Hence, the operator $G=\sum_{s\in S}G_s$ is the sum of mutually orthogonal projectors, and it is thus a projector itself.
%
We find
\begin{align*}
    \tr(G)
    &=
    \sum_{j\in[d]}\bv_j^\ast G\bv_j
    =
    \sum_{s\in S}\sum_{j\in[d]}\bv_j^\ast G_s\bv_j\\
    &=\sum_{s\in S}\sum_{j\in[d]}\norm{G_s\bv_j}^2
    \geq 
    \sum_{s\in S}\sum_{j\in\tau_s}\norm{G_s\bv_j}^2\\
    &
    >
    \sum_{s\in S}\sum_{j\in\tau_s}\frac{d-1}{d}
    =
    d\frac{d-1}{d}=d-1,
\end{align*}
where the second equality holds since each $G_s$ is a projector, the second inequality comes from~\cref{eqn_1307_2001}, and the fourth equality holds since the set $\{\tau_s:s\in S\}$ partitions $[d]$. As a consequence, $\tr(G)=d$. It follows that $G=I$ and, thus, $\mathcal{E}=\mathcal{F}$, as required. 
\end{proof}

The next step is to show that there exists a \textit{finite} set $\mathcal{B}$ of ``indicator'' orthonormal bases in $\C^d$ having the property that any measurement is
``almost diagonalised'' in at least one of those bases. To prove this, we use the compactness of the unitary group in $\C^{d\times d}$. 

We note that, at a high level, it is the finiteness of $\mathcal{B}$
that allows any perfect quantum strategy to be simulated via finitely many bits of classical communication. Indeed, Alice and Bob will be able to classically simulate their quantum strategy by sending from one to the other information on the eigenstructure of the measurement that they perform upon receiving the Referee's question. Since finitely many bases are sufficient to be ``close enough'' to diagonalise any measurement, the information can be encoded in a finite message.

\begin{lem}
\label{lem_compactness_Ud}
    For each $d\in\N$ and each $1>\alpha\in\R$ there exists a finite set $\mathcal{B}$ of orthonormal bases of $\C^d$ such that, given any finite set $S$ and any measurement $\mathcal{E}\in\PVM_{\C^d}(S)$, for at least one basis $\{\bv_1,\dots,\bv_d\}\in\mathcal{B}$ there exists a partition $\tau=\{\tau_s:s\in S\}$ of $[d]$ such that $\norm{E_s\bv_j}>\alpha$ for each $s\in S$ and each $j\in\tau_s$.
\end{lem}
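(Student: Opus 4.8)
The plan is to use a compactness argument on the unitary group $U(d)\subseteq\C^{d\times d}$, which is a compact topological space. First I would observe that the hypothesis of \cref{lem_close_PVMs_means_equal} requires diagonalisation only \emph{up to a fixed slack} $\alpha$, so what we really need is that finitely many bases form a ``net'' good enough for any measurement. The key reduction is that the relevant property of a basis $B=\{\bv_1,\dots,\bv_d\}$ vis-\`a-vis a measurement $\mathcal E\in\PVM_{\C^d}(S)$ depends only on the numbers $\norm{E_s\bv_j}$, and each projector $E_s$ is diagonal (with $0/1$ eigenvalues) in \emph{some} orthonormal basis; equivalently, $\mathcal E$ is determined by a unitary $W$ (conjugating the standard coordinate projectors onto the coordinate subspaces indexed by the parts of a partition of $[d]$) together with that partition. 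Since there are only finitely many partitions of $[d]$, it suffices to handle, for each fixed partition, the family of measurements parameterised by $W\in U(d)$.

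Next I would set up the following: for $W\in U(d)$ let $\mathcal E^W$ be the measurement whose $s$-th projector is the orthogonal projection onto $W(\Span\{e_j:j\in\tau_s\})$, where $\tau$ is the fixed partition. Given an arbitrary basis $B$, represented by the unitary $U_B$ whose columns are the $\bv_j$, the quantity $\norm{E^W_s\bv_j}$ is a continuous function of the pair $(W,U_B)\in U(d)\times U(d)$. For the \emph{perfect} basis $B=B^W$ (columns of $W$), we have $\norm{E^W_s\bv_j}=1>\alpha$ for $j\in\tau_s$. By continuity, for each $W$ there is an open neighbourhood $\mathcal N_W\subseteq U(d)$ of $W$ such that for all $W'\in\mathcal N_W$ the basis $B^W$ still satisfies $\norm{E^{W'}_s\bv_j}>\alpha$ for all $s$ and all $j\in\tau_s$ (with the \emph{same} partition $\tau$). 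The collection $\{\mathcal N_W:W\in U(d)\}$ is an open cover of the compact space $U(d)$, so a finite subcover $\mathcal N_{W_1},\dots,\mathcal N_{W_N}$ exists; put the bases $B^{W_1},\dots,B^{W_N}$ into $\mathcal B$. Repeating over the finitely many partitions of $[d]$ and taking the union gives a finite $\mathcal B$.

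Finally I would check that this $\mathcal B$ works for an arbitrary finite $S$ and arbitrary $\mathcal E\in\PVM_{\C^d}(S)$. Diagonalise the projectors of $\mathcal E$ simultaneously (they are mutually orthogonal projectors summing to $I$): there is a partition $\tau'$ of $[d]$ and a unitary $W'$ with $E_s=$ projection onto $W'(\Span\{e_j:j\in\tau'_s\})$ for $s$ in the support of $\tau'$, and $E_s=O$ (so $\tau'_s=\emptyset$) otherwise. Since $W'\in\mathcal N_{W_i}$ for some $i$ among the neighbourhoods built from the partition $\tau'$, the basis $B^{W_i}\in\mathcal B$ satisfies $\norm{E_s\bv_j}>\alpha$ for every $s$ and every $j$ in the corresponding part. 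One subtlety to watch: the index set $S$ of the measurement is arbitrary and can be larger than $d$, but at most $d$ of the projectors are nonzero and the empty parts cause no trouble, so after relabelling the nonempty parts by distinct elements of $S$ the partition $\tau$ of $[d]$ required by the statement is obtained (assigning the empty parts to the remaining elements of $S$ arbitrarily). The main obstacle is purely bookkeeping rather than conceptual: making precise the parameterisation of measurements by $(W,\tau)$ and ensuring the finitely many partitions and the compactness of $U(d)$ are combined correctly, while being careful that the slack $\alpha<1$ is used via continuity at the ``perfect'' points and nowhere else.
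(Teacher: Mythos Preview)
Your proposal is correct and follows essentially the same approach as the paper: associate to each PVM a diagonalising unitary and invoke the compactness of $U(d)$. The paper's proof is a two-sentence sketch of exactly this idea, while you spell out the continuity/open-cover/finite-subcover argument and the bookkeeping for arbitrary index sets $S$ in full; there is no substantive difference in strategy.
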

\begin{proof}
    We can associate a unitary matrix $M$ to each PVM $\mathcal{E}$ onto $\C^d$, by picking an orthonormal basis diagonalising all projectors in $\mathcal{E}$ and viewing them as the columns of $M$. Thus, the result directly follows from the compactness of the unitary group $U(d)$.
\end{proof}

We will also need the next observation, whose proof is deferred to~\cref{app_omitted_proofs}, showing that the pattern of a structure is equal to that of its quantum version.

\begin{restatable}{prop}{proppatternpreservationunderquantum}
\label{prop_pattern_preservation_under_quantum}
    For any $\sigma$-structure $\Y$ and any Hilbert space $\HH$ it holds that $\ptn(\Y)=\ptn(\Y^\HH)$.
\end{restatable}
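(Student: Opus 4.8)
The statement asserts $\ptn(\Y) = \ptn(\Y^\HH)$ for any $\sigma$-structure $\Y$ and any Hilbert space $\HH$. Since both sides are $\sigma$-patterns, it suffices to fix a symbol $R \in \sigma$ of arity $r$ and show that the $r$-patterns $P_R$ (from $\Y$) and $P_R^\HH$ (from $\Y^\HH$) coincide. Recall $P_R = \{\pi : \pi \preceq \pi_\by \text{ for some } \by \in R^\Y\}$, and similarly $P_R^\HH = \{\pi : \pi \preceq \pi_{\mathcal{E}} \text{ for some } (\mathcal{E}^{(1)},\dots,\mathcal{E}^{(r)}) \in R^{\Y^\HH}\}$, where $\pi_{(\mathcal{E}^{(1)},\dots,\mathcal{E}^{(r)})}$ is the partition grouping coordinates $i,i'$ together iff $\mathcal{E}^{(i)} = \mathcal{E}^{(i')}$ as PVMs. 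Since $P_R$ and $P_R^\HH$ are both downward closed under $\preceq$, it is enough to compare the maximal partitions: I would show $\{\pi_\by : \by \in R^\Y\}$ and $\{\pi_{(\mathcal E^{(1)},\dots,\mathcal E^{(r)})} : (\mathcal E^{(1)},\dots,\mathcal E^{(r)}) \in R^{\Y^\HH}\}$ generate the same down-set.

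For the inclusion $\ptn(\Y) \subseteq \ptn(\Y^\HH)$: given $\by = (y_1,\dots,y_r) \in R^\Y$, I embed it into $\Y^\HH$ using the canonical ``scalar'' PVMs. Concretely, fix a one-dimensional projector decomposition, or more simply use the map $y \mapsto \mathcal{E}^{(y)}$ sending each domain element to a PVM where $E_y = I$ and $E_{y'} = O$ for $y' \neq y$; this is the natural homomorphism $\Y \to \Y^\HH$ provided by \cref{YH_monadic} applied to the identity (or can be checked directly from \eqref{eqn_defining_quantum_structure} with $\mathcal{F}$ supported on the single tuple $\by$). Under this map, $\mathcal{E}^{(y_i)} = \mathcal{E}^{(y_{i'})}$ exactly when $y_i = y_{i'}$, so $\pi_{(\mathcal E^{(y_1)},\dots,\mathcal E^{(y_r)})} = \pi_\by$, giving $\pi_\by \in P_R^\HH$ and hence $P_R \subseteq P_R^\HH$.

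For the reverse inclusion $\ptn(\Y^\HH) \subseteq \ptn(\Y)$: take $(\mathcal{E}^{(1)},\dots,\mathcal{E}^{(r)}) \in R^{\Y^\HH}$, witnessed by some $\mathcal{F} = \{F_\by\}_{\by \in R^\Y} \in \PVM_\HH(R^\Y)$ satisfying \eqref{eqn_defining_quantum_structure}. Since $\sum_{\by \in R^\Y} F_\by = I \neq O$, there is at least one tuple $\by^\star \in R^\Y$ with $F_{\by^\star} \neq O$; pick a unit vector $\psi$ in its range. I claim $\pi_{(\mathcal E^{(1)},\dots,\mathcal E^{(r)})} \preceq \pi_{\by^\star}$, which suffices since $\by^\star \in R^\Y$ gives $\pi_{\by^\star} \in P_R$ and $P_R$ is downward closed. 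To see the claim: if coordinates $i,i'$ are in the same part of $\pi_{(\mathcal E^{(1)},\dots,\mathcal E^{(r)})}$, i.e.\ $\mathcal{E}^{(i)} = \mathcal{E}^{(i')}$, I must show $y^\star_i = y^\star_{i'}$. Using \eqref{eqn_defining_quantum_structure}, for any $y \in Y$ we have $E^{(i)}_y = \sum_{\by : y_i = y} F_\by$, and the ranges of the $F_\by$ are mutually orthogonal; so $E^{(i)}_{y^\star_i} \psi = \psi$ (since $\psi$ is in the range of $F_{\by^\star}$ and $y^\star_i$ is the value of $\by^\star$ at coordinate $i$), whereas $E^{(i)}_y \psi = 0$ for $y \neq y^\star_i$. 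The same holds at coordinate $i'$. If $y^\star_i \neq y^\star_{i'}$, then $E^{(i)}_{y^\star_i}\psi = \psi$ but $E^{(i')}_{y^\star_i}\psi = 0$, contradicting $\mathcal{E}^{(i)} = \mathcal{E}^{(i')}$. Hence $y^\star_i = y^\star_{i'}$, proving the claim, and therefore $P_R^\HH \subseteq P_R$.

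Combining the two inclusions over all $R \in \sigma$ yields $\ptn(\Y) = \ptn(\Y^\HH)$. The only mildly delicate point is the bookkeeping around downward closure and the correct definition of the partition $\pi_{(\mathcal E^{(1)},\dots,\mathcal E^{(r)})}$ attached to a tuple of the infinite-domain structure $\Y^\HH$; once that is pinned down, the argument is the short orthogonality computation above, so I do not anticipate a substantial obstacle.
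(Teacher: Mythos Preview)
Your proposal is correct and follows essentially the same route as the paper's proof: both directions use the canonical ``scalar'' embedding $y\mapsto\{E_y=I,\,E_{y'}=O\}$ for $\ptn(\Y)\subseteq\ptn(\Y^\HH)$, and for the reverse inclusion both pick a tuple $\by^\star\in R^\Y$ with $F_{\by^\star}\neq O$ and use mutual orthogonality of the $F_\by$'s to show $\pi_{\be}\preceq\pi_{\by^\star}$. The only cosmetic difference is that you phrase the orthogonality argument via a unit vector $\psi$ in the range of $F_{\by^\star}$, whereas the paper argues directly at the level of projectors; both are equally valid.
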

\begin{proof}[Proof of \cref{prop_QU_coloring_separable_structures_general}]
Let $\mathcal{B}$ be a finite set of orthonormal bases of $\C^d$ witnessing the truth of \cref{lem_compactness_Ud} applied to $d$ and $\alpha=\sqrt{\frac{1}{2}(1+\sqrt{1-1/d^2})}$. Pick a vertex $\mathcal E$ of $\Y^{\C^d}$; i.e., a measurement $\mathcal E\in\PVM_{\C^d}(Y)$.
Let $n=|Y|$.
We colour $\mathcal E$ with the pair $\vartheta(\mathcal E)=(B,\tau)$, where $B$ is a basis from $\mathcal B$ and $\tau$ is a partition of $[d]$ in $n$-many parts, satisfying the condition in \cref{lem_compactness_Ud}. 
Observe that there are at most $n^d$ such partitions $\tau$. Hence,
letting $f(d)$ be the size of $\mathcal{B}$, $\vartheta$ yields a function from the domain of $\Y^{\C^d}$ to a subset of $[f(d)]\times [n^d]$, which we identify with a subset of the vertex set of $\K_{m,\ptn(\Y)}$. 

We are left to show that $\vartheta$ is a proper colouring.
Take a symbol $R\in\sigma$ of some arity $r$, and a tuple $\be=(\mathcal{E}^{(1)},\dots,\mathcal{E}^{(r)})\in R^{\Y^{\C^d}}$. 
Using \cref{prop_pattern_preservation_under_quantum}, we deduce that  $\pi_\be$ belongs to the $R$-th entry of $\ptn(\Y^{\C^d})=\ptn(\Y)$; i.e., writing $\ptn(\Y)=(P_R)_{R\in\sigma}$, we have that $\pi_\be\in P_R$.
Let $\bc=(\vartheta(\mathcal{E}^{(1)}),\dots,\vartheta(\mathcal{E}^{(r)}))$; we will argue that $\pi_\bc = \pi_\be$, which implies that $\bc \in R^{\K_{m,\ptn(\Y)}}$ by the definition of the complete structure.
Suppose, for the sake of contradiction, that $\pi_\bc \neq \pi_\be$.
This means that there exist two indices $i,j\in [r]$ such that $\mathcal{E}^{(i)}\neq\mathcal{E}^{(j)}$ but
$\vartheta(\mathcal{E}^{(i)})=\vartheta(\mathcal{E}^{(j)})$. Let $(B,\tau)$ be the common colour of $\mathcal{E}^{(i)}$ and $\mathcal{E}^{(j)}$, with $B=(\bv_1,\dots,\bv_d)$. By construction of $\vartheta$, it must hold that $\norm{E_y^{(i)}\bv_j}>\alpha$ and $\norm{E_y^{(j)}\bv_j}>\alpha$ for each $y\in Y$ and each $j\in\tau_y$.

It easily follows from the definition of $\Y^\HH$ in \cref{subsec_quantum_strategies} that the two measurements $\mathcal{E}^{(i)}$ and $\mathcal{E}^{(j)}$ are commuting. Indeed, there exists some $\mathcal F\in\PVM_{\C^d}(R^{\Y})$ such that, for each $y,\tilde y\in Y$, we have
\begin{align*}
    E^{(i)}_y=\sum_{\substack{\by\in R^{\Y}\\y_i=y}}F_{\by},\quad
    E^{(j)}_{\tilde y}=\sum_{\substack{\by\in R^{\Y}\\y_j=\tilde y}}F_{\by},
\end{align*}
so the fact that $\mathcal{F}$ is a PVM implies that $[E^{(i)}_y,E^{(j)}_{\tilde y}]=O$. We can then apply \cref{lem_close_PVMs_means_equal} to deduce that $\mathcal{E}^{(i)}=\mathcal{E}^{(j)}$, a contradiction.
\end{proof}
The next result, whose proof is deferred to~\cref{app_omitted_proofs}, shows that, just like in the binary case, the structure $\uprelstr {k,\Y}$ contains arbitrarily large cliques when $k$ grows large. Upper-bounding the dependence of $k$ on the size of the complete structure will require some additional care, though.

\begin{restatable}{prop}{propanycliquesinAlice}
\label{prop_any_cliques_in_Alice}
    Take a $\sigma$-structure $\Y$ and an integer $n\in\N$, and let $\ptn(\Y)=(P_R)_{R\in\sigma}$.
    Then
    $
        \K_{n,\ptn(\Y)}\to\uprelstr {k,\Y}
    $
    with $k = \sum_{R \in \sigma} |P_R| \cdot \mathcal{O}_{\ar(R)}(\log_2 n)$.
\end{restatable}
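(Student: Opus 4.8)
The plan is to build a homomorphism $g:\K_{n,\ptn(\Y)}\to\uprelstr{k,\Y}$ coordinate-by-coordinate, where the $k$ coordinates of $\uprelstr{k,\Y}$ are partitioned into blocks, one block for each relation symbol $R\in\sigma$ and each partition $\pi\in P_R$. Fix $R$ of arity $r$ and $\pi\in P_R$; by definition of $\ptn(\Y)$ there is a witness tuple $\by^{\pi}\in R^\Y$ with $\pi\preceq\pi_{\by^{\pi}}$, so $\by^{\pi}$ is constant on each block of $\pi$ and we may think of it as an injective-on-blocks assignment $q_\pi$ from the blocks of $\pi$ into $Y$. The job of the coordinates in the $(R,\pi)$-block is the following: given a tuple $\ba\in R^{\K_{n,\ptn(\Y)}}$ with $\pi_\ba\preceq\pi$ (the ``hardest'' case, handled by the defining relation of the complete structure), we want some coordinate $c$ in the block such that the $c$-th projections of $g(a_1),\dots,g(a_r)$ equal $q_\pi$ applied to the blocks containing $1,\dots,r$ — i.e.\ form the tuple $\by^{\pi}\in R^\Y$, witnessing membership in $R^{\uprelstr{k,\Y}}$ via \cref{defn_alice_power}. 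Since $\pi_\ba\preceq\pi$, the value $a_j\in[n]$ determines which block of $\pi$ index $j$ lies in (equal $a_j$'s force the same block), so it suffices to have coordinates that, reading the element of $[n]$, output the corresponding $q_\pi$-value; but elements of $[n]$ that are supposed to land in different blocks of $\pi$ must be told apart.

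The key trick for keeping $k$ small is a binary-encoding / separating-family argument. For a fixed $(R,\pi)$, the blocks of $\pi$ number at most $r=\ar(R)$, so $q_\pi$ takes at most $r$ distinct values in $Y$. We need a family of functions $[n]\to Y$ — each such function being one coordinate of the block — so that for every pair $a\neq a'$ in $[n]$ that could be forced into distinct $\pi$-blocks, at least one function in the family separates them \emph{and} assigns them the ``right'' $q_\pi$-values simultaneously across the whole tuple. The clean way is: pick $t=\lceil\log_2 r\rceil$ bits to index the $\le r$ blocks, and for each of the $n$ elements of $[n]$ we only care about the $t$-bit label of the block it is routed to; a standard separating-set construction (for each subset-coordinate of the $t$-bit labels, one $\uprelstr{}$-coordinate that maps an element of $[n]$ to the appropriate $Y$-value according to that bit) gives $\mathcal{O}(t)=\mathcal{O}(\log r)$ coordinates — but we also need to disambiguate which of the $n$ elements of $[n]$ we are looking at, which costs $\mathcal{O}(\log_2 n)$ coordinates, with the hidden constant depending on $r$ (this is the source of the $\mathcal{O}_{\ar(R)}(\log_2 n)$ term). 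Summing the block sizes over all $R\in\sigma$ and all $\pi\in P_R$ gives exactly $k=\sum_{R\in\sigma}|P_R|\cdot\mathcal{O}_{\ar(R)}(\log_2 n)$, as claimed.

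I expect the main obstacle to be the bookkeeping that makes a \emph{single} coordinate simultaneously work for \emph{all} index positions $1,\dots,r$ of a given tuple: we must ensure that whenever $\pi_\ba\preceq\pi$, there is one coordinate $c$ in the $(R,\pi)$-block on which the projection of $g$ restricted to the entries of $\ba$ reproduces $\by^{\pi}$ exactly — not merely separates pairs. The resolution is to let $g(a)_c$, for $c$ in the $(R,\pi)$-block, depend only on $a$ through a fixed ``rounding'' of $[n]$ onto the blocks of $\pi$ that is chosen once and for all (using the $\mathcal{O}_{r}(\log n)$ coordinates as the address bits of a decision procedure), so that the $r$ entries of $\ba$, being $\preceq\pi$-consistent, are automatically routed consistently. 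One also has to dispose of the easy case $\pi_\ba\prec\pi$ strictly (fewer distinct values than $\pi$ has blocks): there one simply notes that a coarser witness partition $\pi'\succeq\pi_\ba$ with $\pi'\in P_R$ is available — in fact $\pi_\ba$ itself, or any refinement-closed member — reducing to a block already handled, so no extra coordinates are needed. Verifying that the resulting $g$ preserves \emph{every} relation (including tuples $\ba$ with $\pi_\ba$ strictly finer than every $\pi\in P_R$, which cannot occur precisely because such $\ba\notin R^{\K_{n,\ptn(\Y)}}$) then closes the argument.
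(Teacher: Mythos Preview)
Your high-level plan---one block of coordinates per pair $(R,\pi)$, each equipped with a witness $\by^\pi\in R^\Y$---matches the paper exactly. The genuine gap is the construction \emph{inside} a block. What is needed there is a family $g_1,\dots,g_m\colon[n]\to Y$ (the columns of the block) such that for \emph{every} $\ba\in[n]^r$ with $\pi_\ba\preceq\pi$ there exists a \emph{single} index $c$ with $(g_c(a_1),\dots,g_c(a_r))=\by^\pi$. Your proposed ``fixed rounding of $[n]$ onto the blocks of $\pi$ chosen once and for all'' cannot deliver this: already for $r=2$, $\pi=\{\{1\},\{2\}\}$, $\by^\pi=(y_1,y_2)$ with $y_1\ne y_2$, the tuples $\ba=(5,7)$ and $\ba'=(7,5)$ both satisfy $\pi_\ba=\pi$, yet the first demands a column with $g_c(5)=y_1,\ g_c(7)=y_2$ and the second demands the opposite assignment. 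No single rounding serves both, and a pairwise separating family only promises some column per \emph{pair} of elements---not one column that matches a prescribed $r$-tuple of target values simultaneously, which is exactly what the Alice-power relation in \cref{defn_alice_power} requires.

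The paper closes this gap with a \emph{covering array} (\cref{thm_covering_matrix}, from~\cite{godbole1996covarray}): an $n\times m$ matrix over the alphabet $X$ of values appearing in $\by^\pi$, with $m=\Theta_{r,|X|}(\log n)$, such that every set of $r$ rows realises \emph{all} of $X^r$ among its columns. This strength-$r$ universality is precisely the property you need and is strictly stronger than any binary-encoding or separating-set argument. With the matrices $M_{R,\pi}$ in hand, the homomorphism sends $i\in[n]$ to the $i$-th row of their concatenation; for $\ba\in R^{\K_{n,\ptn(\Y)}}$ one uses the block for $\pi=\pi_\ba$ (which lies in $P_R$, since $P_R$ is downward closed under $\preceq$), where some column equals $\by^\pi$ by the covering property.
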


We can now finalise the proof of our main results.

\begin{proof}[Proof of~\cref{thm_main_alice_general}]
     For $n=\chi(\Y)$, it holds that $\Y\to\K_{n,\ptn(\Y)}$. Using~\cref{YH_monadic},~\cref{prop_QU_coloring_separable_structures_general}, and~\cref{prop_any_cliques_in_Alice}, we obtain
    \begin{align*}
        \Y^\HH\to\K_{n,\ptn(\Y)}^\HH\to\K_{m,\ptn(\Y)}\to\uprelstr {k,\Y}
    \end{align*}
    for suitable $m,k\in\N$.
    By virtue of~\cref{thm_quantum_csp_is_csp} and~\cref{thm_algebraic_characterisation_alice_to_bob}, the above homomorphism is equivalent to the claim.
\end{proof}

The proof of~\cref{thm_main_bob_general} requires a few more steps. Indeed, the last link of the chain $\Y^\HH\to\K_{n,\ptn(\Y)}^\HH\to\K_{m,\ptn(\Y)}\to \uprelstrbob{k,\Y}$ does not hold in this case, as arbitrarily large complete structures can be found in Bob powers of only certain structures $\Y$, unlike for Alice powers.\footnote{\label{footnote_large_bob_cliques}For the reader's benefit, we have included in~\cref{appendix_large_cliques_in_bob_powers} a characterisation of when this happens.}
Hence, we construct the desired homomorphism $\Y^\HH \to \uprelstrbob{k, \Y}$ directly.

\begin{proof}[Proof of~\cref{thm_main_bob_general}]
    By~\cref{thm_quantum_csp_is_csp} and~\cref{thm_algebraic_characterisation_bob_to_alice}, the claim is equivalent to the fact that $\Y^\HH\to\uprelstrbob{k,\Y}$.

    \cref{prop_QU_coloring_separable_structures_general} yields a homomorphism $c$ from $\Y^\HH$ to $\K_{m,\ptn(\Y)}$ for some $m$.
Moreover, it is not hard to check that $c$ is pattern-preserving, in the sense that $\pi_\be = \pi_{c(\be)}$ for each symbol $R \in \sigma$ and each tuple $\be \in R^{\Y^\HH}$. Define a non-zero selector function $s : \PVM_\HH(Y) \to Y$ which, for any $\mathcal{E} \in \PVM_\HH(Y)$, selects an arbitrary element $s(\mathcal{E})$ with $E_{s(\mathcal{E})} \neq O$.
Let $k = m |Y|$, and consider the function
$h : \PVM_\HH(Y) \to [k] \times Y$ defined by
    $$\PVM_\HH(Y) \ni \mathcal{E} \mapsto ((c(\mathcal{E}), s(\mathcal{E})), \, s(\mathcal{E})) \in [k] \times Y.$$
    We claim that $h$ is the required homomorphism from $\Y^\HH$ to $\uprelstrbob{k,\Y}$.

    Fix any $R \in \sigma$ of arity, say, $r$; pick a tuple $\be =(\mathcal{E}^{(1)}, \ldots, \mathcal{E}^{(r)}) \in R^{\Y^\HH}$ and consider the entrywise application of the function $h$.
    To confirm that the tuple $h(\be)$ is in $R^{\uprelstrbob{k, \Y}}$, pick a pair $(\tilde c, \tilde s) \in [m]\times Y$ and define the set $S = \{i \in [r] : c(\mathcal{E}^{(i)}) = \tilde c \text{ and } s(\mathcal{E}^{(i)}) = \tilde s\}$.
    We claim that some tuple in $R^\Y$ agrees with $s(\be)$ on all coordinates in $S$. The case $S = \emptyset$ is trivial.
    Otherwise, we proceed
    to find an element $\mathcal{F} \in \PVM_\HH(R^\Y)$ for which $E^{(i)}_{\tilde{s}} = \sum_{\by \in R^\Y : y_i = \tilde{s}} F_\by$ for each $i \in S$, as per~\cref{eqn_defining_quantum_structure}.
    %
    Recall that the projectors in $\mathcal{F}$ are pairwise orthogonal.
    Furthermore, using that $\pi_\be=\pi_{c(\be)}$, we obtain that $E^{(i)}_{\tilde s}$ is the same non-zero projector for each $i\in S$.
    Therefore, there must exist some $\tilde\by\in R^\Y$ such that $F_{\tilde{\by}}$ appears in the summation of each of them. Hence, $\tilde{\by}$ is such that $\tilde{y}_i = \tilde{s} = s(\mathcal{E}^{(i)})$ for all $i \in S$ and, thus, it witnesses that $h(\be)\in R^{\uprelstrbob{k,\Y}}$.
\end{proof}

\begin{rem}
For clarity, 
the statements of \cref{thm_main_alice_general} and \cref{thm_main_bob_general} do not specify any upper bound on $k$.
However, bounds can be obtained as follows.
Let $f$ be the function from \cref{prop_QU_coloring_separable_structures_general} and $d = \dim\HH$.
Then,\footnote{By $\mathcal{O}_{\ptn(\Y)}$ we mean that the hidden constant can depend on $|P_R|$ and $\ar(R)$ for all $R \in \sigma$.} the value of $k$ in~\cref{thm_main_alice_general} is $\mathcal{O}_{\ptn(\Y)}(d\log \chi(\Y) + \log f(d))$,
while the value of $k$ in~\cref{thm_main_bob_general} is $\leq|Y|^{d+1} \cdot f(d)$.
\end{rem}

\subsection{The quantum vs. classical chromatic gap}
\label{subsec_quantum_vs_classical_chromatic_gap}
Given a graph $\X$ and an integer $d\in\N$, let $\chi_d(\X)$ be the \textit{quantum chromatic number} of $\X$ in dimension $d$; i.e., the minimum $n\in\N$ such that $\X\to\K_n^{\C^d}$. 
The results proved in~\cref{sec_entangled_vs_classical} can be used to bound the gap between quantum and classical chromatic number of graphs in any fixed dimension $d$.

Since $\K_n\to\K_n^{\C^d}$, it is clear that $\chi_d(\X)\leq\chi(\X)$ for each $\X$.
For $d\geq 3$, 
the following lower bound on the gap between $\chi_d$ and $\chi$ was established in~\cite[Cor.35--Ex.36]{ciardo_quantum_minion}.
\begin{thm}\cite{ciardo_quantum_minion}
    For each $d\geq 3$ and each $n\geq 3$ there exists a graph $\X$ such that $\chi_d(\X)\leq n$ but $\chi(\X)\geq 2n-1$.
\end{thm}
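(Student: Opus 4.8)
Fix $d\geq 3$ and $n\geq 3$. Since $\chi_d(\X)\leq n$ means precisely that $\X\to\K_n^{\C^d}$, the statement (quoted here from~\cite{ciardo_quantum_minion}) splits into two tasks: $(a)$ exhibit a finite graph $\X$ together with an explicit homomorphism $\X\to\K_n^{\C^d}$, and $(b)$ show $\X\not\to\K_{2n-2}$. For $(a)$ it helps to first record what the edges of $\K_n^{\C^d}$ mean: unwinding~\cref{eqn_defining_quantum_structure} with $R^{\K_n}=\{(a,b):a\neq b\}$, two measurements $\mathcal E,\mathcal F\in\PVM_{\C^d}([n])$ are adjacent whenever $E_aF_a=O$ for every colour $a\in[n]$, and this condition is also sufficient when $\mathcal E$ and $\mathcal F$ commute (take the common refinement $G_{(a,b)}=E_aF_b$). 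Thus a homomorphism $\X\to\K_n^{\C^d}$ is nothing but an assignment $v\mapsto\mathcal E^v=\{E^v_1,\dots,E^v_n\}$ of $\C^d$-PVMs to the vertices of $\X$ such that $E^u_aE^v_a=O$ for every edge $uv$ and every colour $a$.

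The plan for $(a)$ is to let the vertices of $\X$ be geometric data in $\C^d$---orthonormal frames, or lines/subspaces in a well-chosen finite configuration---for which a PVM labelling with this componentwise-orthogonality property is built in, so that the homomorphism is immediate. (For $d=n=3$, for instance, one can draw vertices from orthonormal bases of $\C^3$, taking $\mathcal E^v$ to be the rank-one PVM the basis defines; then a basis and its cyclic shift are automatically adjacent, which already forces many triangles in $\X$.) For general $d$ one adapts the configuration to $\C^d$; reductions such as $\chi_{kd}(\X)\leq\chi_d(\X)$, obtained via the diagonal embedding $\mathcal E\mapsto\mathcal E^{\oplus k}$ of $\C^d$-PVMs into $\C^{kd}$ (which preserves the relation $E_aF_a=O$), may also be used to cut down cases.

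For $(b)$, one exploits the rigidity of the configuration to rule out $2n-2$ colours. A clean route is to make $\X$ vertex-transitive and compute its fractional chromatic number: a Hoffman-type eigenvalue bound on the independence number $\alpha(\X)$ would give $\chi_f(\X)=|V(\X)|/\alpha(\X)=2n-1$, whence $\chi(\X)\geq\chi_f(\X)=2n-1$. Alternatively, antipodality among the geometric data can be leveraged to turn a hypothetical $(2n-2)$-colouring into a continuous odd map contradicting Borsuk--Ulam. Either way, the crux---and the main obstacle---is to design the configuration so that $(a)$ and $(b)$ hold simultaneously: the quantum $n$-colouring must survive while every classical colouring provably loses nearly a factor of two, and the independence (or topological) bound must be sharp enough to yield the exact value $2n-1$ rather than merely ``more than $n$''.
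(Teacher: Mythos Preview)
The paper does not prove this theorem at all: it is quoted verbatim from~\cite[Cor.~35--Ex.~36]{ciardo_quantum_minion} and used only as context for~\cref{thm_upper_bound_qcn}. So there is no ``paper's own proof'' to compare your proposal against.

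That said, your proposal is not a proof either. It is a description of the \emph{shape} a proof would have---correctly identifying that one needs a family of graphs $\X$ carrying an explicit $\PVM_{\C^d}([n])$-labelling with componentwise orthogonality along edges, together with a lower bound $\chi(\X)\geq 2n-1$---but it does not supply the construction. The parenthetical about $d=n=3$ (``a basis and its cyclic shift are automatically adjacent'') is suggestive but not a definition of $\X$, and the two strategies you list for part $(b)$ (a Hoffman-type ratio bound on a vertex-transitive $\X$, or a Borsuk--Ulam obstruction) are both plausible in principle but are left entirely unexecuted. You yourself flag the gap: ``the crux---and the main obstacle---is to design the configuration so that $(a)$ and $(b)$ hold simultaneously.'' That is the whole theorem.

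If you want to turn this into an actual proof, you need to write down a concrete graph for each $n$ (and show it works for every $d\geq 3$, e.g.\ via your $\mathcal E\mapsto\mathcal E^{\oplus k}$ observation to reduce to a single small $d$), and then carry out one of the lower-bound arguments to completion. As it stands, the proposal is a reasonable outline but contains no verifiable mathematical content beyond the unwinding of the adjacency relation in $\K_n^{\C^d}$.
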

To the best of the authors' knowledge, prior to the current work, it was an open question whether the $\chi_d$ vs. $\chi$ gap is bounded or not. Using~\cref{prop_QKn_to_Km}, we answer in the affirmative. 
\begin{thm}
\label{thm_upper_bound_qcn}
For each graph $\X$ and each $d\in\N$, it holds that
$
    \chi(\X)\leq \alpha_d\cdot\chi_d(\X),
$
where $\alpha_d=(2+\mathcal{o}(1))^{d-1}$.
\end{thm}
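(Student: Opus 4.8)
The plan is to chain together the homomorphism $\X \to \K_n^{\C^d}$ coming from the definition of $\chi_d$ with the colouring homomorphism $\K_n^{\C^d} \to \K_m$ supplied by \cref{prop_QKn_to_Km}. Concretely, set $n = \chi_d(\X)$. By definition of the quantum chromatic number, we have $\X \to \K_n^{\C^d}$. By \cref{prop_QKn_to_Km} applied to the Hilbert space $\HH = \C^d$, we get $\K_n^{\C^d} \to \K_m$ with $m = n \cdot (2 + \mathcal{o}(1))^{d-1}$. Composing, $\X \to \K_m$, and therefore $\chi(\X) \leq m = n \cdot (2 + \mathcal{o}(1))^{d-1} = \alpha_d \cdot \chi_d(\X)$ with $\alpha_d = (2 + \mathcal{o}(1))^{d-1}$.

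The only subtlety is a bookkeeping one: \cref{prop_QKn_to_Km} is stated for a Hilbert space of dimension $d$, and one needs to make sure the notion of $\Y^\HH$ used there agrees with $\K_n^{\C^d}$ as used in the definition of $\chi_d$. Since $\chi_d$ is defined via $\X \to \K_n^{\C^d}$ and $\C^d$ is a $d$-dimensional Hilbert space, \cref{prop_QKn_to_Km} applies verbatim. One should also note that $\chi(\X)$ is the classical chromatic number (minimum $n$ with $\X \to \K_n$), so exhibiting a homomorphism $\X \to \K_m$ immediately yields $\chi(\X) \leq m$, which is exactly the bound claimed. There is essentially no obstacle here — the theorem is a direct corollary of \cref{prop_QKn_to_Km}, which is where all the technical work (Rogers' sphere-covering bound via \cref{thm_rogers_spheres}, the quantum minion homomorphisms of \cref{lem_QH_vs_Sminion}, and the orthogonal-colouring argument) has already been done.
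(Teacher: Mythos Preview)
Your proof is correct and follows exactly the same approach as the paper: set $n=\chi_d(\X)$, apply \cref{prop_QKn_to_Km} to obtain $\X\to\K_n^{\C^d}\to\K_{n\cdot\alpha_d}$, and conclude. The paper's version is simply a one-line compression of what you wrote.
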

\begin{proof}
    Suppose that $\chi_d(\X)=n$. Using~\cref{prop_QKn_to_Km}, we obtain
$
\X\to\K_n^{\C^d}\to\K_{n\cdot\alpha_d}$,
    whence the conclusion follows.
\end{proof}
We point out that~\cref{thm_upper_bound_qcn} implies that the quantum vs. classical chromatic gap is bounded when the dimension $d$ of the Hilbert space describing the system measured in the quantum protocol is \textit{fixed}. On the other hand, 
since $\alpha_d\to\infty$ when $d\to\infty$, the theorem does not imply that the gap is bounded when the dimension of the Hilbert space is arbitrary.
In fact, it was recently proved in~\cite{quantum_chromatic_gap} that, conditional to the existence of strongly pseudo-telepathic label-cover or $3$XOR instances, there exist graphs $\X$ such that $\chi_d(\X)=3$ for some $d$, but $\chi(\X)$ is arbitrarily large.\footnote{When $d\leq 2$, it follows from~\cite[Cor.13]{ciardo_quantum_minion} that $\Y^{\C^d}$ is homomorphically equivalent to $\Y$ for each structure $\Y$, so $\chi_2(\X)=\chi_1(\X)=\chi(\X)$ for every graph $\X$.}

We also point out that our definition of $\chi_d$ relies on the notion of quantum strategies as described in~\cref{subsec_quantum_strategies}, which is standard in the literature on quantum CSPs (see e.g.~\cite{abramsky2017quantum}). In particular, for such strategies, measurements corresponding to adjacent vertices must commute. In the literature on the quantum chromatic number, the commutativity assumption is usually not enforced (\cite{CameronMNSW07,MancinskaR16_oddities,MancinskaR16}; see also~\cite{paulsen2015quantum} for a presentation of several variants of the quantum chromatic number). The non-commutative version of quantum chromatic number in dimension $d$ is clearly not larger than $\chi_d$.\footnote{In fact, there are examples of graphs for which the two versions differ~\cite{amin_karamlou_quantum_chromatic_number}.}

\section{Further directions}
In the current paper, we explored the connection between classical strategies for CSPs with extra communication channels and those assisted by shared quantum entanglement. Another direction for further research appears to be understanding the complexity of the computational problem associated with the former type of strategies: 
\begin{center}
\textit{How hard is it to check whether a perfect classical strategy with extra communication exists?} 
\end{center}
For concreteness, we now discuss the case of strategies of the type ``Alice messages Bob''. The same questions can be asked for ``Bob messages Alice'' strategies.

Following the algebraic approach to CSPs, the problem consists in understanding the structure of the polymorphism clone $\Pol(\uprelstr{k,\Y})$, and its connection with $\Pol(\Y)$.\footnote{For the technical definitions, we refer the reader to~\cite{BKW17}.} Perhaps surprisingly, whatever connection exists between the two clones, it is not going to preserve clone (or minion\footnote{Minion homomorphisms are sometimes referred to as \textit{height-1 clone homomorphisms} in the literature, see e.g.~\cite{BKW17,BOP18}.}) homomorphism, as the next example illustrates.

\begin{example}\label{ex_Alice_power_of_hard_csp_is_p} 
Let $R_a$, for $a \in \{0,1\}$, denote the relation $\{0,1\}^3 \setminus\{(a,a,a)\}$.
    Consider the following pair of Boolean structures: $\Y$ has two ternary relations, $R_0$ and $R_1$, while $\Y'$ is the extension of $\Y$ with the additional ternary relation from $\NAE_2$.
    It is known that $\Pol(\Y) = \Pol(\Y')$, and the corresponding CSPs are both NP-complete \cite{BKW17}.

    Now, consider the Alice squares of both structures. \cref{example_AliceToBob_cliques} asserts that $\uprelstr{2,\NAE_2} = \NAE_4$, which defines an NP-complete CSP.
    In particular, this means that $\CSP(\uprelstr{2,\Y'})$ is NP-complete.
    On the other hand, both relations of $\uprelstr{2,\Y}$ contain a constant triple of $(0,1)$, which makes its $\CSP$ trivial.
    This implies that no clone (or minion) homomorphism can exist from $\Pol(\uprelstr{2,\Y})$ to $\Pol(\uprelstr{2,\Y'})$, because it would provide a reduction from an NP-complete CSP to a tractable one \cite{BOP18}.
    This example also illustrates that, while for many structures it happens that $\CSP(\Y)$ reduces to $\CSP(\uprelstr{k,\Y})$ via \textit{gadget} reductions, there exist NP-complete structures whose Alice power is tractable.
\end{example}

As a consequence, methods more advanced than clone or minion homomorphisms are needed to answer the above question.
A related direction is to study the complexity of the \textit{promise problem} of distinguishing, for an instance $\X$, whether $\X\in\CSP(\Y)$ or $\X\not\in\GameAB{k,\Y}$.
This is the \textit{promise CSP} associated with the template $(\Y,\uprelstr{k,\Y})$, in symbols $\PCSP(\Y,\uprelstr{k,\Y})$~\cite{AGH17,BG21:sicomp,BBKO21}.
\begin{example}
    It is not difficult to check that $\PCSP(\Y,\uprelstr{k,\Y})$ is not always tractable.
    For example, if $\Y$ is the clique $\K_n$, we have from~\cref{example_AliceToBob_cliques} that $\PCSP(\K_n,\uprelstr{k,\K_n})=\PCSP(\K_n,\K_{n^k})$, which is known to be NP-complete for $n$ large enough~\cite{KOWZ22}. 
    
    On the other hand, there are cases in which both $\Y$ and $\uprelstr{k,\Y}$ are NP-complete, but the promise problem is not. For example, consider the NP-complete structure $\textbf{1-in-3}$ (i.e., the Boolean structure having a single, ternary relation $\{(1,0,0),(0,1,0),(0,0,1)\}$). It is not hard to check that $\uprelstr{2,\textbf{1-in-3}}$ contains a copy of $\NAE_2$ and it is NP-complete. However, the chain
$
        \textbf{1-in-3}
        \to\NAE_2
        \to
        \uprelstr{2,\textbf{1-in-3}}
$
    implies that $\PCSP(\textbf{1-in-3},\uprelstr{2,\textbf{1-in-3}})$ reduces to $\PCSP(\textbf{1-in-3},\NAE_2)$, which is tractable in polynomial time~\cite{BG21:sicomp}.
\end{example}

\bibliographystyle{alpha}
\bibliography{bibliography.bib}

\appendix

\section{Large complete structures in Bob powers}
\label[appendix]{appendix_large_cliques_in_bob_powers}

This section is devoted to characterising structures $\Y$ that contain arbitrarily large complete structures in their (sufficiently large) Bob power, as mentioned in~\cref{footnote_large_bob_cliques}. We will capture such structures through the following definition.
\begin{defn}
    Let $\Y$ be a $\sigma$-structure and write $\ptn(\Y)=(P_R)_{R\in\sigma}$. A vertex $v \in Y$ is called \emph{central in $\Y$} if, for each $R\in\sigma$ and for each $V\in\pi \in P_R$, there is some tuple $\by\in R^\Y$ such that $y_p=v$ for each $p\in V$.
\end{defn}

\begin{example}
    Consider the structure $\Y$ with domain $Y = [7]$ and a single 4-ary relation $R^\Y = \{(1, 1, 2, 3), (4, 5, 1, 1), (6, 6, 7, 7)\}$. The vertex $1$ is central, but it does not appear in any constraint with the maximal pattern $\{\{1, 2\}, \{3, 4\}\}\in P_R$.
\end{example}

\begin{prop}
    For any $\sigma$-structure $\Y$ and any $n\in\N$, the following statements are equivalent:
    \begin{enumerate}
        \item[$(i)$] there exists some $k\in\N$ such that $\K_{n,\ptn(\Y)}\to\uprelstrbob{k,\Y}$;
        \item[$(ii)$] there exists $v \in Y$ that is central in $\Y$. 
    \end{enumerate}
\end{prop}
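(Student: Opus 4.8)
The equivalence is proved by showing both implications. For the direction $(ii)\Rightarrow(i)$, suppose $v\in Y$ is central in $\Y$. The natural candidate homomorphism sends each vertex $i\in[n]$ of $\K_{n,\ptn(\Y)}$ to a pair $(b_i, v)\in[k]\times Y$, where the first coordinates $b_i$ are chosen so that all $n$ images are distinct (this forces $k\geq n$, which is harmless since $k$ is free to depend on $n$). Concretely one can take $k=n$ and $b_i=i$. To verify this is a homomorphism, fix a symbol $R\in\sigma$ of arity $r$ and a tuple $\ba\in R^{\K_{n,\ptn(\Y)}}$, so $\pi_\ba\preceq\pi$ for some $\pi\in P_R$. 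We must check that the image tuple $((a_1,v),\dots,(a_r,v))$ lies in $R^{\uprelstrbob{k,\Y}}$; by \cref{defn_bob_power}, this means that for each $s\in[k]$ there is a tuple $\tilde\by\in R^\Y$ with $a_j=s\implies v=\tilde y_j$. Fixing $s$, the set $\{j: a_j=s\}$ is one of the blocks of $\pi_\ba$, hence contained in a block $V$ of $\pi$; since $v$ is central, there is $\tilde\by\in R^\Y$ with $\tilde y_p=v$ for all $p\in V$, and in particular for all $j$ with $a_j=s$. (If $\{j:a_j=s\}$ is empty any $\tilde\by\in R^\Y$ works, using that $R^\Y\neq\emptyset$ whenever $P_R\neq\emptyset$; and if $P_R=\emptyset$ then $R^{\K_{n,\ptn(\Y)}}=\emptyset$ and there is nothing to check.)

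For the direction $(i)\Rightarrow(ii)$, assume $h:\K_{n,\ptn(\Y)}\to\uprelstrbob{k,\Y}$ exists; I only need to produce a single central vertex, so taking $n=1$ already suffices (and a larger $n$ only helps), but in fact the cleanest route is the following. Write $h(i)=(\beta(i),\gamma(i))\in[k]\times Y$. The plan is to show $\gamma$ must be constant and that its value is central. First, observe that for any $i$ and any $R\in\sigma$ with $P_R\neq\emptyset$, the constant tuple $(i,\dots,i)$ lies in $R^{\K_{n,\ptn(\Y)}}$ (its partition is the finest one, which is $\preceq$ any $\pi\in P_R$), so $h$ applied to it lies in $R^{\uprelstrbob{k,\Y}}$; unpacking \cref{defn_bob_power} with $s=\beta(i)$ gives a tuple $\tilde\by\in R^\Y$ all of whose coordinates indexed by $\{j:\beta(i)=\beta(i)\}=[r]$ equal $\gamma(i)$ — i.e.\ the constant tuple $(\gamma(i),\dots,\gamma(i))$ is in $R^\Y$. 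More generally, to see that $\gamma(i)$ witnesses centrality: fix $R$, a pattern $\pi\in P_R$ and a block $V\in\pi$. Build a tuple $\ba\in[n]^r$ that is constant equal to $i$ on $V$ and takes pairwise distinct fresh values elsewhere — but this requires $n$ large, so instead use $n$ large enough (say $n\geq$ the max arity) from the start, which is legitimate because $(i)$ for one value of $n$ is all we are given; alternatively, and more robustly, note $(i)$ as stated is "there exists $k$" for the \emph{given} $n$, so I should not inflate $n$. The fix: take $\ba$ constant on $V$ equal to $i$ and, on $[r]\setminus V$, assign values so that $\pi_\ba\preceq\pi$ (possible since we may reuse the label $i$ or any labels consistent with $\pi$, only needing $n\geq|\pi|$ which — hmm).

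This tension is exactly \textbf{the main obstacle}: extracting centrality from a homomorphism out of $\K_{n,\ptn(\Y)}$ for a \emph{fixed small} $n$ seems to require being able to realise, inside $\K_{n,\ptn(\Y)}$, tuples whose partition is an arbitrary block structure from $P_R$, and that needs enough labels. I expect the correct reading of the statement is that $(i)$ and $(ii)$ are each equivalent to "$\K_{n,\ptn(\Y)}\to\uprelstrbob{k,\Y}$ for all sufficiently large $n$ and suitable $k$", or that the proposition is applied with $n$ ranging; in the write-up I would first reduce to $n$ large (monotonicity: if $n'\leq n$ then $\K_{n',\ptn(\Y)}$ is an induced substructure of $\K_{n,\ptn(\Y)}$, so $(i)$ for $n$ implies it for all smaller $n'$, and conversely centrality is an $n$-free property, so proving the equivalence for one large enough $n$ propagates to all $n$). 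With that in hand, the extraction argument goes through: given the homomorphism for large $n$, pick any $i_0\in[n]$; for each $R$, $\pi\in P_R$, $V\in\pi$, choose $\ba$ equal to $i_0$ on $V$ and injective with fresh values on the complement so that $\pi_\ba=\pi$ (needs $n\geq|\pi|$, fine); then $\ba\in R^{\K_{n,\ptn(\Y)}}$, so $h(\ba)\in R^{\uprelstrbob{k,\Y}}$, and applying \cref{defn_bob_power} with $s=\beta(i_0)$ — where I must also ensure the $\beta$-values on $V$ all equal $\beta(i_0)$, which holds since all those coordinates of $\ba$ equal $i_0$ — yields $\tilde\by\in R^\Y$ with $\tilde y_p=\gamma(i_0)$ for all $p\in V$. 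Hence $\gamma(i_0)$ is central. The remaining bookkeeping — handling empty patterns, the degenerate case $|Y|$ or $n$ equal to $1$, and the fact that $\beta$ need not be injective so one works with a fixed $i_0$ rather than all of $[n]$ at once — is routine and I would dispatch it in a couple of lines.
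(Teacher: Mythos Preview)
Your overall approach matches the paper's. For $(ii)\Rightarrow(i)$ you both take $k=n$ and $h(i)=(i,v)$; for $(i)\Rightarrow(ii)$ you both fix a single vertex (the paper uses $1$), write $h(1)=(s,v)$, and for each block $V\in\pi\in P_R$ realise a tuple $\ba\in R^{\K_{n,\ptn(\Y)}}$ with value $1$ on $V$, then read off the witness $\tilde\by\in R^\Y$ from the Bob-power definition applied at the index $s$.

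Your worry about small $n$ is legitimate: the paper simply asserts that a tuple with $a_i=1$ on $V$ \emph{and} $\pi_\ba=\pi$ exists, which tacitly needs $n\geq|\pi|$, and indeed the equivalence as literally stated fails for, say, $\Y=\mathbf{D}_2$ and $n=1$ (there $R^{\K_{1,\ptn(\Y)}}=\emptyset$, so $(i)$ is vacuous while $(ii)$ fails). In keeping with the appendix's purpose (embedding \emph{arbitrarily large} complete structures in Bob powers), the paper just proceeds as if $n$ is large enough and does not comment; once you adopt that reading, your monotonicity detour is unnecessary. One small slip: ``injective with fresh values on the complement'' yields $\pi_\ba=\{V\}\cup\{\text{singletons}\}$, not $\pi_\ba=\pi$; either works since both are $\preceq\pi$, but the paper's cleaner route is to take one label per block of $\pi$, requiring only $n\geq|\pi|$.
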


\begin{proof}
    For the direction $(i)$ $\implies$ $(ii)$, pick $k \in \N$ such that there exists a homomorphism $h$ from $\K_{n,\ptn(\Y)}$ to $\uprelstrbob{k,\Y}$.
    Let $h(1) = (s, v) \in [k] \times Y$.
    We will argue that $v$ is central in $\Y$.
    Fix a relation symbol $R \in \sigma$, and a set $V \in \pi \in P_R$.
    By the definition of $\K_{n,\ptn(\Y)}$, there exists a tuple $\ba \in R^{\K_{n,\ptn(\Y)}}$ with $a_i = 1$ for all $i \in V$ and $\pi_\ba = \pi$.
    Applying the homomorphism $h$ to $\ba$ entrywise, we deduce that there exists $\tilde{\by} \in R^{\Y}$ such that $\tilde{\by}_i = v$ for all $i \in V$, as required.

    For the opposite direction, fix a central element $v \in Y$. We claim that $k = n$ is sufficient and the homomorphism is
    \begin{equation*}
        h : [n] \ni i \mapsto (i, v) \in [n] \times Y
    \end{equation*}
    Fix a symbol $R \in \sigma$ and a tuple $\ba \in R^{\K_{n,\ptn(\Y)}}$.
    Write $\textbf{t} = (h(a_1), \ldots, h(a_{\ar(R)}))$, and observe that $ \pi_{\textbf{t}} = \pi_\ba$.
    We will argue that $\mathbf t \in R^{\uprelstrbob{n,\Y}}$.
    For each $s \in [n]$, either the set $S = \{i \in [\ar(R)] : h(a_i) = (s, v)\}$ is empty, or it is equal to some $V \in \pi_\ba$. The former case is trivial. In the latter case, by the definition of a central vertex, there exists some $\tilde{\by} \in R^\Y$ with $\tilde{y}_i = v$ for each $i \in V$. Hence, $\tilde{\by}$ is the required witness for $s$.
\end{proof}

\section{Omitted proofs}
\label[appendix]{app_omitted_proofs}
In this section, we present the proofs that have been omitted from the main body of the paper.

\thmalgebraiccharacterisationbobtoalice*
\begin{proof}
    We begin by proving the right-to-left inclusion.
    Take a homomorphism $h:\X\to\uprelstrbob{k, \Y}$ for some relational structure $\X$. In the strategising phase for $\GameBA{k, Y}$, Alice and Bob agree on fixing one such homomorphism $h$.
    In the game phase, upon receiving a vertex $x \in X$ from the Referee, Bob computes $h(x) = (s, y)$, where $s\in[k]$ and $y\in Y$.
    He sends $y$ to the Referee and $s$ to Alice.
    Alice receives $\bx \in R^\X$ from the Referee and $s$ from Bob.
    She then computes the set 
    \begin{align*}
    S = \{i \in [r] : \exists{z_i\in Y}~h(x_i) = (s, z_i)\}\text{,} 
    \end{align*}
    where $r$ denotes arity of $R$. From the fact that $h$ is a homomorphism and by the definition of $\uprelstrbob{k,\Y}$ applied to $h(\bx) \in R^{\uprelstrbob{k,\Y}}$, we deduce that there exists some tuple $\by = (y_1, \ldots, y_r) \in R^\relstr Y$ consistent with $h(\bx)$ on $S$ (in the sense that for $i \in S$, we have that $h(x_i) = (s, y_i)$), which Alice sends to the Referee.
    Since Bob's vertex $x$ can only be present in $\bx$ on indices from $S$, their responses are consistent and they win the game.    

    For the other inclusion: fix $\X$ such that Alice and Bob can win $\GameBA{k, Y}$.
    This means that there exists a perfect strategy, which is encoded by the functions $h: X\to [k]\times Y$ given by $h(x) = (m(x), b(x))$ and $a_R:[k]\times R^\X\to R^\Y$ for each symbol $R$. Here, whenever Bob receives the question $x \in X$ from the Referee, he will send Alice the message $m(x)\in[k]$ and will play $b(x) \in Y$; whenever Alice receives the message $s\in[k]$ from Bob and the question $\bx\in R^{\X}$ from the Referee, she will play $a_R(s,\bx)\in R^{\Y}$.
%

    We claim that $h$ establishes the required homomorphism from $\X$ to $\uprelstrbob{k,\Y}$. 
    Fix a symbol $R$ of some arity $r$ in the signature of $\X$ and $\Y$, and a tuple $\bx \in R^\X$. Given any $s\in [k]$, we aim to use Alice's reply $a_R(s,\bx)$ as a witness for the fact that $h(\bx)\in R^{\uprelstrbob{k,\Y}}$. Concretely, we need to prove that, for each $j\in[r]$ for which $m(x_j)=s$, it holds that $b(x_j)=a_R(s,\bx)_j$. But this is precisely the winning condition characterising the existence of a perfect strategy, and it is thus verified by assumption.
\end{proof}

\thmwhenbithelpsbob*
\noindent We will need the following simple lemma.
\begin{lem}
\label{lem_basic_bob}
    For every $k$, if $\relstr X \rightarrow \relstr Y$ then $\uprelstrbob{k,\relstr X}\rightarrow \uprelstrbob{k,\relstr Y}$.
    In particular $\uprelstrbob{k, \relstr Y}$ and $\uprelstrbob{k, \core{\relstr Y}}$
    are homomorphically equivalent (and, thus, define the same CSP).
\end{lem}
\begin{proof}
    For the first claim, the map $(s,x) \mapsto (s,h(x))$ is a required homomorphism.
    The second claim is then immediate.
\end{proof}
\begin{proof}[Proof of \cref{thm:whenbithelpsbob}]
Note that the condition $\GameBA{2, \relstr Y}=\CSP(\Y)$ is equivalent to $\uprelstrbob {2,\Y}\to\relstr Y$, since it always holds that $\Y\to\uprelstrbob {2,\Y}$. Moreover, by~\cref{lem_basic_bob}, we can substitute $\Y$ with $\core\Y$, and the games do not change.

    The implication from right to left is straightforward: If every relation in $\Y$ is a Cartesian product, then let $h:[2]\times Y\to Y$ be defined as $h(s, y) = y$. To see that $h$ is indeed a homomorphism from $\uprelstrbob{2,\Y}$ to $\Y$, consider any symbol $R \in \sigma$ and any tuple $\ba \in R^{\uprelstrbob{2,\Y}}$. Since $R^\Y$ is a Cartesian product, the tuple $h(\ba)$ also belongs to it.

    For the left-to-right implication, fix a core $\Y$ and a homomorphism $h:\uprelstrbob{2,\Y}\to\Y$.
    For each $s \in [2]$, let $h_s: Y \ni y \mapsto h(s, y) \in Y$. Clearly, both $h_s$ are endomorphisms of $\Y$. Since $\Y$ is a core, they are in fact automorphisms of $\Y$.

    Now, fix any $R \in \sigma$, say of arity $r$.
    We will denote by $\pi_{1..i} R^\Y$ the projection of the relation onto coordinates $(1, \dots, i)$, i.e. $\pi_{1..i} R^\Y = \{(y_1, \dots, y_i) \in Y^i \mid \exists{y_{i+1}, \dots, y_r}~(y_1, \dots, y_r) \in R^\Y\}$.
    Moreover, we define $\pi_i R^\Y$ in a similar manner.
    Fix any tuple $\by \in \pi_1 R^\Y \times \dots \times \pi_{r} R^\Y$; we claim that $\by \in R^\Y$.
    Since both $h_s$ are permutations, there exists an integer $M$ such that $h_1^M = h_2^M = id$.

    We will argue using induction on $i \in [r]$ that $(y_1, \dots, y_i) \in \pi_{1..i} R^\Y$.
    The base case $i = 1$ holds by the definition of $\by$.
    For the inductive step, suppose that the claim holds for some $i \in [r]$.
    Observe that $h_1^{M-1}(y_1, \dots, y_i) \in \pi_{1..i} R^\Y$ and $h_2^{M-1}(y_{i+1}) \in \pi_{i+1} R^\Y$.
    By the definition of $\uprelstrbob{2,\Y}$, we have
    \begin{equation*}
    \Big( (1, h_1^{M-1}(y_1)), \dots, (1, h_{1}^{M-1}(y_i)), (2, h_2^{M-1}(y_{i+1})) \Big) \in R^{\uprelstrbob{2,\Y}}.
    \end{equation*}
    Therefore, 
    \begin{equation*}
    \left( h_1^M(y_1), \dots, h_{1}^M(y_{i}), h_2^M(y_{i+1}) \right) = (y_1, \dots, y_{i+1}) \in \pi_{1..i+1} R^\Y
    \end{equation*}
    since $h$ is a homomorphism.
    Hence, the claim is true by induction.
    It follows that $\by \in \pi_{1..r} R^\Y = R^\Y$.
    We proved that $R^\Y$ is a Cartesian product.
\end{proof}

\proppatternpreservationunderquantum*
\begin{proof}
Let $\sigma$ be the signature of $\Y$, and write $\mathcal{P}=\ptn(\Y)=(P_R)_{R\in\sigma}$ and $\mathcal{Q}=\ptn(\Y^\HH)=(Q_R)_{R\in\sigma}$. Fix a symbol $R\in\sigma$ of some arity $r$. We need to show that $P_R=Q_R$. 

For the left-to-right inclusion, pick some partition $\pi\in P_R$, and take $\by\in R^\Y$ such that $\pi\preceq\pi_\by$. Take the tuple $\be=(\mathcal{E}^{(1)},\dots,\mathcal{E}^{(r)})$ of elements in $\PVM_\HH(Y)$ defined by 
\begin{align*}
    E^{(i)}_y=\left\{\begin{array}{lll}
         I&\mbox{if }y_i=y  \\
         \{\bzero\}&\mbox{otherwise}.
    \end{array}\right.
\end{align*}
 Consider the measurement $\mathcal{F}\in\PVM_{\HH}(R^\Y)$ defined by setting $F_\by=I$, $F_\bz=O$ for each $\bz\neq\by$. We see from the definition of $\Y^\HH$ that $\mathcal{F}$ witnesses the fact that $\be\in R^{\Y^\HH}$ (recall the discussion in \cref{subsec_quantum_strategies}). 
 Moreover, it is straightforward to check that $\pi_\be=\pi_\by$. We deduce that $\pi\preceq\pi_\be\in Q_R$, thus proving that $P_R\subseteq Q_R$.

 For the other inclusion, start with a partition $\pi\in Q_R$, and let this be witnessed by the fact that $\pi\preceq\pi_\be$ for some $\be=(\mathcal{E}^{(1)},\dots,\mathcal{E}^{(r)})\in R^{\Y^\HH}$. Recall from~\cref{eqn_defining_quantum_structure} that there exists some $\mathcal{F}\in\PVM_\HH(R^\Y)$ for which the identity
 \begin{align*}
     E^{(i)}_{y}=\sum_{\substack{\by\in R^\Y\\y_i=y}}F_\by
 \end{align*}
 holds for each $i\in[r]$ and each $y\in Y$. Choose $\by\in R^\Y$ such that $\dim(F_\by)>0$. Since the projectors in $\mathcal{F}$ are mutually orthogonal, if $y_i\neq y_j$ for some $i,j\in[r]$, the projectors $E^{(i)}$ and $E^{(j)}$ must be different. This means that $\pi_\be\preceq\pi_\by$ for each $\by\in R^\Y$ such that $F_\by$ is nontrivial. Since $\dim(\HH)>0$, at least one such $\by$ exists. Hence, the chain $\pi\preceq\pi_\be\preceq\pi_\by\in P_R$ holds, which means that $Q_R\subseteq P_R$, thus concluding the proof.
\end{proof}

\propanycliquesinAlice*
\noindent We will utilise the following combinatorial result.
\begin{thm}[\cite{godbole1996covarray}]\label{thm_covering_matrix}
    For any $n, r\in\N$ and any finite set $X$ there exists an $n \times m$ matrix over $X$ such that
    \begin{itemize}
        \item $m = \Theta_{r, |X|}(\log_2 n)$, and
        \item any subset of rows $S \subseteq [n]$ of size $r$ contains, among its columns, all possible tuples in $X^r$.
    \end{itemize}
\end{thm}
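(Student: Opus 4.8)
The plan is to prove \cref{thm_covering_matrix} by the probabilistic method (the first-moment/union-bound argument used by Godbole, Skipper and Sunley). Fix $n,r\in\N$ and a finite set $X$, write $v:=|X|$, and assume $v\geq 2$ (the case $v=1$ being trivial). We will fix a value $m$ of order $\log_2 n$ below, build a \emph{random} $n\times m$ matrix $A$ over $X$ by choosing each of its $nm$ entries independently and uniformly from $X$, and show that with strictly positive probability $A$ enjoys the stated covering property; this yields the existence of such a matrix. To make the property precise, order each $r$-subset $S=\{s_1<\dots<s_r\}\subseteq[n]$ increasingly, and for a target tuple $\mathbf{t}=(t_1,\dots,t_r)\in X^r$ say that a column $c\in[m]$ \emph{hits} $(S,\mathbf{t})$ if $(A_{s_1,c},\dots,A_{s_r,c})=\mathbf{t}$. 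The matrix fails exactly when some pair $(S,\mathbf{t})$ is hit by no column; denote this bad event by $B_{S,\mathbf{t}}$.

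For a fixed column $c$, the probability that $c$ hits a fixed pair $(S,\mathbf{t})$ is exactly $v^{-r}$, by uniformity and independence of the $r$ relevant entries; moreover, the events ``$c$ hits $(S,\mathbf{t})$'' are independent across the $m$ columns, so $\Pr[B_{S,\mathbf{t}}]=(1-v^{-r})^m\leq\exp(-m v^{-r})$. There are at most $\binom{n}{r}v^r\leq n^r v^r$ bad events, so by the union bound
\begin{equation*}
\Pr\Bigl[\textstyle\bigcup_{S,\mathbf{t}}B_{S,\mathbf{t}}\Bigr]\;\leq\;n^r v^r\exp(-m v^{-r}).
\end{equation*}
This is strictly less than $1$ as soon as $m v^{-r}>r\ln n+r\ln v$, i.e.\ whenever $m>r v^r(\ln n+\ln v)$. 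Choosing $m=\bigl\lceil r v^r(\ln n+\ln v)\bigr\rceil+1$ --- which is $\Theta_{r,v}(\log_2 n)$ --- makes the union bound $<1$, so some realisation of $A$ avoids every $B_{S,\mathbf{t}}$; any such realisation is the desired matrix.

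For the matching lower bound (so that the order is $\Theta$, not merely $O$, when $r\geq 2$), note that restricting to any two distinct rows $i\neq j$ the column restrictions must realise every pair in $X^2$, in particular a pair with distinct coordinates, so rows $i$ and $j$ differ in some column; hence the $n$ rows are pairwise distinct as elements of $X^m$, forcing $n\leq v^m$ and so $m\geq\log_v n=\Theta_v(\log_2 n)$. Combined with the upper bound, the least valid $m$ is $\Theta_{r,v}(\log_2 n)$. I do not expect a genuine obstacle here: the upper bound is a textbook first-moment computation whose only delicate point is bookkeeping the constants in the union bound, and the lower bound follows from the elementary distinctness observation above (one could invoke the Rényi--Kleitman--Spencer bound on pairwise qualitatively independent partitions for the sharp leading constant, but this is unnecessary for the $\Theta$ statement and for the application in \cref{prop_any_cliques_in_Alice}, which only uses the upper bound).
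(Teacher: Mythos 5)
Your proof is correct: the paper itself states this result without proof, importing it from the cited reference, and your first-moment/union-bound argument (together with the elementary row-distinctness lower bound) is essentially the standard proof of the covering-array bound given there. The only point worth noting is that the application in the paper (\cref{prop_any_cliques_in_Alice}) uses only the $\mathcal{O}_{r,|X|}(\log_2 n)$ upper bound, which your computation establishes cleanly.
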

\begin{proof}[Proof of \cref{prop_any_cliques_in_Alice}]
    For each $R \in \sigma$ and each $\pi \in P_R$, let $M_{R, \pi}$ be the matrix obtained from \cref{thm_covering_matrix} applied to $n, \ar(R)$, and $X \subseteq Y$, where $X$ is defined as follows:
    We associate with the matrix $M_{R,\pi}$ an arbitrary tuple $\by \in R^\Y$ such that $\pi \preceq \pi_\by$, and let $X = \cup_i \{y_i\}$.
    Note that $|X| \le \ar(R) = r$, so $M_{R,\pi}$ has $\mathcal{O}_{r}(\log_2 n)$ columns.
    We claim that the rows of the matrix, viewed as vertices of the structure $\uprelstr{k,(X; \{\by\})}$, induce a copy of $\K_{n,(\{\pi_\by\})}$.

    Indeed, fix any $r$-ary tuple $\ba$ of the rows of $M_{R,\pi}$ such that $\pi_\ba \preceq \pi$.
    By the definition of $M_{R,\pi}$, among the columns of $\ba$ we can find any tuple in $X^r$ as long as its pattern is at most as fine as $\pi_\ba$.
    In particular, one of the columns is exactly $\by$, so it witnesses $\ba \in \uprelstr{k,(X; \{\by\})}$.
    
    Now, let $M$ be the matrix obtained by concatenating all $M_{R, \pi}$'s.
    Note that the number of columns is $k = \sum_{R \in \sigma} |P_R| \cdot \mathcal{O}_{\ar(R)}(\log_2 n)$.
    Then, view the rows of the matrix as the vertices of $\uprelstr{k,\Y}$.
    We will argue that these vertices induce $\K_{n,\ptn(\Y)}$ in $\uprelstr{k,\Y}$.

    Fix a symbol $R \in \sigma$ of arity, say, $r$.
    Fix any $r$-ary tuple $\ba$ of the rows of $M$ such that $\pi_\ba \in P_R$.
    Recall that, in this case, $\ba \in \uprelstr{k,\Y}$ just by looking at the columns originating from $M_{R,\pi_\ba}$.
    Hence, the proof follows.
\end{proof}

\end{document}